\newcommand{\Var}{\mathrm{Var}}
\newcommand{\Cov}{\mathrm{Cov}}
\newcommand{\bx}{\bm{x}}
\newcommand{\by}{\bm{y}}
\newcommand{\bOne}{\bm{1}}
\newcommand{\bd}{\bm{\delta}}
\newcommand{\Rm}{R(m^N)}
\newcommand{\pw}{\tilde{p}_{\text{pred}}}
\newcommand{\xic}{\xi^{\text{conj}}}
\newcommand{\bxN}{\bm{x}^{N}}
\newcommand{\byN}{\bm{y}^{N}}
\newcommand{\xN}{x^{N}}
\newcommand{\yN}{y^{N}}
\newcommand{\bvN}{\bm{v}^{N}}
\newcommand{\vN}{v^{N}}
\newcommand{\dN}{\delta^{\mN}}
\newcommand{\bdxN}{\bm{\delta}_x^{\mN}}
\newcommand{\bdyN}{\bm{\delta}_y^{\mN}}
\newcommand{\Wy}{W_y(m^{N})}
\newcommand{\Wx}{W_x(m^{N})}
\newcommand{\W}{W(m^{N})}
\newcommand{\mN}{m^{N}}
\newcommand{\nxN}{n_x^{N}}
\newcommand{\nyN}{n_y^{N}}
\newcommand{\gc}{g^{\text{con}}}
\newcommand{\msa}{m_\text{stop}^{\text{asym}}}
\newtheorem{theorem}{Theorem}
\newtheorem{definition}{Definition}
\newtheorem{corollary}{Corollary}
\newtheorem{lemma}{Lemma}
\newtheorem{proposition}{Proposition}
\title{Fast Approximation of Small p-values in Permutation Tests by Partitioning the Permutations}
\author{Brian Segal,$^*$ Thomas Braun, Michael Elliott, and Hui Jiang\\
Department of Biostatistics, University of Michigan\\
\href{mailto:bdsegal@umich.edu}{bdsegal@umich.edu}
}
\begin{document}

\maketitle

\begin{quote}
Researchers in genetics and other life sciences commonly use permutation tests to evaluate differences between groups. Permutation tests have desirable properties, including exactness if data are exchangeable, and are applicable even when the distribution of the test statistic is analytically intractable. However, permutation tests can be computationally intensive. We propose both an asymptotic approximation and a resampling algorithm for quickly estimating small permutation p-values (e.g. $<10^{-6}$) for the difference and ratio of means in two-sample tests. Our methods are based on the distribution of test statistics within and across partitions of the permutations, which we define. In this article, we present our methods and demonstrate their use through simulations and an application to cancer genomic data. Through simulations, we find that our resampling algorithm is more computationally efficient than another leading alternative, particularly for extremely small p-values (e.g. $<10^{-30}$). Through application to cancer genomic data, we find that our methods can successfully identify up- and down-regulated genes. While we focus on the difference and ratio of means, we speculate that our approaches may work in other settings.\\

Keywords: Computational efficiency; Genomics; Multiple hypothesis tests; Resampling methods; Two-sample tests
\end{quote}


\section{Introduction and Motivation \label{sec_intro}}

Many researchers in the life sciences use permutation tests, for example, to test for differential gene expression \citep{doerge1996, morley2004, stranger2005, stranger2007, raj2014}, and to analyze brain images \citep{nichols2002, bartra2013, simpson2013}. These tests are useful when the sample size is too small for large sample theory to apply, or when the distribution of the test statistic is analytically intractable. Permutation tests are also exact, meaning that they control the type I error rate exactly for finite sample size \citep{lehmann2006testing}. However, permutation tests can be computationally intensive, especially when estimating small p-values for many tests. In this paper, we present computationally efficient methods for approximating small permutation p-values (e.g. $<10^{-6}$) for the difference and ratio of means in two-sample tests, though we speculate that our methods will also work for other smooth function of the means.

We denote the two groups of sample data as $\bm{x}=(x_1,\ldots,x_{n_x})'$ and $\bm{y}=(y_1,\ldots,y_{n_y})'$, with respective sample sizes $n_x$ and $n_y$. We denote the full data as $\bm{z}=(\bm{x}',\bm{y}')'$, with total sample size $N=n_x + n_y$. Writing $\bm{z}=(z_1,\ldots, z_N)'$, we have that $z_i=x_i, i=1,\ldots,n_x$, and $z_{n_x +j} = y_j, j=1,\ldots,n_y$. In our setting, $z_i$ are scalar values for all $i=1,\ldots,N$. We use $\pi$ to denote a permutation of the indices of $\bm{z}$, i.e. $\pi : \{1,\ldots,N \} \rightarrow \{1,\ldots,N \}$ is a bijection, and we denote the permuted dataset corresponding to $\pi$ as $\bm{z}^*=(z^*_1,\ldots,z^*_N )'$, where $z^*_{\pi(i)}=z_i, i=1,\ldots,N$. We use the term \textit{correspondence} throughout this paper, so for clarity, we define our use of the term in Definition \ref{defCorrespondence}.
\begin{definition}[Correspondence]
Let $\bm{z}=(z_1,\ldots,z_N)'$ be the $N$-dimensional vector of observed data, and let $\pi: \{1,\ldots,N\} \rightarrow \{1,\ldots,N\}$ be a bijection (permutation) of the indices of $\bm{z}$. We say that the $N$-dimensional vector $\bm{z}^*=(z^*_1, \ldots, z^*_N)'$ \textit{corresponds} to permutation $\pi$ if $z^*_{\pi(i)}=z_i$ for all $i=1,\ldots,N$.
\label{defCorrespondence}
\end{definition}

It will also be useful to write the permuted dataset as $\bm{z}^*=({\bm{x}^*}', {\bm{y}^*}')'$, where $\bm{x}^*=(z^*_1,\ldots,z^*_{n_x})'$ and $\bm{y}^*=(z^*_{n_x+1},\ldots,z^*_N)'$ are the permuted group samples.

Let $T$ be a test statistic, such that larger values are more extreme, and let $t=T(\bx, \by)$ be the observed test statistic. Similar to \citet[][p. 636]{lehmann2006testing}, we denote the permutation p-value as $\hat{p} = \Pr(T \ge t | \bm{z}) = |\Psi|^{-1} \sum_{\pi \in \Psi} I [T(\bx^*, \by^*) \ge t]$, where $\Psi$ is the set of all permutations of the indices of $\bm{z}$ (also the symmetric group of order $N!$), $|\Psi| = N!$ is the number of elements in $\Psi$, $I$ is an indicator function, and for each $\pi$, $({\bx^*}', {\by^*}')'$ is the corresponding permuted dataset. The randomization hypothesis \citep[][Definition 15.2.1]{lehmann2006testing} asserts that under the null hypothesis, the distribution of $T$ is invariant under permutations $\pi \in \Psi$. This allows, for example, for the null hypothesis $H_0: z_i \overset{\text{iid}}{\sim} P, i = 1,\ldots, N$, or more generally, for exchangeability, $H_0: P(Z_1 =  z_1, \ldots Z_N = z_n) = P(Z_1 = z^*_1, \ldots, Z_N = z^*_N)$ for all permuted datasets $\bm{z}^*$.

The set $\Psi$ is typically too large to evaluate fully, so Monte Carlo methods are usually used to approximate $\hat{p}$. When resampling with replacement, also known as simple Monte Carlo resampling, the Monte Carlo estimate of $\hat{p}$ is $\tilde{p}=(B+1)^{-1} \left( \sum_{b=1}^B I \left[T_b \ge t \right]+1 \right)$, where $B$ is the number of resamples, and $T_b=T(\bx^*, \by^*)$ for $({\bx^*}', {\by^*}')'$ corresponding to the $b^{th}$ randomly sampled permutation $\pi_b$. We refer to the above estimate as the adjusted $\tilde{p}$, because it adjusts the estimate to ensure it stays within its nominal level \citep{lehmann2006testing, phipson2010permutation}. However, for simplicity and to be consistent with other computationally efficient methods, particularly that of \citet{yu2011}, we use the unadjusted $\tilde{p}$, in which we remove the `+1' from the numerator and denominator.

While there may be many reasons for obtaining accurate small p-values, perhaps they are most often obtained in multiple testing settings, which are common in genetics. For example, in the analysis we present in Section \ref{sec_app}, we analyze 15,386 genes for differential expression. With a Bonferroni correction and a type I error rate of $\alpha=0.05$, to control the family-wise error rate (FWER), we would need to estimate $p \text{-values}<0.05/15,386 \approx 3.25 \times 10^{-6}$. While one might want to use a different correction to control the FWER, false discovery rate (FDR), or other criteria, we would still need to calculate small p-values before implementing typical step-up or step-down procedures (for example, \citet{holm1979simple} to control FWER, or \citet{benjamini1995controlling} to control FDR). These p-values, in combination with content area expertise and other statistical quantities, such as effect size, can be useful for prioritizing genes for further laboratory and statistical analysis. 

As noted by \citet{kimmel2006fast} and \citet{yu2011}, with simple Monte Carlo resampling, to estimate p-values on the order of $\hat{p}=10^{-6}$ with a precision of $\sigma_{\hat{p}}=\hat{p}/10$, we need on the order of $B=10^8$ iterations when using simple Monte Carlo resampling. For example, to estimate 5,000 permutation p-values that are each on the order of $10^{-6}$, we would need a total of $5,000 \times 10^8 = 5 \times 10^{11}$ iterations.

Several researchers have developed methods for reducing the computational burden of permutation tests, including \citet{robinson1982, mehta1983network, booth1990, kimmel2006fast, conneely2007, li2008, han2009rapid, knijnenburg2009fewer, pahl2010permory, Zhang2011, jiang2012statistical}, and \citet{zhou2015hypothesis}. For comparisons with our method, we focus on the stochastic approximation Monte Carlo (SAMC) algorithm developed by \citet{liang2007} and tailored to p-value estimation by \citet{yu2011}. Of the available methods, we found that SAMC was the most appropriate comparison, because: 1) we could directly apply it to the test static in our motivating application (see Section \ref{sec_app}), 2) it is intended for very small p-values, and 3) it does not require difficult derivations, so is more likely to be used in practice.

In this article, we propose alternative methods for quickly approximating small permutation p-values for the difference and ratio of the means in two-sample tests. Our approaches partition the permutations such that $\tilde{p}$ has a predictable trend across the partitions. Taking advantage of this trend, we develop both a closed form asymptotic approximation to the permutation p-value, as well as a computationally efficient resampling algorithm.

We find through simulations that our resampling algorithm is more computationally efficient than the SAMC algorithm, which in turn is 100 to 500,000 times more computationally efficient than simple Monte Carlo resampling \citep{yu2011}. However, SAMC is a more general algorithm, and can be used for a greater variety of statistics. The increase in efficiency is most notable for our algorithm when estimating extremely small p-values (e.g. $<10^{-30}$). Our asymptotic approximation tends to be less accurate than our resampling algorithm, but does not require resampling.
 
Before presenting our methods, we briefly explain the underlying properties that make them possible. The two basic components underlying our methods are 1) the partitions, which we define, and the distribution of permutations across these partitions, and 2) the limiting behavior of test statistics within each partition, and the trend in p-values across the partitions. We address the first component in Section \ref{sec_partPermSpac}, and the second in Section \ref{sec_statDist}.

In Section \ref{sec_alg}, we introduce methods for estimating permutation p-values that take advantage of the properties discussed in Sections \ref{sec_partPermSpac} and \ref{sec_statDist}. In Section \ref{sec_simulations}, we investigate the behavior of these methods through simulations and compare against the SAMC algorithm (additional simulations and comparisons against other methods are in the Appendices). Then in Section \ref{sec_app}, we use our proposed methods to analyze cancer genomic data. In Section \ref{discussion}, we end with a discussion of limitations and possible extensions. As noted under Supplementary material, we have implemented our methods in the \textsf{R} package \verb|fastPerm|.

\section{Partitioning the permutations \label{sec_partPermSpac}}

\subsection{Defining the partitions \label{sec_dist}}

Let the smaller of the two sample sizes be $n_{\min} = \min(n_x, n_y)$. We define the distance between permutation $\pi$ and the observed ordering of the indices $(1,2,3,\ldots,N)$ as the number of observations that are exchanged between $\bx$ and $\by$ under the action of $\pi$. To be precise, let $\omega(\pi)$ be the set of indices that $\pi$ places in one of the first $n_x$ positions, i.e. $\omega(\pi) = \{i \in \{1,\ldots,N\} : \pi(i) \le n_x \}$. Then we define the distance, denoted as $d(\pi)$, between permutation $\pi$ and the observed ordering, as
\begin{equation}
d\left(\pi \right)= n_x - |\omega(\pi) \cap \{1,2,\ldots,n_x\}| \label{d}.
\end{equation}

We define partition $m$, denoted as $\Pi(m)$, as the set of all permutations a distance of $m$ away from the observed ordering, i.e. $\Pi(m)=\left\{\pi : d \left( \pi \right) = m \right\}$, $m = 0,1,\ldots,n_{\min}.$ As described below, our proposed methods focus on the permutation distributions of test statistics when resampling is restricted to permutations from a single partition.

To see why this definition of distance is useful, and to foreshadow our method, suppose that $\mu_x \ne \mu_y$, and note that as observations are exchanged between $\bm{x}$ and $\bm{y}$, the empirical distributions of the permuted samples $\bm{x}^*$ and $\bm{y}^*$ tend to become more similar. Consequently, test statistics that measure changes in the mean tend to become less extreme. For example, suppose that $n=n_x=n_y$ with $n$ even, and let $\bm{z}^*=({\bm{x}^*}', {\bm{y}^*}')'$ be a permuted dataset corresponding to a permutation $\pi \in \Pi(n/2)$. Then half of the observations in $\bm{x}^*$ are from $\bm{x}$ and half are from $\bm{y}$, and the same is true for $\bm{y}^*$. Consequently, we would expect $\bar{x}^* \approx \bar{y}^*$, where $\bar{x}^*$ and $\bar{y}^*$ are the means of the permuted samples.

To make this explicit, and again assuming that $n = n_x = n_y$, let $\bm{\delta}^{\pi}_x=(\delta^{\pi}_{x,1},\ldots, \delta^{\pi}_{x, n})'$ and $\bm{\delta}^{\pi}_y=(\delta^{\pi}_{y,1},\ldots, \delta^{\pi}_{y,n})'$ be $n \times 1$ indicator vectors designating which observations are exchanged between $\bm{x}$ and $\bm{y}$ under the action of permutation $\pi$:
\begin{align*}
\delta^{\pi}_{x,i}=
\begin{cases}
1 \text{ if } \pi(i) >n \\
0 \text{ if } \pi(i) \le n
\end{cases}, i=1,\ldots,n, &&
\delta^{\pi}_{y,j}=
\begin{cases}
1 \text{ if } \pi(n + j) \le n \\
0 \text{ if } \pi(n + j) > n
\end{cases}, j=1, \ldots, n.
\end{align*}
Under the action of permutation $\pi$, $\bar{x}^*=n^{-1}\left[(\bm{1} - \bm{\delta}^{\pi}_x)'\bm{x} + \left(\bm{\delta}^{\pi}_y \right)'\bm{y} \right]$, where $\bm{1}$ is an $n \times 1$ vector of ones. Assuming uniform distribution of the permutations $\pi$, $\mathbb{E} \left[ \bm{\delta}^{\pi}_{x} | \pi \in \Pi(m) \right]=(m/n)\bm{1}$, an $n \times 1$ vector with all elements equal to $m/n$. Consequently, $\mathbb{E}[\bar{x}^* | \pi \in \Pi(m), \bm{x}, \bm{y}] = \bar{x} + (m/n)(\bar{y} - \bar{x})$ and $\mathbb{E}[\bar{y}^* | \pi \in \Pi(m), \bm{x}, \bm{y}] = \bar{y} + (m/n)(\bar{x} - \bar{y})$.

Then, for example, with the test statistic $T=\bar{x} - \bar{y}$, we have that $\mathbb{E}[T(\bx^*, \by^*)|\pi \in \Pi(m), \bx, \by] = (\bar{x} - \bar{y})(1-2m/n)$, where $\bx^*, \by^*$ are the permuted samples corresponding to a permutation $\pi \in \Pi(m)$, $m=0,\ldots,n$. This shows that the expected value of $T$ is zero when, for both $\bm{x}^*$ and $\bm{y}^*$, half of the observations are from $\bm{x}$ and half are from $\bm{y}$, i.e. in the $m=n/2$ partition. Similarly, the magnitude of $T$ is $|\bar{x}-\bar{y}|$ when either none or all of the observations are exchanged between $\bm{x}$ and $\bm{y}$ (partitions $m=0$ and $m=n$, respectively). This example demonstrates that test statistics tend to be less extreme when the permuted group samples, $\bm{x}^*$ and $\bm{y}^*$, each contain a mixture of elements from the observed group samples, $\bm{x}$ and $\bm{y}$. Similar results hold for unbalanced sample sizes.

\subsection{Distribution of the partitions \label{sec_subOrbDist}}

Uniform sampling of the permutations $\pi$ leads to a non-uniform distribution of the partitions $\Pi(m)$. The probability of drawing a permutation from partition $m$ under uniform sampling, which we denote as $f(m), m =1,\ldots, n_{\min}$, is given by
\begin{align*}
f \left(m \right)
  &\propto \left| \Pi (m) \right| && (\pi \sim \text{Uniform}) \\
  &=\binom{n_x}{m} \binom{n_y}{m},
\end{align*}
where the last line follows directly from the definition of $\Pi(m)$. The normalizing constant is $\sum_{j=0}^{n_{\min}} \binom{n_x}{j} \binom{n_y}{j}=\binom{N}{n_{\min}}$, so
\begin{equation}
f \left(m \right)=\binom{N}{n_{\min}}^{-1} \binom{n_x}{m} \binom{n_y}{m}. \label{pmf}
\end{equation}
As described in Section \ref{sec_alg}, in our proposed methods, we use $f$ to weight the partition-specific p-values in order to obtain an overall p-value.

We note that in practice, directly using (\ref{pmf}) to calculate $f(m)$ is not possible for large $n_x$ and $n_y$, because the binomial coefficients become too large to represent on most computers. However, by noting the relationship between the gamma function and factorials, we can compute (\ref{pmf}) for large sample sizes with the equivalent form:
\begin{align*}
f \left(m \right) &=  \exp \{ \log \Gamma(n_x + 1) - \log \Gamma(n_x - m +1) \\
  & + \log \Gamma(n_y + 1) - \log \Gamma(n_y - m +1) -2 \log \Gamma(m+1)\\
  & - \log \Gamma(N +1) + \log \Gamma(N - n_{\max} +1) + \log \Gamma(n_{\max} +1) \},
\end{align*}
where $\log \Gamma$ is the log gamma function.

\section{Trend in p-values across the partitions \label{sec_statDist}}

In this section, we describe the trend in p-values across the partitions, both with asymptotic and simulated results. The results described in this section are given in greater detail in Appendix \ref{proofs}, and are the basis for our proposed methods.

Let $T$ be a two-sided test statistic that is a function of the means, such that larger values are more extreme. In particular, we study $T=|\bar{x}-\bar{y}|$ and $T=\max(\bar{x}/\bar{y},\bar{y}/\bar{x})$. $T$ is a random variable, and we could calculate its value for all permutations of the data to get its permutation distribution. To be explicit, we define the random variable $T(m)$ such that $\Pr \left( T(m) > t | \bm{z} \right) = \Pr \left( T(\bx^*, \by^*) >t | \bm{z}, \pi \in \Pi(m) \right)$, i.e., $T(m)=T(\bx^*, \by^*)$ restricted to permutations in partition $m$. To be concrete, we could, in principle, compute the permutation p-value, $\Pr(T(m) > t | \bm{z})$, as $\hat{p}(m)=|\Pi(m)|^{-1} \sum_{\pi \in \Pi(m)} I [ T(\bx^*, \by^*) \ge t ]$, where for each $\pi \in \Pi(m)$, $({\bx^*}', {\by^*}')'$ is the corresponding permuted dataset.

Regarding notation, if there are two vector-valued arguments to $T$, e.g. $T(\bx, \by)$ then $T$ is the test statistic computed with data $\bx, \by$. If the argument to $T$ is a single scalar, e.g. $T(m)$, then $T$ is a test statistic computed with some permuted dataset $\bm{z}^*$, where $\bm{z}^*$ corresponds to a permutation $\pi \in \Pi(m)$. This notation facilitates further analysis in Appendix \ref{proofs}.

While we are primarily interested in two-sided statistics $T$ in this paper, it helps to first note results for their one-sided counterparts, which we denote by $R$. In particular, $R=\bar{x} - \bar{y}$ and $R=\bar{x}/\bar{y}$. Similar to before, let $R(m)=R(\bx^*, \by^*)$ restricted to permutations in partition $m$. As shown in Corollary \ref{Tnormal} of Appendix \ref{proofs}, under certain regularity conditions and sufficiently large sample sizes, $R(m) \sim N(\nu(m), \sigma^2(m))$, where $\nu(m)$ and $\sigma^2(m)$ are functions of the partition $m$, as well as the sample means and variances of $\bm{x}$ and $\bm{y}$. The regularity conditions are standard assumptions for finite sample central limit theorems and the delta method, requiring that the tails of the distributions of the data are not too large, and that the derivative of $R$ exists at the means.

As described in Corollary \ref{cor1} of Appendix \ref{proofs}, a direct consequence of the limiting normality of $R(m)$ is that for $n_x$ and $n_y$ sufficiently large,
\begin{equation}
\Pr \left( T(m) \ge t | \bm{z} \right) \approx 2-\Phi \left[\xi \left(\min \left\{ m, 2 m_{\max} - m \right\} \right) \right] - \Phi \left[\xic \left(\min \left\{ m, 2 m_{\max} - m \right\} \right) \right],
\label{approxPval}
\end{equation}
where $\Phi$ is the standard normal cumulative density function (CDF), $m_{\max} = \arg \max_m f (m)$, and $\xi$ and $\xic$ are functions of the partition $m$ and data $\bm{z}$, whose form depends on the statistic $T$. The functions $\xi$ and $\xic$ are identical in form, but reverse the role of the means of the permuted samples, $\bar{x}^*$ and $\bar{y}^*$. This accounts for the two-sided form of $T$. Equation \ref{approxPval} is the basis for our asymptotic approximation, which is described in Section \ref{asymSec}.

The proof of (\ref{approxPval}) involves the fact that $\Pr \left( T(m) \ge t | \bm{z} \right)$, as a function of $m$, is approximately symmetric about $m_{\max}$. This symmetry is exact when $n_x=n_y$, and less accurate as the group sample sizes become imbalanced. Consequently, the accuracy of the approximation in (\ref{approxPval}) is best for equal group sample sizes, and worsens as the group sample sizes become more imbalanced.

The result in (\ref{approxPval}) and the form for $\xi$ and $\xic$ shown in Appendix \ref{proofs} for $T=\max(\bar{x}/\bar{y}, \bar{y}/\bar{x})$ give the smooth pattern shown in Figure \ref{pApproxTheo} for $n_x=n_y=100, \mu_x=\sigma^2_x=4$, and $\mu_y=\sigma^2_y=2$. In the case where $n_x \ne n_y$, the center of the trend shifts, but is otherwise similar.

The smooth trend shown in Figure \ref{pApproxTheo} is primarily an observation, though it holds with striking similarity for both $T=|\bar{x} - \bar{y}|$ and $T=\max(\bar{x}/\bar{y}, \bar{y}/\bar{x})$ for a wide range of group sample sizes and parameter values. This observation is the basis for our resampling algorithm, described in Section \ref{resampAlgo}. 

Figure \ref{logp_sim} shows simulated results with $B = 10^3$ iterations within each partition for data coming from the following distributions with $n_x=n_y=100$: Poisson with rates $\lambda_x=4$ and $\lambda_y=2$; exponential with rates $\lambda_x = 2$ and $\lambda_y=1$; log normal with means $\mu_x=2$ and $\mu_y = 1$, and variances $\sigma^2_x=\sigma^2_y=1$, where $\mu$ and $\sigma^2$ are the means and variances of the log; and negative binomial with size $r_x=r_y=3$, and probability of success $p=r/(r+\mu)$, where the means are $\mu_x=4$ and $\mu_y=2$. For visual comparison between theoretical and simulated results, Figure \ref{pApproxFig_ratio_cutoff} shows the theoretical values cut off at $10^{-3}$.

Note that the p-value for the $m=0$ partition is always 1, as the only permutation in that partition is the observed test statistic. The same holds for partition $m=n_{\min}$ when $n_x=n_y$.
\begin{figure}[htbp]
\centering
\begin{subfigure}{0.33\textwidth}
  \centering
  \includegraphics[width=\linewidth]{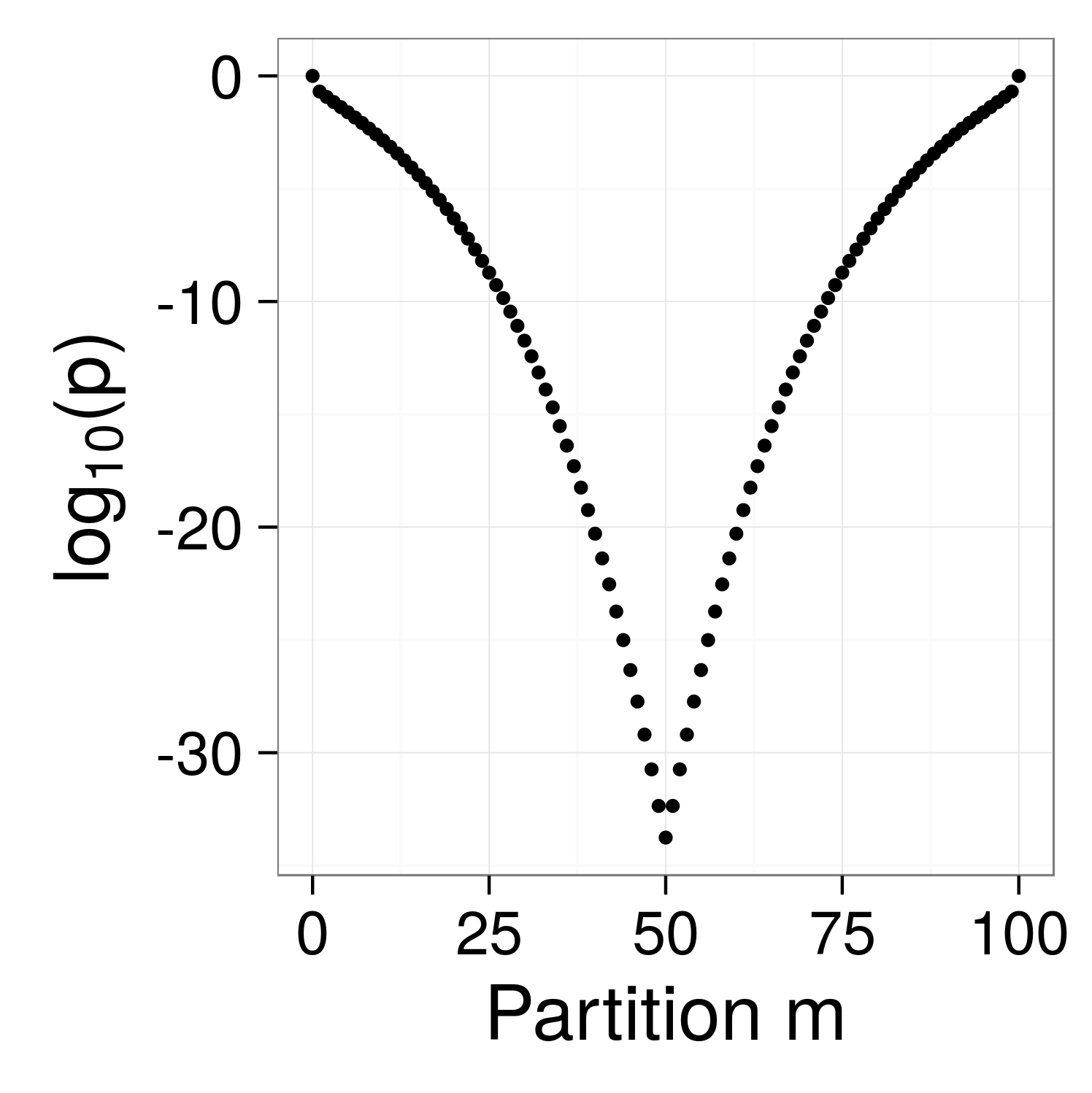}
  \caption{Theoretical trend}
  \label{pApproxFig_ratio_100_100}
\end{subfigure}
\begin{subfigure}{0.33\textwidth}
  \includegraphics[width=1\linewidth]{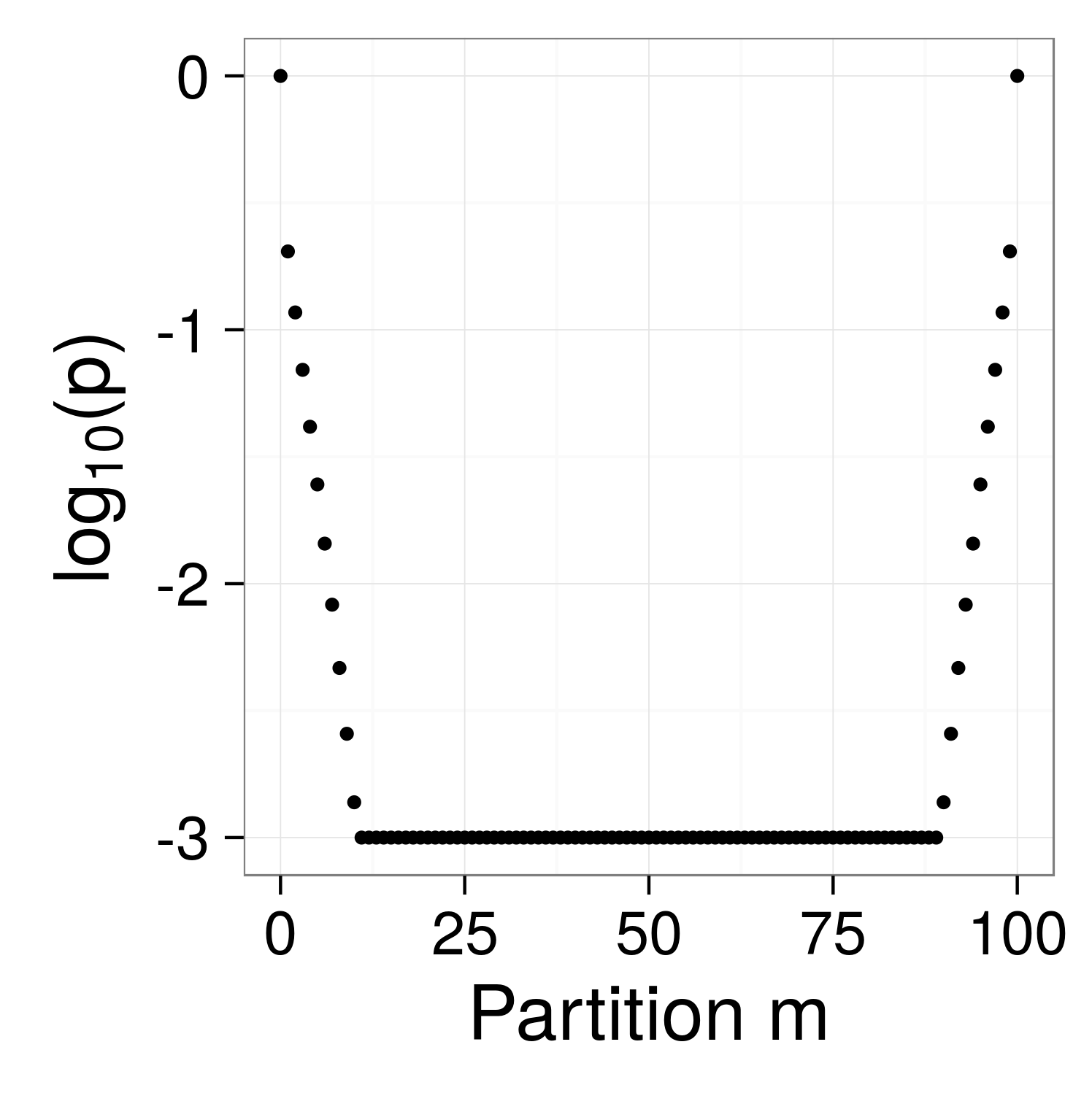}
  \caption{Theoretical trend cut off at $10^{-3}$}
  \label{pApproxFig_ratio_cutoff}
\end{subfigure}
\caption{Theoretical trend in p-values with $T=\max(\bar{x}/\bar{y},\bar{y}/\bar{x})$ for $n_x=n_y=100, \mu_x=\sigma^2_x=4$, and $\mu_y=\sigma^2_y=2$.}
\label{pApproxTheo}
\end{figure}

\begin{figure}[htbp]
\centering
\includegraphics[scale=0.44]{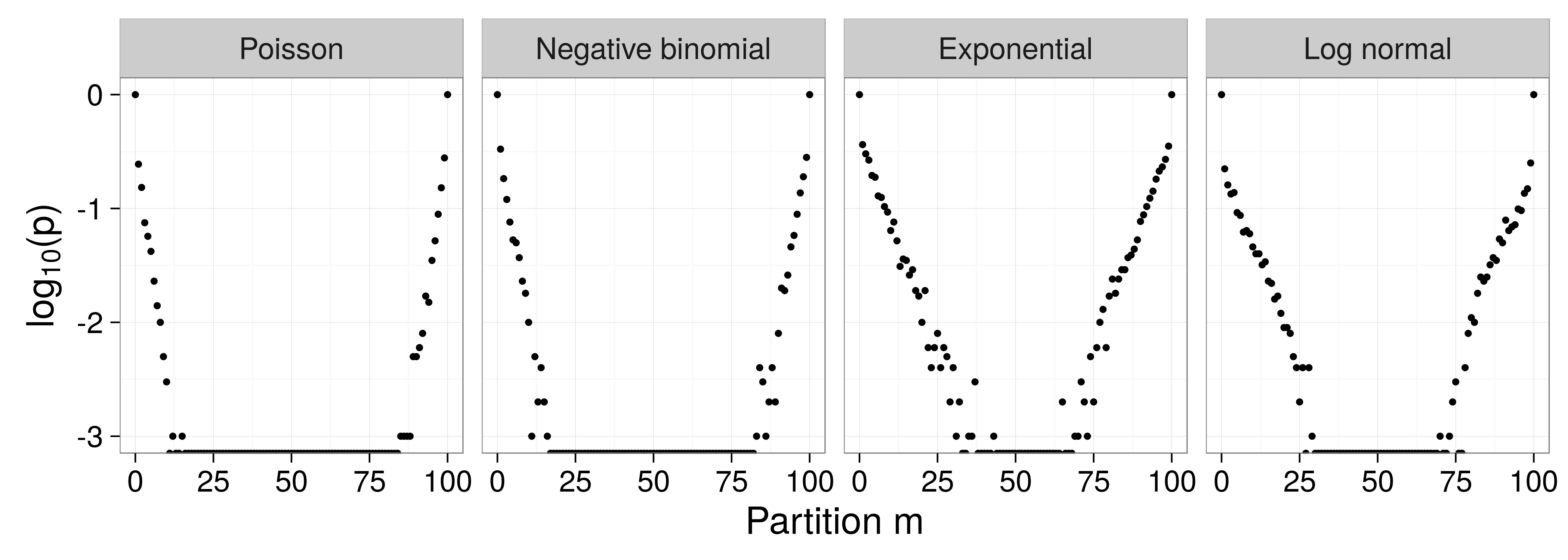}
\caption{Simulated trend in p-values with $B=10^3$ iterations within each partition and $T=\max(\bar{x}/\bar{y},\bar{y}/\bar{x})$}
\label{logp_sim}
\end{figure}

\section{Proposed methods \label{sec_alg}}
In this section, we propose two methods for approximating small permutation p-values: 1) a closed-form asymptotic approximation, and 2) a computationally efficient resampling algorithm. First, we note that we can express the permutation p-value as
\begin{equation}
\Pr (T \ge t | \bm{z}) =\sum_{m=0}^{n_{\min}} \Pr \left( T(m) \ge t \, | \bm{z} \right) f \left( m \right).
\label{p_w}
\end{equation}
Both the asymptotic and resampling-based approaches involve approximations for the\\ $\Pr \left( T(m) \ge t \, | \bm{z} \right)$ terms in (\ref{p_w}). The asymptotic approach uses (\ref{approxPval}) to approximate these terms, whereas the resampling-based algorithm uses the trend across the partitions to predict the terms.

If multiplicity corrections are needed, researchers can apply step-up or step-down procedures to the p-values produced by our method (for example, \citet{holm1979simple} to control FWER, or \citet{benjamini1995controlling} to control FDR).

\subsection{Asymptotic approximation \label{asymSec}}

Our asymptotic approximation to the permutation p-value is given by $\hat{p}_{\text{asym}} = \sum_{m=0}^{n_{\min}} h(m) f(m)$, where
\begin{align*}
h(0)&=1 \\
h(m) &= 2-\Phi \left[\xi(\min \left\{ m, 2 m_{\max} - m \right\}) \right] - \Phi \left[\xic(\min \left\{ m, 2 m_{\max} - m \right\}) \right],  m  \in [1, n_{\min} -1] \\
h(n_{\min}) &=
\begin{cases}
1 &\text{ if } n_x = n_y \\
2-\Phi \left[\xi(\min \left\{ m, 2 m_{\max} - m \right\}) \right] - \Phi \left[\xic(\min \left\{ m, 2 m_{\max} - m \right\}) \right] &\text{ otherwise}
\end{cases}
\end{align*}
To see why $h(0)=1$ always, and $h(n_{\min})=1$ when $n_x=n_y$, note that the p-value is always 1 in the $m=0$ partition, because this partition only contains the observed permutation. The same is true for the $n_{\min}$ partition when $n_x=n_y$, as $T$ is a two-sided statistic.

Regarding notation, we use a hat in $\hat{p}_{\text{asym}}$, as opposed to a tilde, to emphasize that we are not using Monte Carlo methods.

\subsection{Resampling algorithm \label{resampAlgo}}

As noted in Section \ref{sec_statDist}, we could, in principle, estimate each $\Pr (T(m) \ge t | \bm{z})$ term in (\ref{p_w}) with Monte Carlo methods, but this would be more computationally intensive than directly estimating $\Pr (T \ge t | \bm{z})$ without conditioning on the partition. This is because for small p-values, $\Pr (T(m) \ge t | \bm{z})$ terms for $m$ near $m_{\max}$ (the middle partition when $n_x=n_y$) are very small, and so we would need to use an extremely large number of resamples to estimate these values. For example, see Figure \ref{pApproxFig_ratio_100_100}.

However, by taking advantage of the trend in p-values across the partitions, we can avoid directly calculating $\Pr (T(m) \ge t | \bm{z})$ for $m$ near $m_{\max}$. Instead, we use simple Monte Carlo resampling to estimate $\Pr (T(m) \ge t | \bm{z})$ sequentially for $m=1,2,\ldots, m_{\text{stop}}$, where $m_{\text{stop}}$ is the stopping partition, which, as described below, is determined dynamically. We then use a Poisson model to predict the $\Pr (T(m) \ge t | \bm{z})$ terms for the remaining partitions (as well as for partitions $m=1,\ldots, m_{\text{stop}}$), under the assumption that the log of the partition-specific p-values is linear in $m$.

We then take a weighted sum across the predicted partition-specific p-values, as in (\ref{p_w}), to obtain an overall p-value. We denote the resulting p-value as $\tilde{p}_{\text{pred}}$, where the tilde emphasizes the use of Monte Carlo methods, and the subscript emphasizes that the estimate is based on predicted counts within each partition.

As described in Algorithm \ref{alg_two_sided_general}, we set the number of Monte Carlo iterations within partitions at $B_\text{pred}$ (e.g., we use $B_\text{pred}=10^3$), and estimate $\Pr(T(m)>t|\bm{z})$ for $m=1,\ldots,m_{\text{stop}}$, where $m_{\text{stop}}$ is the first partition in which none of the resampled statistics are larger than the observed statistic.

We stop at partition $m_{\text{stop}}$ because the exponential decrease in p-values across the partitions, shown in Figure \ref{pApproxFig_ratio_100_100}, makes it nearly certain that we would not obtain a p-value greater than zero in partitions larger than $m_{\text{stop}}$ using only $B_\text{pred}=10^3$ iterations. In other words, it would be a waste of resources to continue sampling from additional partitions. Furthermore, since the trend is symmetric about $m_{\max}$, we can estimate the p-values in partitions $m = m_{\max}+1, \dots, n_{\min}$ using the p-values in partitions $m=1,\ldots, m_{\max}$.

Regarding the Poisson model, this is a natural choice for count data (the number of resampled statistics larger than the observed statistic within each partition), and also enforces a log-linear trend. Furthermore, we found that Poisson regression worked best in the simulations. In addition to our current approach of using a slope and intercept term in the Poisson model, we also experimented with using higher order polynomials and B-splines, and selecting the optimal order or degrees of freedom based on AIC. However, we found that this approach was too sensitive to noise in the data and sometimes gave highly erroneous results (e.g. p-values $> 1$).

In Algorithm \ref{alg_two_sided_general}, we represent vector indices by square brackets $[\cdot]$, and begin the index at zero because our partitions begin at $m=0$.  We use the vector $\bm{c}$ to store the count of permuted test statistics in each partition that are as large or larger than the observed test statistic, as obtained with simple Monte Carlo resampling, and use $\bm{c}_{\text{pred}}$ to store predicted counts based on a fitted model. We use $B_{\text{pred}}$ to denote that number of resamples within each partition.

\begin{algorithm}
\caption{$\tilde{p}_{\text{pred}}$}
\begin{algorithmic}[1]
\State set $m \gets 1$ and $\bm{c}[0] \gets B_{\text{pred}}$
\While {($m\le m_\text{max}$ and $\bm{c}[m-1]>0$)}
  \State for $b=1,\ldots,B_{\text{pred}}$, sample $\pi_b \in \Pi(m)$ uniformly and calculate $T_b(m)=T(\bx^*, \by^*)$ for $\bx^*, \by^*$ corresponding to $\pi_b$
  \State set $\bm{c}[m] \gets \sum_b I [ T_b(m) \ge t ]$ and update $m \gets m + 1$
\EndWhile
\State set $m_{\text{stop}} \gets m-1$ and $m_{\text{reg}} \gets \max_m \left\{ m \in \{ 1 \ldots,m_{\max} \} : \bm{c}[m] >0 \right\}$
\State regress $\bm{c}[0:m_{\text{reg}}]$ on $(0,\ldots, m_{\text{reg}})$ using a Poisson model with slope and intercept terms
\State predict $\bm{c}_\text{pred}$ for $m=1,\ldots,n_{\min}$ with fitted model, $s.t.$ $\bm{c}_\text{pred}$ is symmetric about $m_\text{max}$
\State set $\bm{c}_\text{pred}[0] \gets B_{\text{pred}}$, and if $n_x = n_y$, then set $\bm{c}_\text{pred}[n_x] \gets B_{\text{pred}}$
\State return $\pw \equiv (1/B_{\text{pred}}) \sum_{m=0}^{n_{\min}} \bm{c}_\text{pred}[m] f (m)$
\end{algorithmic}
\label{alg_two_sided_general}
\end{algorithm}

Our proposed algorithm runs in $O(B_{\text{pred}} m_{\text{stop}})$ time. In our current implementation, we set $B_{\text{pred}}$ a priori. Regarding $m_{\text{stop}}$, we obtain the following approximation for small p-values, in which we assume that $1-\Phi(\xi(m)) \gg 1 - \Phi(\xic(m))$. From Algorithm \ref{alg_two_sided_general},
\begin{align}
m_{\text{stop}}&= \min_m \left\{ m \in \{1,\ldots, m_{\max} \} : \bm{c}[m] < 1\right\} \nonumber \\
  & \approx \min_m \left\{ m \in \{1,\ldots, m_{\max} \} : \Pr(T(m) \ge t | \bm{x}, \bm{y}) < 1/B_{\text{pred}} \right\}  && \text{(for large $B_{\text{pred}}$)} \nonumber \\
  & \approx \min_m \left\{ m \in \{1,\ldots, m_{\max} \} : 1- \Phi(\xi(m)) < 1/B_{\text{pred}} \right\}  \label{xiDom} \\
  & \approx \min_m \left\{ m \in \{1,\ldots, m_{\max} \} : \Phi^{-1}(1-1/B_{\text{pred}}) < \xi(m) \right\} && \text{(for large $n_x, n_y$)} \label{mStopEst} \\
  & \equiv m^{\text{asym}}_{\text{stop}}, \nonumber
\end{align}
where (\ref{xiDom}) follows from (\ref{approxPval}) and the assumption that $1-\Phi(\xi(m)) \gg 1 - \Phi(\xic(m))$.

In the \textsf{R} package \verb|fastPerm|, we provide functions for computing $m^{\text{asym}}_{\text{stop}}$, which can help an analyst to approximate run-time before running the algorithm. We emphasize that $m^{\text{asym}}_\text{stop}$ is based on asymptotic approximations, and may not be the same as the actual stopping partition; $m^{\text{asym}}_{\text{stop}}$ is not used in Algorithm \ref{alg_two_sided_general}.

\section{Simulations \label{sec_simulations}}

To investigate the behavior of our proposed methods, we conducted simulations with the statistics $T=|\bar{x}-\bar{y}|$ and $T= \max (\bar{x}/\bar{y}, \bar{y}/\bar{x})$. We use the former statistic because the true permutation p-value can be approximated well with the p-value from a $t$-test, which provides a baseline for comparison, and the latter because it is the statistic of interest in our motivating application (Section \ref{sec_app}).

We simulated data under the alternative hypothesis, and given the extremely small p-values in our simulations, it was not feasible to compute the true permutation p-values for comparison. Instead, we used asymptotically equivalent p-values and large sample sizes.

In Appendix \ref{C}, we show results from additional simulations for 1) small sample sizes, and 2) data generated under the null hypothesis, in which case we approximated the true permutation p-value with simple Monte Carlo resampling, and 3) data generated as Gamma random variables. In Appendix \ref{D}, we also show simulations with the moment-corrected correlation (MCC) method of \citet{zhou2015hypothesis} using the statistic $T = |\bar{x} - \bar{y}|$, and compare our method with saddlepoint approximations \citep{robinson1982} by analyzing two small datasets ($n_x = n_y = 8$ and $n_x = 7, n_y = 10$), also using the statistic $T = | \bar{x} - \bar{y}|$. In Appendix \ref{E}, we show simulation results using our method with a studentized statistic to test null hypotheses regarding a single parameter as opposed to the full distribution, as described by \citet{chung2013exact}. The results in Appendices \ref{C} and \ref{D} show that the accuracy of our method is comparable to alternative methods, and the results in Appendix \ref{E} show that by using a studentized statistic, our method can be extended to null hypotheses specifying equality in the means ($H_0: \mu_x = \mu_y$), as opposed to equality in the entire distributions ($H_0: P_x = P_y$).

\subsection{Difference in means \label{sec_algoEval}}

In this section, we consider the test statistic $T=|\bar{x}-\bar{y}|$ with normally distributed data of equal variance. Since the t-test is asymptotically equivalent to the permutation test in this setting \citep[][p. 642-643]{lehmann2006testing}, we used the t-test as a baseline for comparison. We simulated data with both equal and unequal sample sizes ($n_x = n_y$ and $n_x \ne n_y$). In both cases, we generated data $x_{i}, i=1,\ldots,n_x$ and $y_j, j=1,\ldots,n_y$ as realizations of the respective random variables $X_i\overset{\text{iid}}{\sim} N(\mu_x,1)$ and $Y_{j} \overset{\text{iid}}{\sim} N(\mu_y,1)$, for various parameter values. For each combination of parameter values, we generated 100 datasets.

For equal sample sizes, we set $n=n_x=n_y=100$, 500, or 1,000. For unequal sample sizes, we set $n_y=500$, and $n_x=50$, 200, or 350. In both cases we set $\mu_y=0$ and $\mu_x=0.75$ or 1. For each dataset, we applied our methods and did a t-test with the \verb|t.test| function in \textsf{R} \citep{R} (two-sided with equal variance). For our resampling algorithm, we used $B_{\text{pred}}=10^3$ iterations in each partition.

For comparison, we also ran the SAMC algorithm using the \textsf{R} package \verb|EXPERT| written by \citet{yu2011}. We set the number of iterations in the initial round at $5 \times 10^4$, and the number of iterations in the final round at $10^6$. Following the advice of \citet{yu2011}, we set the gain factor sequence to begin decreasing after the $1,000^{th}$ iteration, the proportion of data to be updated at each iteration at 0.05, and the number of regions at 101 for the initial run and 301 for the final run.

Results are shown in Figures \ref{sim_sym} and \ref{sim_nonSym}. In the Figures, $p_t$ denotes the p-value from a two-sided t-test with equal variance, and $p$ denotes the p-value from either our methods or SAMC. The dashed line has a slope of 1 and intercept of 0, and indicates agreement between methods. The SAMC algorithm did not produce values for smaller p-values due to numerical problems, and so these points are missing from Figures \ref{sim_sym} and \ref{sim_nonSym} (385 missing points in Figure \ref{sim_sym}, and 179 missing points in Figure \ref{sim_nonSym}). In order to estimate these points with the \verb|EXPERT| implementation of the SAMC algorithm, we would need to increase the number of iterations.

\begin{figure}[htbp]
\centering
  \begin{subfigure}{0.58\textwidth}
  \centering
  \includegraphics[width=1\linewidth]{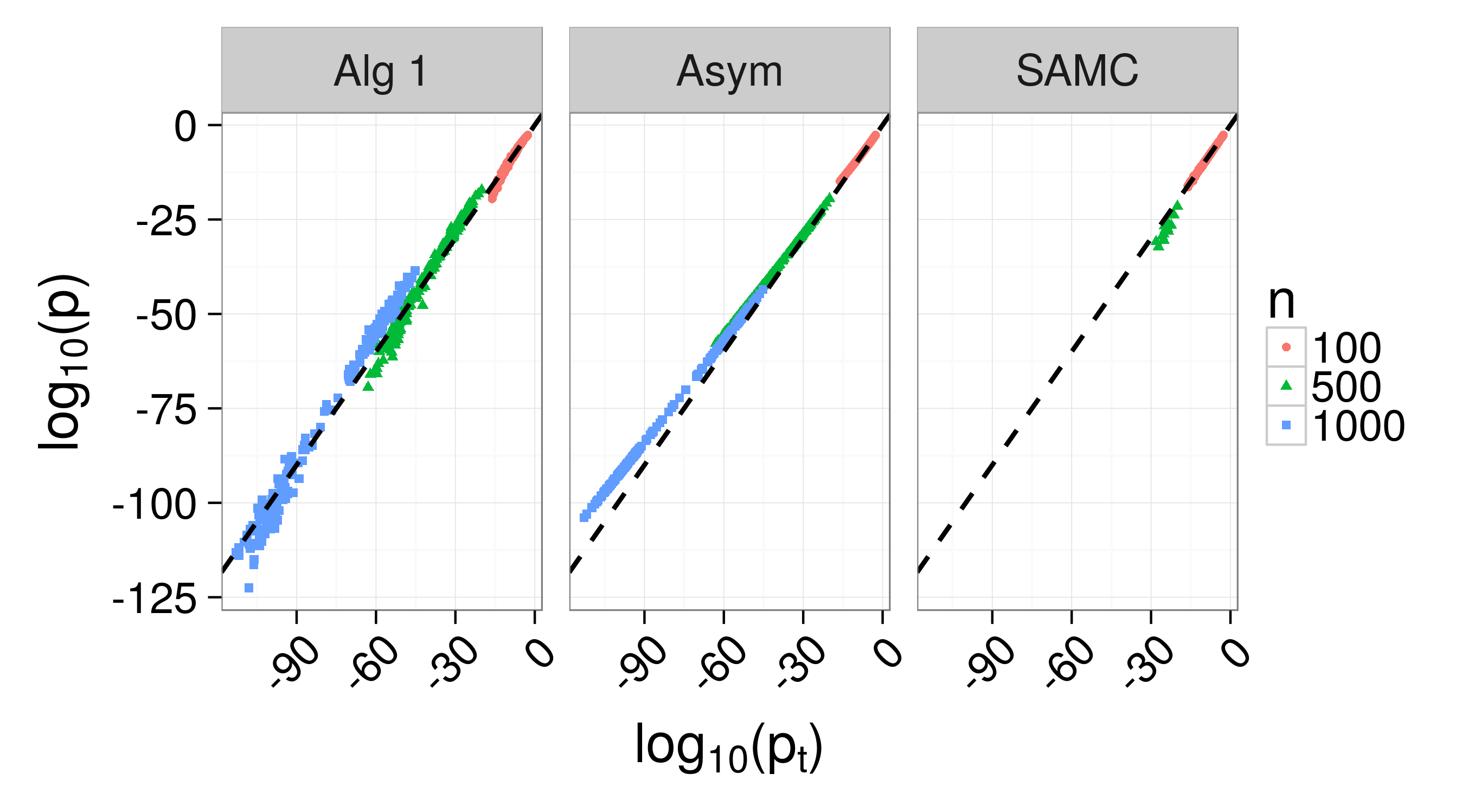}
  \caption{p-values}
  \end{subfigure}
  \begin{subfigure}{0.38\textwidth}
  \centering
  \includegraphics[width=1\linewidth]{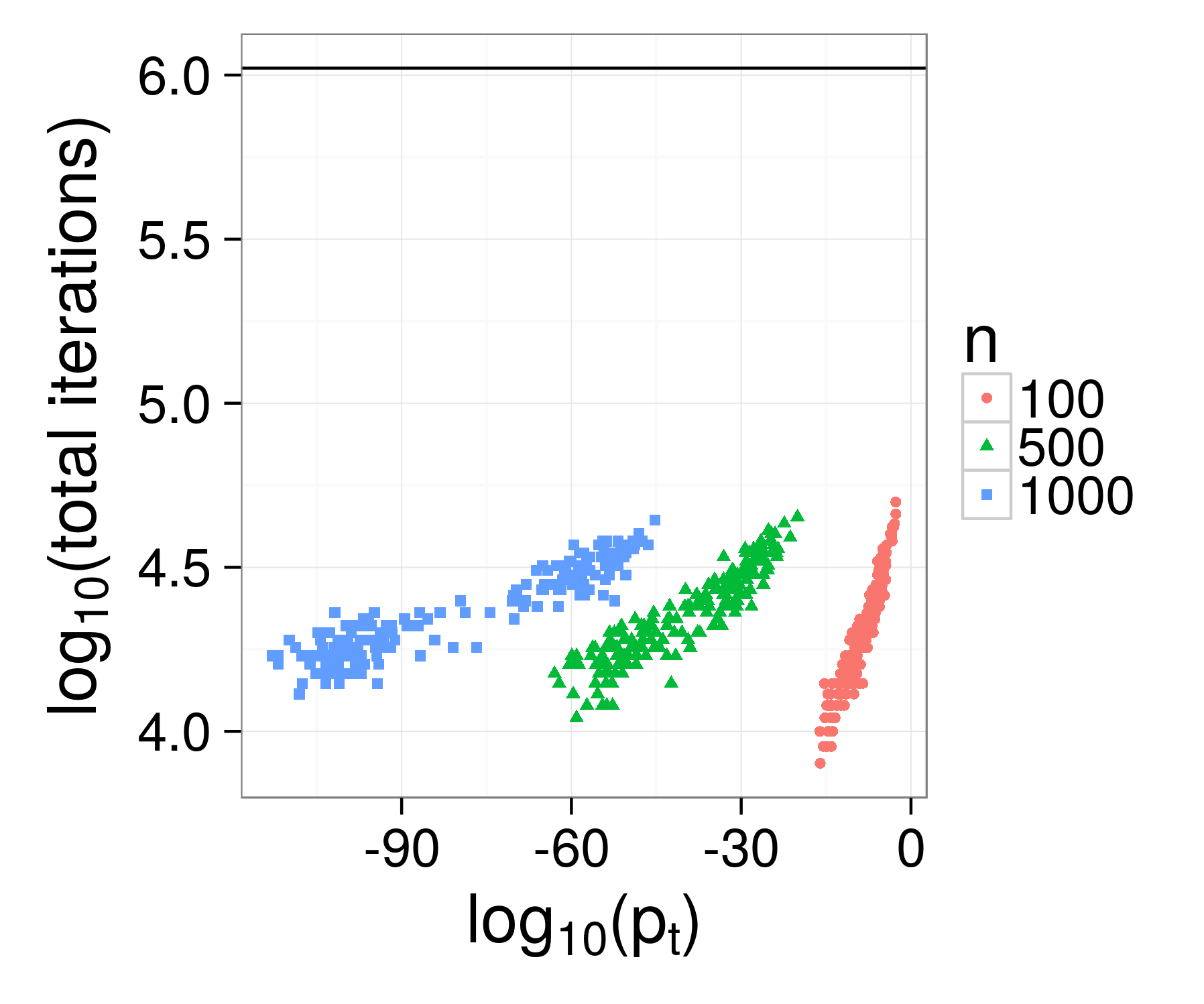}
  \caption{Iterations in resampling algorithm}
  \label{simDiff_sym_iter}
  \end{subfigure}
\caption{Simulation results using the statistic $T=|\bar{x}-\bar{y}|$, with equal sample sizes of $n=n_x=n_y=100$, 500, 1,000. \textit{Alg 1} is our resampling algorithm with $B_{\text{pred}}=10^3$ iterations in each partition, \textit{Asym} is our asymptotic approximation, \textit{SAMC} is the SAMC algorithm, and $p_t$ is a two-sided t-test with equal variance. The diagonal dashed line has slope of 1 and intercept of 0, and indicates agreement between methods. The horizontal line in \ref{simDiff_sym_iter} shows the number of permutations used in the SAMC algorithm (set in advance, and independent of p-value). The SAMC algorithm did not produce values for 385 tests (points missing).}
\label{sim_sym}
\end{figure}

\begin{figure}[htbp]
\centering
  \begin{subfigure}{0.58\textwidth}
  \centering
  \includegraphics[width=1\linewidth]{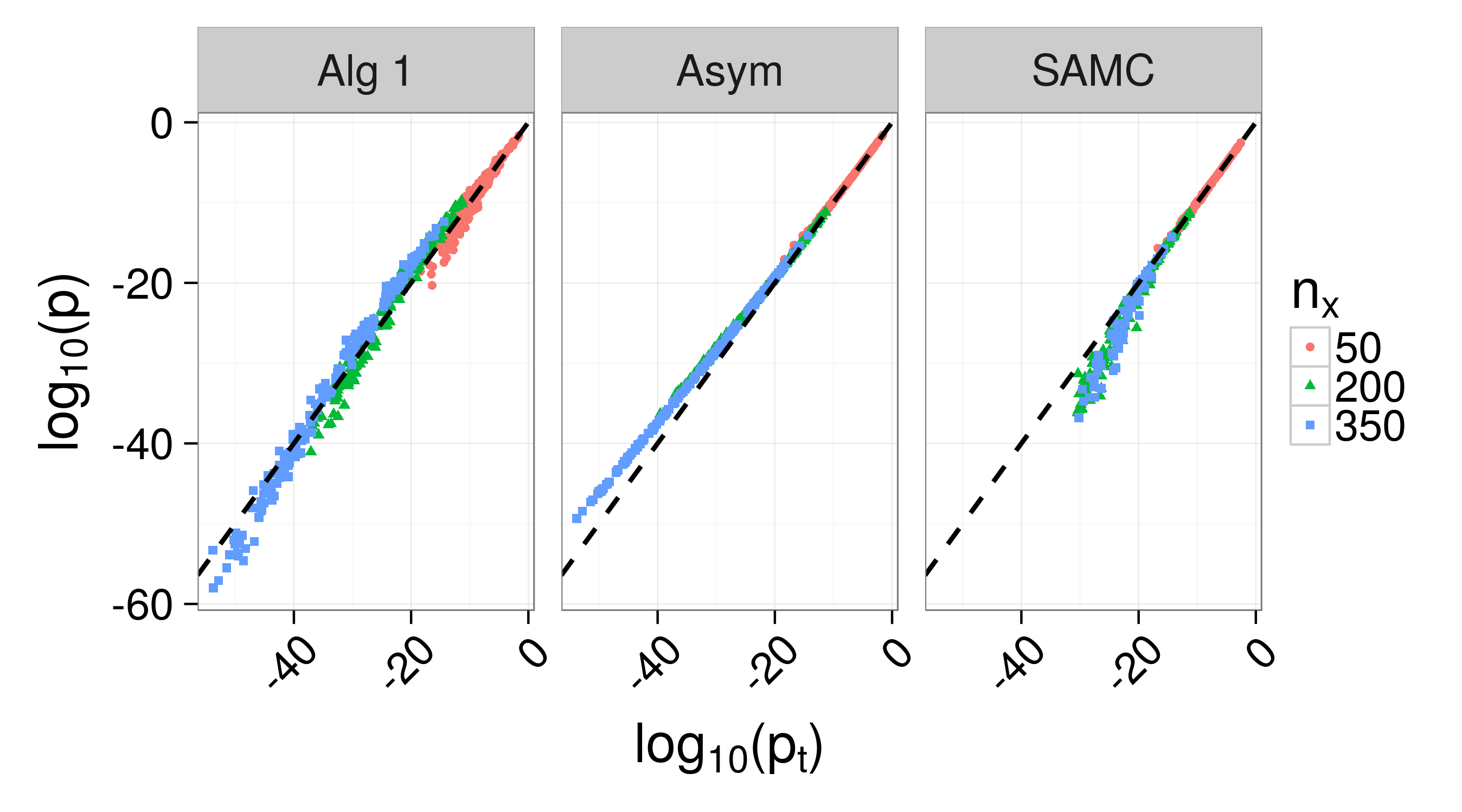}
  \caption{p-values}
  \end{subfigure}
  \begin{subfigure}{0.38\textwidth}
  \centering
  \includegraphics[width=1\linewidth]{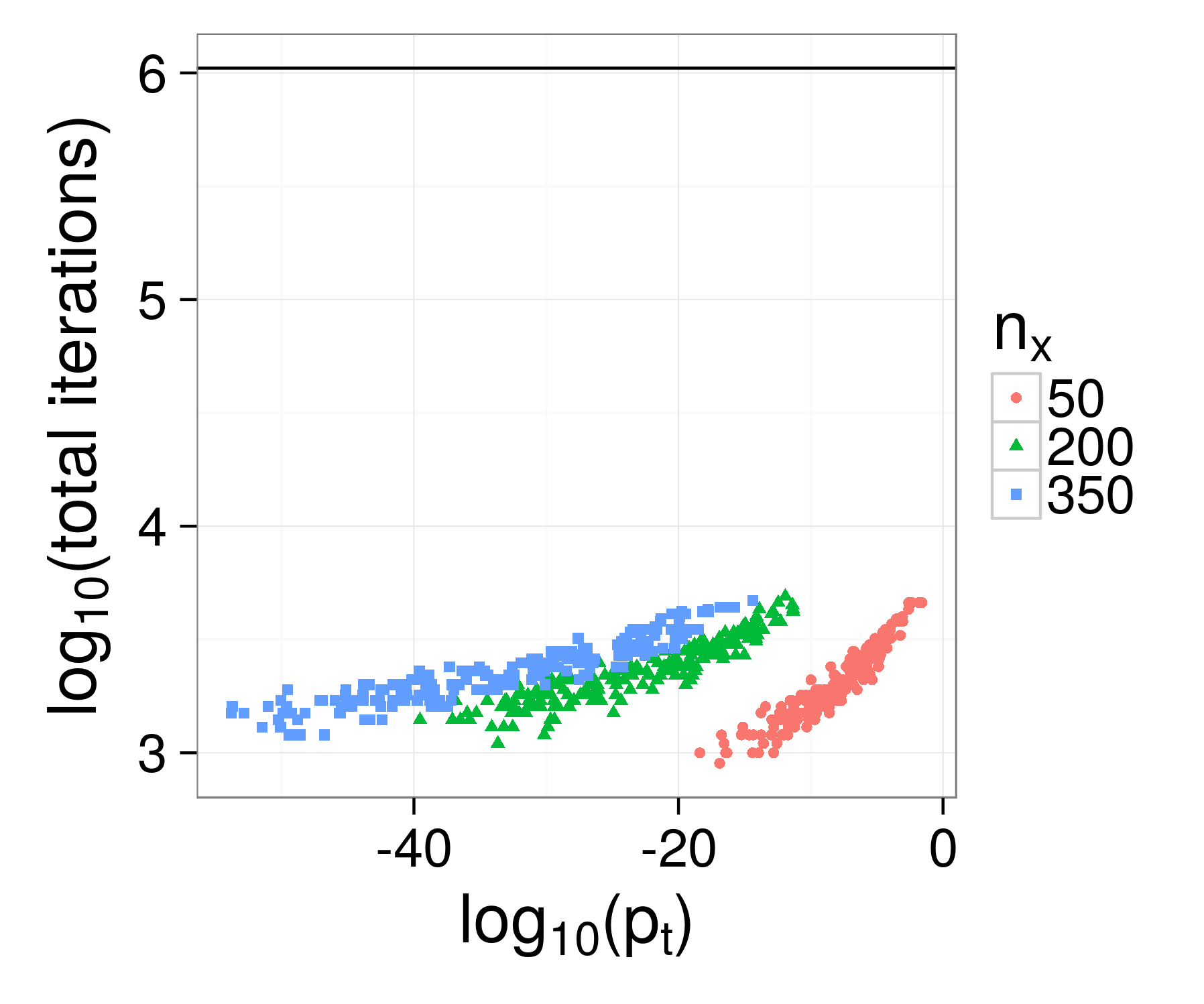}
  \caption{Iterations in resampling algorithm}
  \label{simDiff_nonSym_iter}
  \end{subfigure}
\caption{Simulation results using the statistic $T=|\bar{x}-\bar{y}|$, with unequal sample sizes, where $n_y=500$ and $n_x=50, 200, 350$. \textit{Alg 1} is our resampling algorithm with $B_{\text{pred}}=10^3$ iterations in each partition, \textit{Asym} is our asymptotic approximation, \textit{SAMC} is the SAMC algorithm, and $p_t$ is a two-sided t-test with equal variance. The diagonal dashed line has slope of 1 and intercept of 0, and indicates agreement between methods. The horizontal line in \ref{simDiff_nonSym_iter} shows the number of permutations used in the SAMC algorithm (set in advance, and independent of p-value). The SAMC algorithm did not produce values for 179 tests (points missing).}
\label{sim_nonSym}
\end{figure}

As Figures \ref{sim_sym} and \ref{sim_nonSym} show, our resampling algorithm and asymptotic approximation are able to estimate extremely small p-values, which the SAMC algorithm is not able to estimate even though we set it to use approximately two orders of magnitude more iterations than our resampling algorithm. While our asymptotic approximation has less variance than our resampling algorithm, the asymptotic approximation appears to have more bias. We note that the scale of the p-values is not the same in Figures \ref{sim_sym} and \ref{sim_nonSym}, but in both cases, they are smaller than what would typically be estimated with resampling methods. Figures \ref{sim_sym} and \ref{sim_nonSym} also show that p-values from the delta method (see Appendix \ref{G}) are not reliable, even for large sample sizes.

Figures \ref{simDiff_sym_iter} and \ref{simDiff_nonSym_iter} also demonstrate that our algorithm uses fewer permutations when estimating smaller p-values than when estimating larger p-values. This occurs because the trend in partition-specific p-values across the partitions tends to be steeper for smaller overall p-values, which leads to earlier stopping times.

\subsection{Ratio of means \label{sec_SAMC_compare}}

In this section, we consider the test statistic $T=\max(\bar{x}/\bar{y}, \bar{y}/\bar{x})$, both for $n_x=n_y$ and $n_x \ne n_y$. We generated data $x_i, i=1,\ldots,n_x$ and $y_j, j=1,\ldots,n_y$ as realizations of the respective random variables $X_i \overset{\text{iid}}{\sim} \text{Exp}(\lambda_x)$ and  $Y_j \overset{\text{iid}}{\sim} \text{Exp}(\lambda_y)$, where $\text{Exp}(\lambda)$ is an exponential distribution with rate $\lambda$, i.e. $\mathbb{E}[X_i]=1/\lambda_x$. We chose this setup because 1) having data with non-negative support ensures non-zero denominators in the ratio statistic, and 2) the resulting ratio statistic follows a beta prime distribution, also called a Pearson type VI distribution \citep[][p. 248]{johnson2002continuous}, which provides an approximate baseline for comparison (see Appendix \ref{B}).

For equal sample sizes, we set $n_x=n_y=100$, 500, or 1,000. For unequal sample sizes, we set $n_y=500$, and $n_x=50$, 200, or 350. In both cases we set $\lambda_x=1$ and $\lambda_y=1.75$ or 2.25. For all parameter combinations, we generated 100 datasets.

For each dataset, we applied our methods and computed the p-value from the beta prime distribution, using the \verb|PearsonDS| package for \textsf{R} \citep{Becker2016}. For our resampling algorithm, we used $B_{\text{pred}}=10^3$ iterations in each partition. We also computed p-values using the delta method (see Appendix \ref{G}), and ran the SAMC algorithm, with the same specifications as described in Section \ref{sec_algoEval}.

Results are shown in Figures \ref{simExp_sym} and \ref{simExp_nonSym}. In the Figures, $p_{\beta}$ denotes the p-value from the beta prime distribution, and $p$ denotes the p-value from either our methods, the delta method (see Appendix \ref{G}), or SAMC. The dashed line has a slope of 1 and intercept of 0, and indicates agreement between methods. As before, the SAMC algorithm did not produce values for smaller p-values, and so these points are missing from Figures \ref{simExp_sym} and \ref{simExp_nonSym} (246 missing points in Figure \ref{simExp_sym}, and 33 missing points in Figure \ref{simExp_nonSym}).

\begin{figure}[htbp]
\centering
  \begin{subfigure}{0.58\textwidth}
  \centering
  \includegraphics[width=1\linewidth]{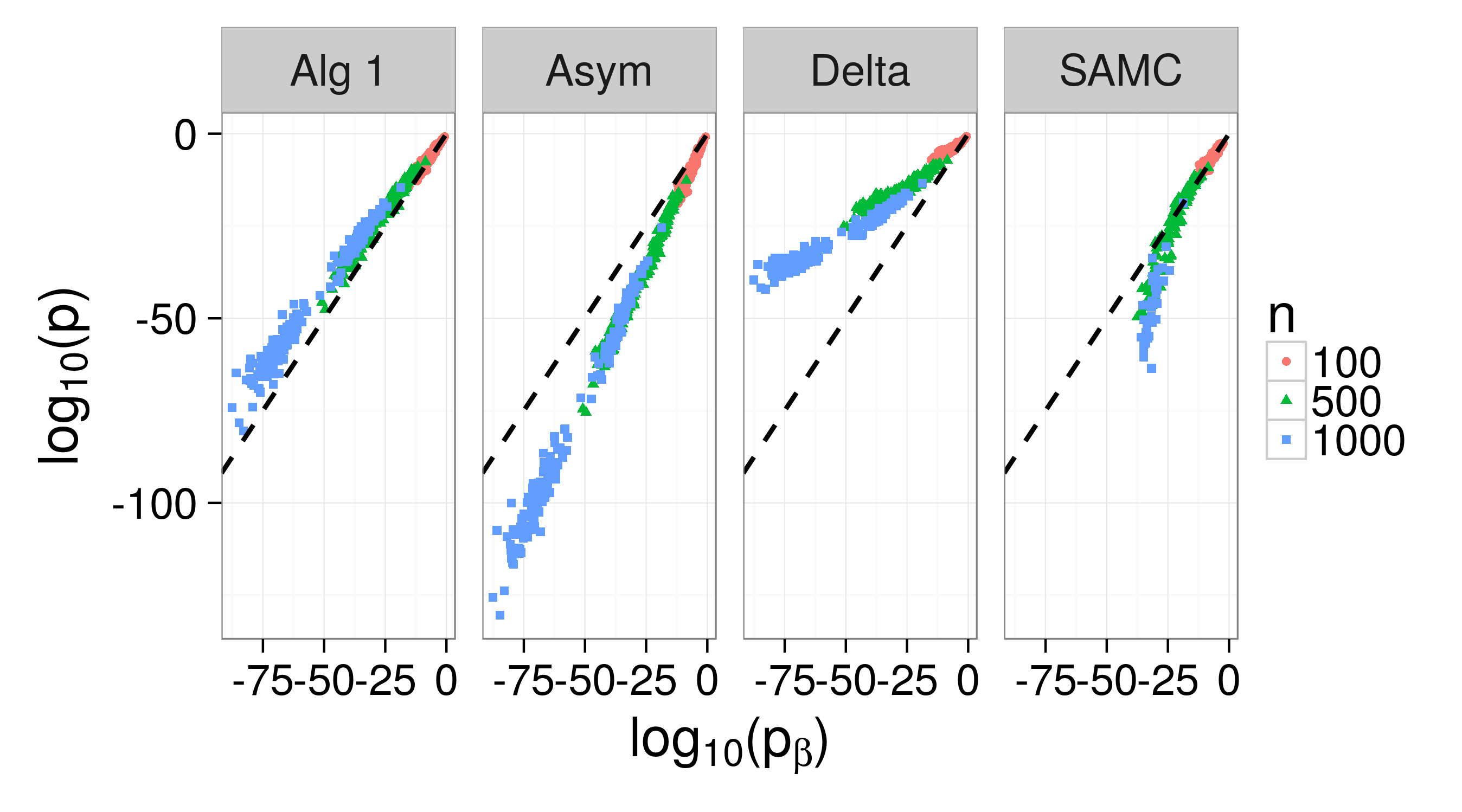}
  \caption{p-values}
  \end{subfigure}
  \begin{subfigure}{0.38\textwidth}
  \centering
  \includegraphics[width=1\linewidth]{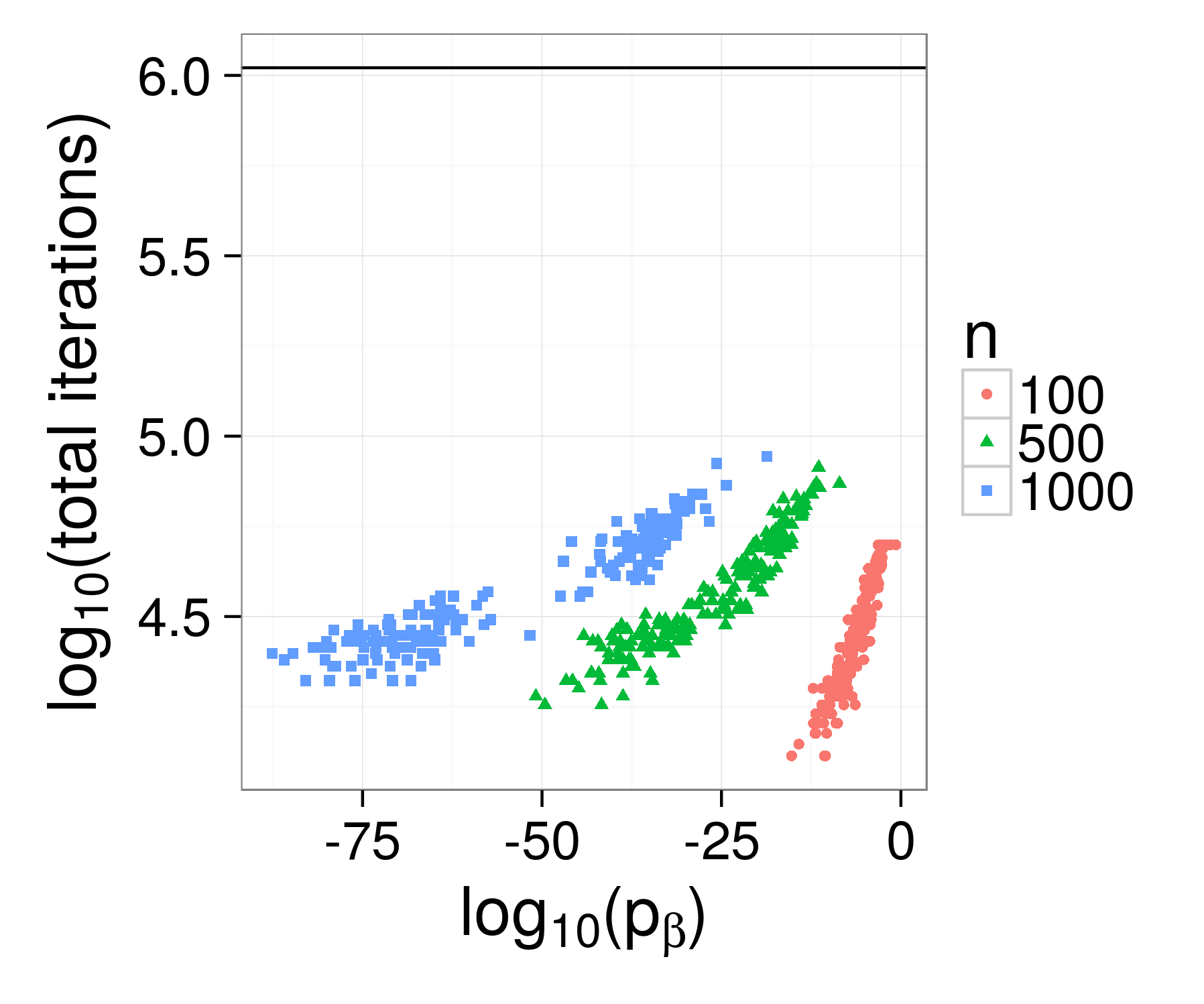}
  \caption{Iterations in resampling algorithm}
  \label{simExp_sym_iter}
  \end{subfigure}
\caption{Simulation results using the statistic $T= \max(\bar{x} / \bar{y}, \bar{y} / \bar{x})$, with equal sample sizes of $n=n_x=n_y=100$, 500, 1,000. \textit{Alg 1} is our resampling algorithm with $B_{\text{pred}}=10^3$ iterations in each partition, \textit{Asym} is our asymptotic approximation, \textit{Delta} is the delta method, and \textit{SAMC} is the SAMC algorithm, and $p_{\beta}$ is the two-sided p-value from the beta prime distribution. The diagonal dashed line has slope of 1 and intercept of 0, and indicates agreement between methods. The horizontal line in \ref{simExp_sym_iter} shows the number of permutations used in the SAMC algorithm (set in advance, and independent of p-value). The SAMC algorithm did not produce values for 246 tests (points missing).}
\label{simExp_sym}
\end{figure}

\begin{figure}[htbp]
\centering
  \begin{subfigure}{0.58\textwidth}
  \centering
  \includegraphics[width=1\linewidth]{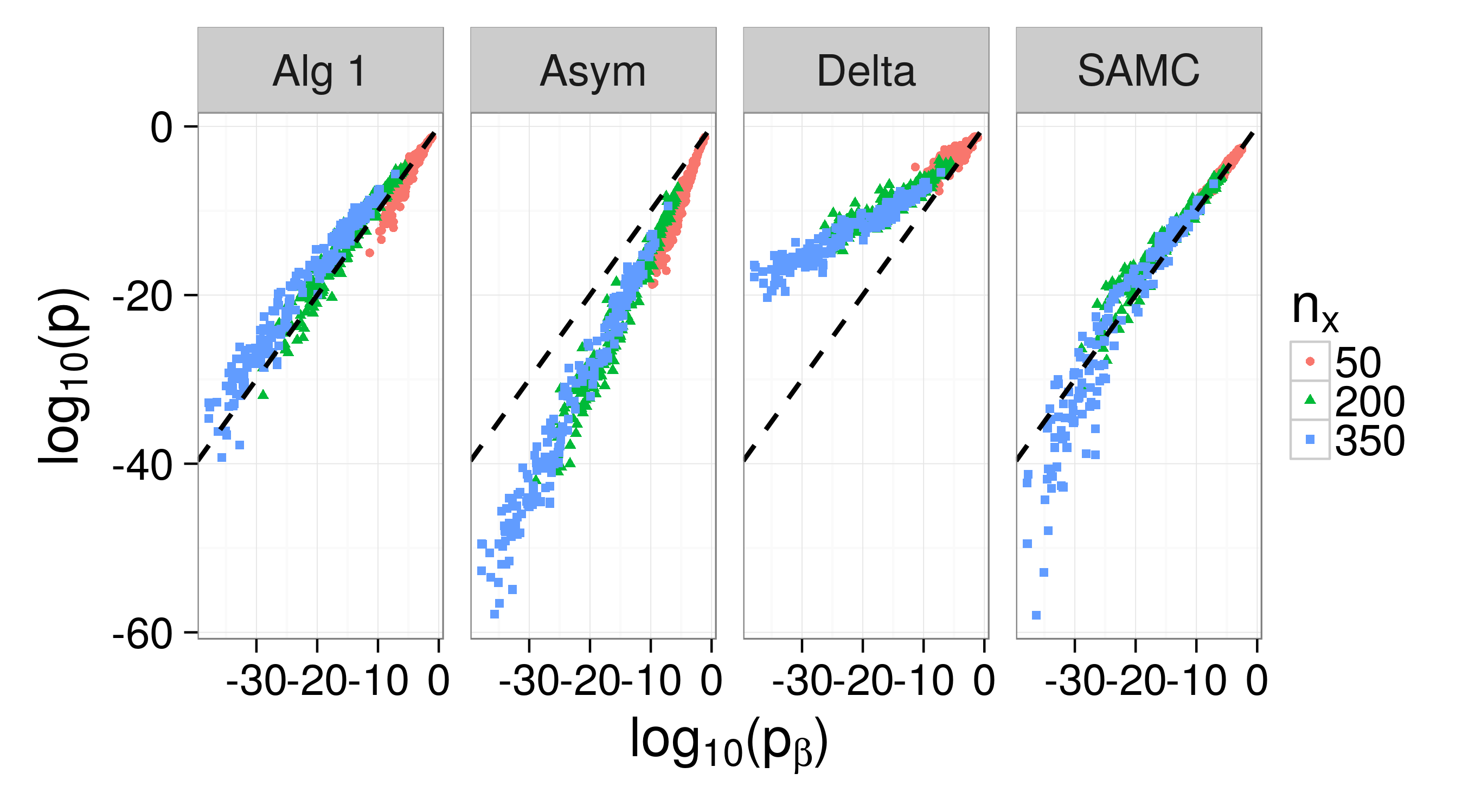}
  \caption{p-values}
  \end{subfigure}
  \begin{subfigure}{0.38\textwidth}
  \centering
  \includegraphics[width=1\linewidth]{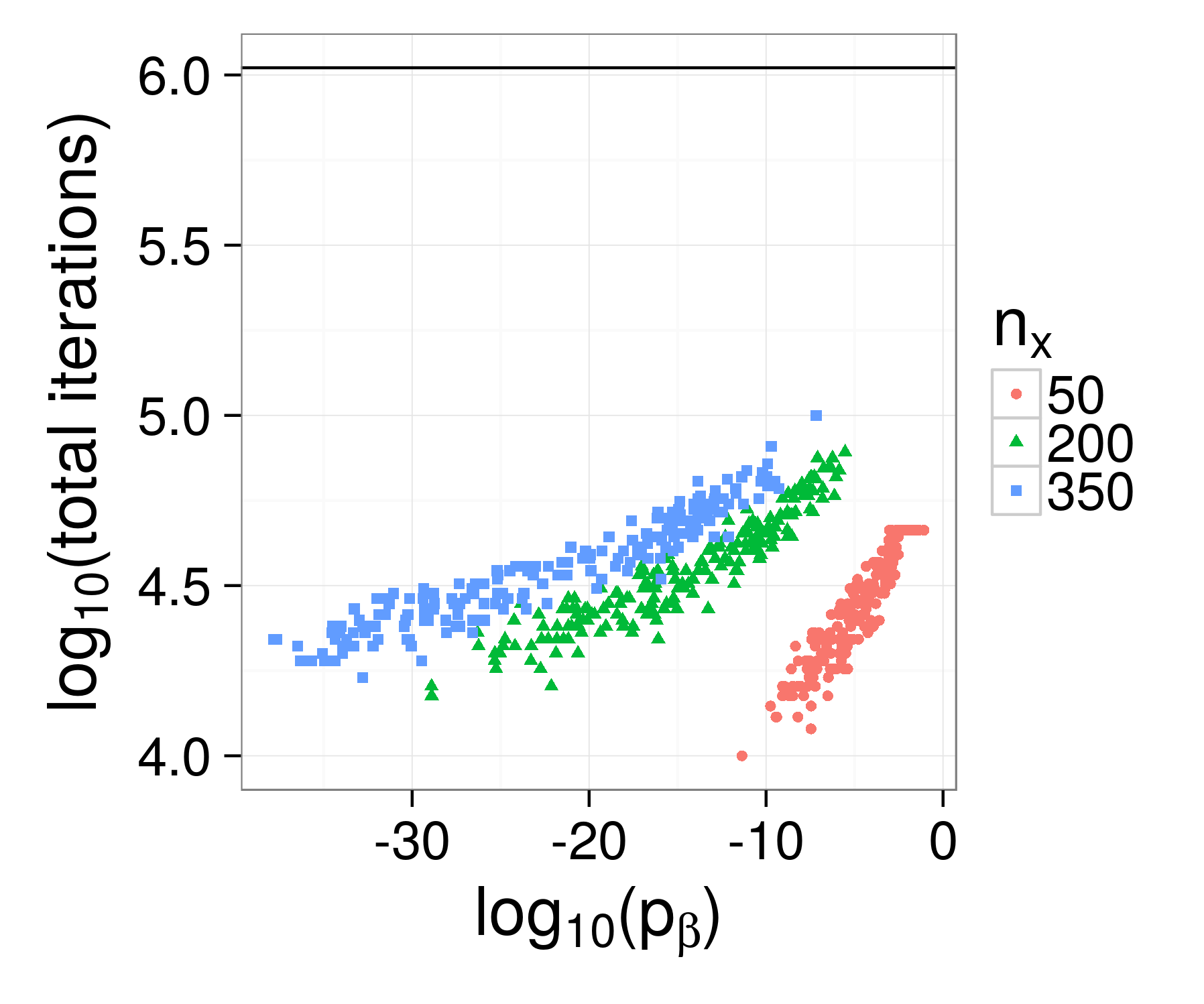}
  \caption{Iterations in resampling algorithm}
  \label{simExp_nonSym_iter}
  \end{subfigure}
\caption{Simulation results using the statistic $T= \max(\bar{x} / \bar{y}, \bar{y} / \bar{x})$ and unequal sample sizes, where $n_y=500$ and $n_x=50, 200, 350$. \textit{Alg 1} is our resampling algorithm with $B_{\text{pred}}=10^3$ iterations in each partition, \textit{Asym} is our asymptotic approximation, \textit{Delta} is the delta method, and \textit{SAMC} is the SAMC algorithm, and $p_{\beta}$ is the two-sided p-value from the beta prime distribution. The diagonal dashed line has slope of 1 and intercept of 0, and indicates agreement between methods. The horizontal line in \ref{simExp_nonSym_iter} shows the number of permutations used in the SAMC algorithm (set in advance, and independent of p-value). The SAMC algorithm did not produce values for 33 tests (points missing).}
\label{simExp_nonSym}
\end{figure}

As Figures \ref{simExp_sym} and \ref{simExp_nonSym} show, both our resampling algorithm and asymptotic approximation appear to have more bias in this setting than for the difference in means, though in this case, the asymptotic approximation is biased downward instead of upward. Our resampling algorithm tends to be biased upward for equal group sizes ($n_x = n_y$), and downward for highly imbalanced group sizes (e.g. $n_x=50$ and $n_y=500$).

As before, the SAMC algorithm had trouble estimating extremely small p-values with the number of iterations we allowed it. In the case of the equal sample size simulation, the SAMC algorithm began to have problems for p-values around $10^{-30}$. In the case of unequal sample size, the SAMC algorithm appears to have performed similarly to our resampling algorithm, albeit with one to two orders of magnitude more iterations.

Similar to Section \ref{sec_algoEval}, Figures \ref{simExp_sym_iter} and \ref{simExp_nonSym_iter} show that our resampling algorithm uses fewer iterations for smaller p-values. Also, as before, the scale of the p-values is not the same in Figures \ref{simExp_sym} and \ref{simExp_nonSym}, but in both cases, they are smaller than what would typically be estimated with resampling methods.

\section{Application to cancer genomic data \label{sec_app}}

To further demonstrate our methods, we analyzed RNA-seq data collected as part of The Cancer Genome Atlas (TCGA) \citep{TCGA2015}. In particular, we were interested in identifying genes that were differentially expressed in two different types of lung cancers: lung adenocarcinoma (LUAD), and lung squamous cell carcinoma (LUSC).

We downloaded normalized gene expression data from the TCGA data portal\\
(\verb|https://tcga-data.nci.nih.gov/tcga|). As described by TCGA, to produce the normalized gene expression data, tissue samples from patients with LUSC and LUAD were sequenced using the Illumina RNA Sequencing platform. The raw sequencing reads from all patient samples were processed and analyzed using the SeqWare Pipeline 0.7.0 and MapspliceRSEM workflow 0.7 developed by the University of North Carolina. Sequencing reads were aligned to the human reference genome using MapSplice \citep{wang2010mapsplice}, and gene level expression values were estimated using RSEM \citep{li2011rsem} with gene annotation file GAF 2.1. For each sample, RSEM gene expression estimates were normalized to set the upper quartile count at 1,000 for gene level estimates. For the analyses in this section, we used the normalized RSEM gene expression estimates.

For both LUAD and LUSC, TCGA contains normalized expression estimates for 20,531 genes (the same genes for both cancers). There were 548 subjects with LUAD observations, and 541 with LUSC observations. To ensure that our results would be biologically meaningful, we restricted our analysis to genes for which at least 50\% of the subjects had expression levels above the $25^{th}$ percentile of all normalized gene expression levels (6.57). This reduced our analysis to 15,386 genes.

Let $P_{x,g}$ and $P_{y,g}$ be the underlying distributions that generated the normalized expression levels in LUAD and LUSC, respectively, for gene $g$. To test the two-sided hypothesis of $H_0: P_{x,g} = P_{y,g}$ versus the alternative $H_1: \mu_x / \mu_y \ne 1$, we used the fold-change statistic $T= \max(\bar{x}_g/\bar{y}_g,\bar{y}_g/\bar{x}_g)$. Here, $\mu_x$ and $\mu_y$ are the means of $P_{x,g}$ and $P_{y,g}$, respectively.

First, we conducted simple Monte Carlo permutation tests on all 15,386 genes with $B=10^3$ iterations. This left us with 10,302 genes with p-values less than $10^{-3}$, the minimum estimate possible with only $B=10^3$ iterations. We then used our resampling algorithm to estimate p-values for the 10,302 genes that passed our preliminary screen.

Figure \ref{pw_hist} shows the distribution of the resulting p-values. The dashed red line indicates the cutoff value from the preliminary screen ($10^{-3}$). Figure \ref{p_all_hist} shows all 15,386 p-values, where the p-value is taken from the initial screen if the p-value was larger than $10^{-3}$ and from our algorithm otherwise. The non-uniform shape of Figure \ref{p_all_hist} provides strong evidence against the null hypothesis of no differential expression.

\begin{figure}[htbp]
\centering
  \begin{subfigure}{0.35\textwidth}
  \centering
  \includegraphics[width=1\linewidth]{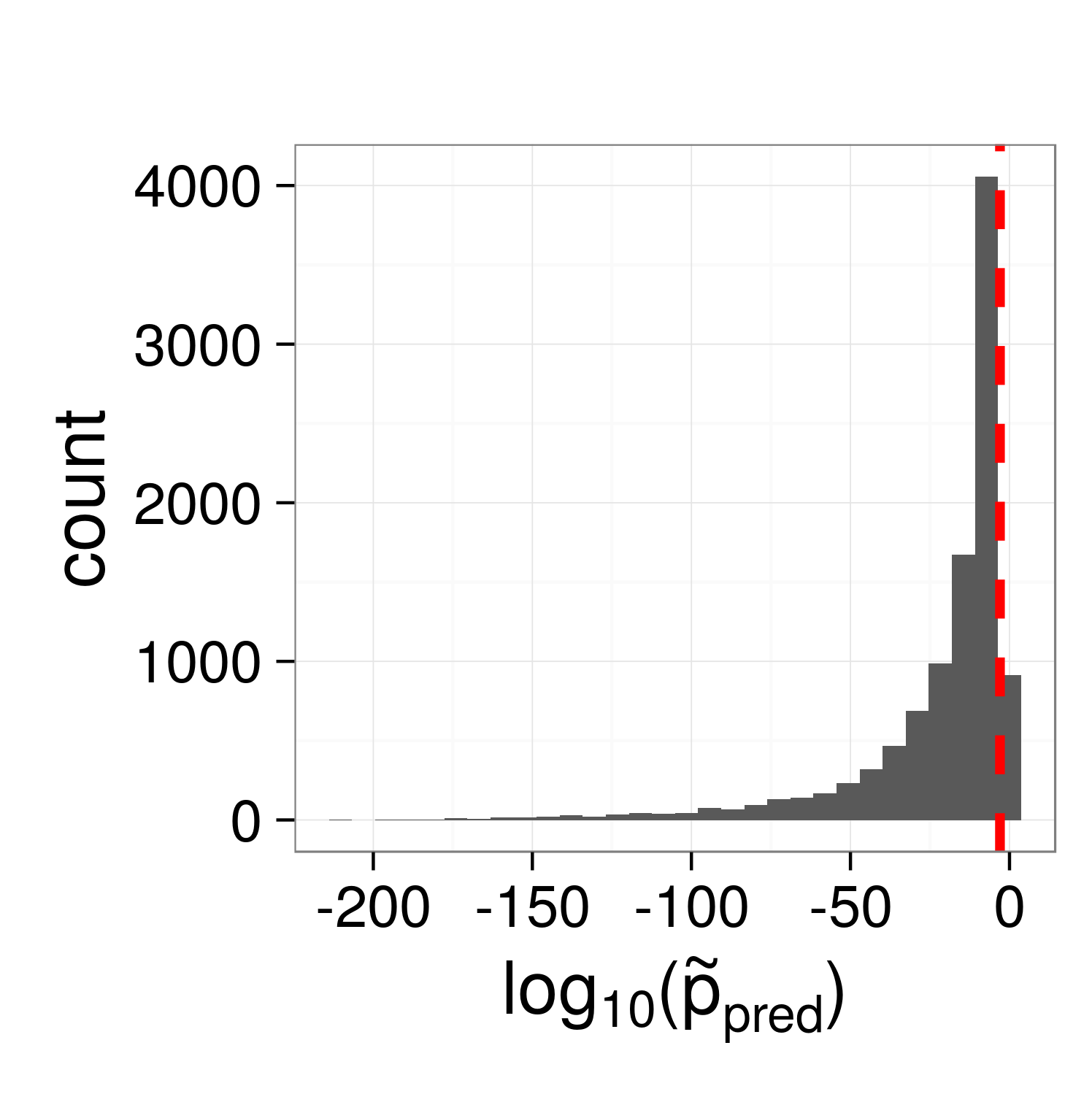}
  \caption{Genes with $\tilde{p} \le 10^{-3}$}
  \label{pw_hist}
  \end{subfigure}
  \begin{subfigure}{0.35\textwidth}
  \centering
  \includegraphics[width=1\linewidth]{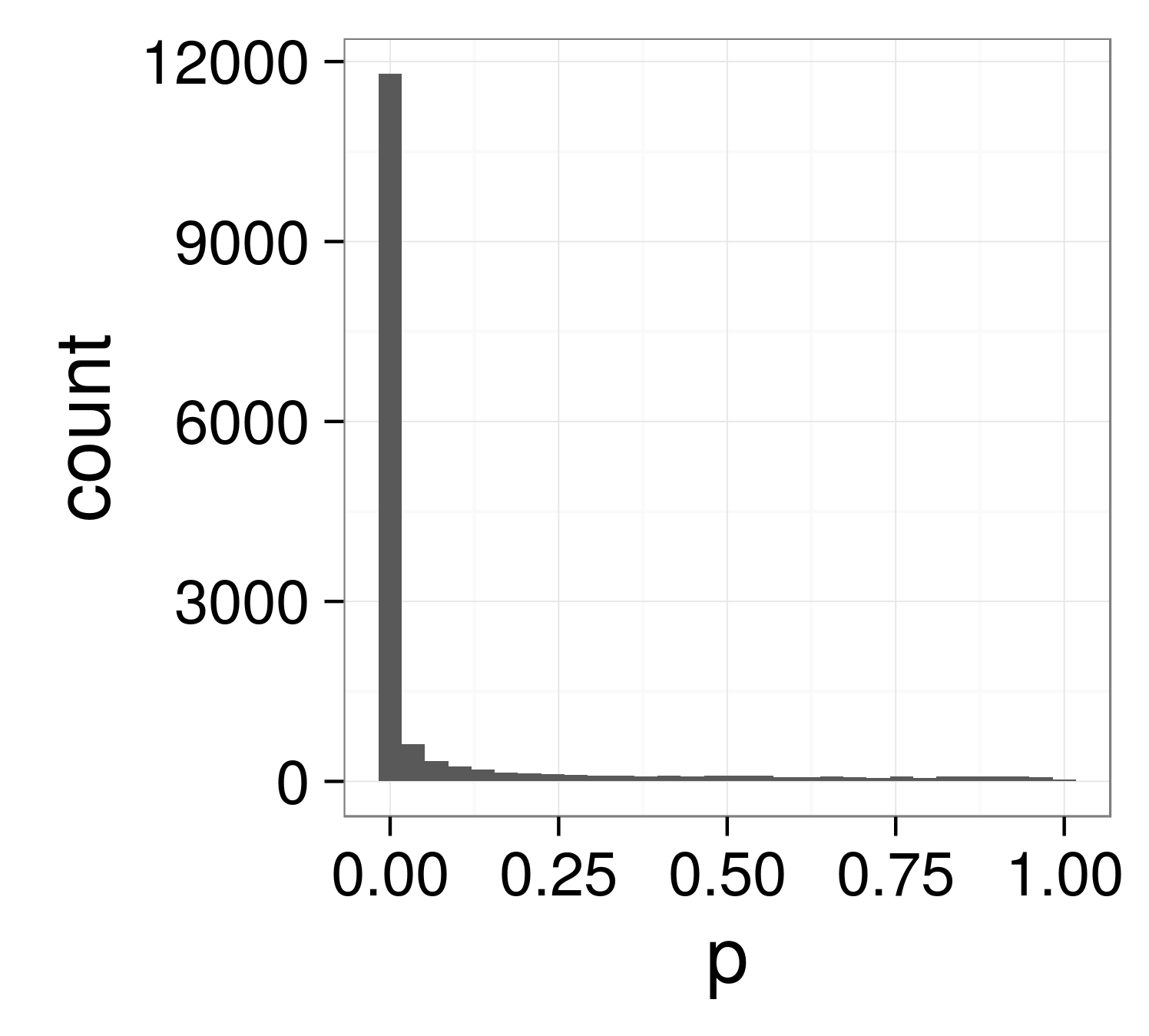}
  \caption{All genes}
  \label{p_all_hist}
  \end{subfigure}
\caption{Histogram of $\tilde{p}_{\text{pred}}$ with $B_{\text{pred}}=10^3$ iterations in each partition from the 10,302 genes that passed the initial screen (log scale), and of all 15,386 p-values (original scale, with values from either our resampling algorithm or the initial screen, depending on the size of the p-value). The dashed red line in \ref{pw_hist} indicates the cutoff from the preliminary screen ($10^{-3}$).}
\label{p_histograms}
\end{figure}

We do not show results with the asymptotic approximation or the beta prime distribution, but we note that the results from the asymptotic approximation were similar to those from the resampling algorithm, though as in Section \ref{sec_SAMC_compare}, $\hat{p}_{\text{asym}}$ tended to be smaller than $\tilde{p}_{\text{pred}}$. The results from the beta prime distribution were not similar to those from the resampling algorithm, which is not surprising, since we do not expect the normalized expression levels to follow an exponential distribution. Results using the delta method are shown in Appendix \ref{G}, and appear to have a similar bias as in the simulations.

Figure \ref{cancer_iter} shows the total number of iterations that our algorithm used for each test, and Figure \ref{mStopPlot} compares $m_{\text{stop}}^{\text{asym}}$, which can be computed beforehand, with the actual stopping partitions $m_\text{stop}$. In this analysis $m_{\text{stop}}^{\text{asym}}$ appears to be biased upward, but we think that it is a reasonable approximation of $m_{\text{stop}}$ for the purposes of obtaining a general estimate of computing time before running the resampling algorithm.

\begin{figure}[htbp]
\centering
  \begin{subfigure}{0.35\textwidth}
  \centering
  \includegraphics[width=1\linewidth]{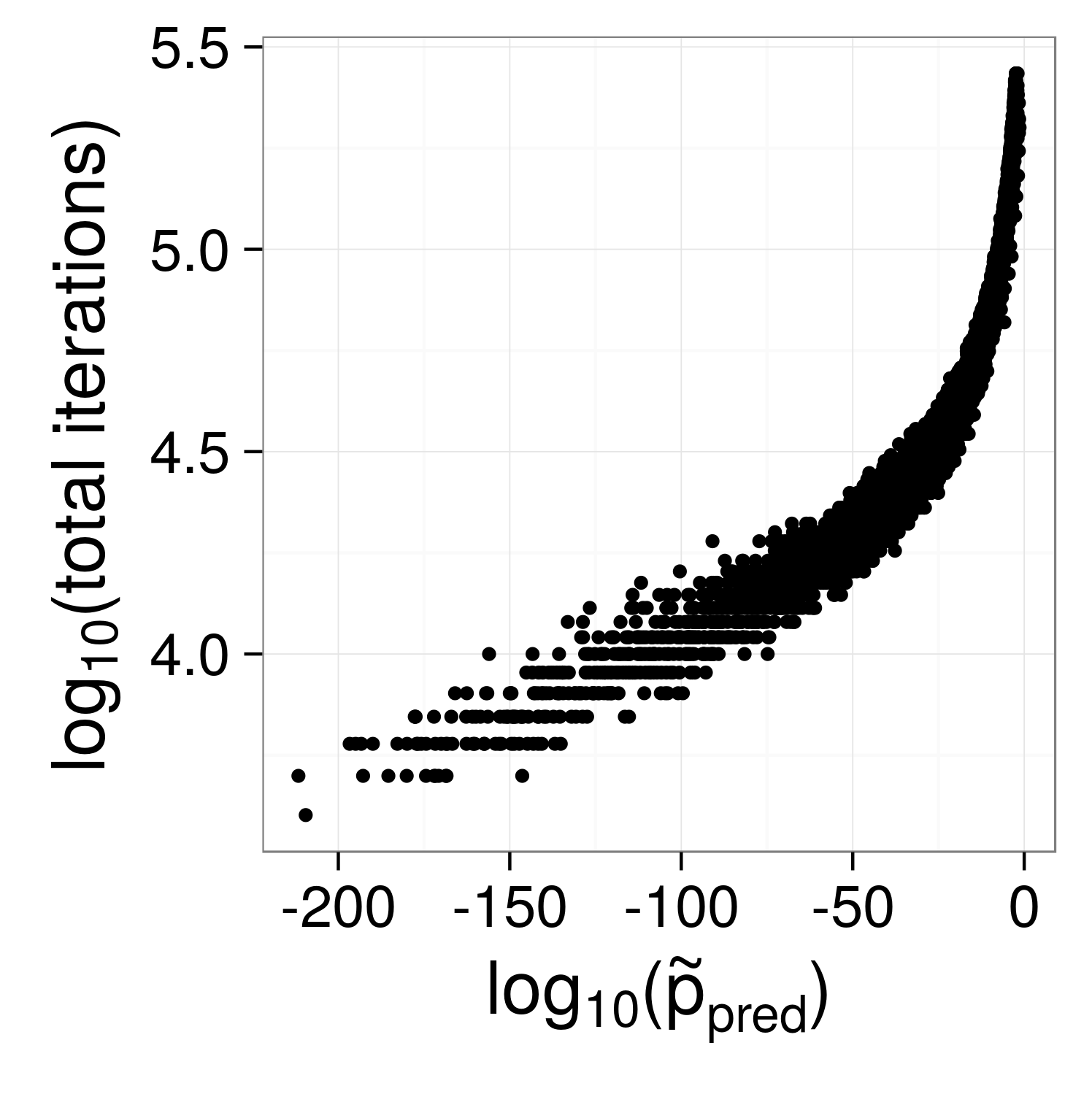}
  \caption{Total iterations}
  \label{cancer_iter}
  \end{subfigure}
  \begin{subfigure}{0.35\textwidth}
  \centering
  \includegraphics[width=1\linewidth]{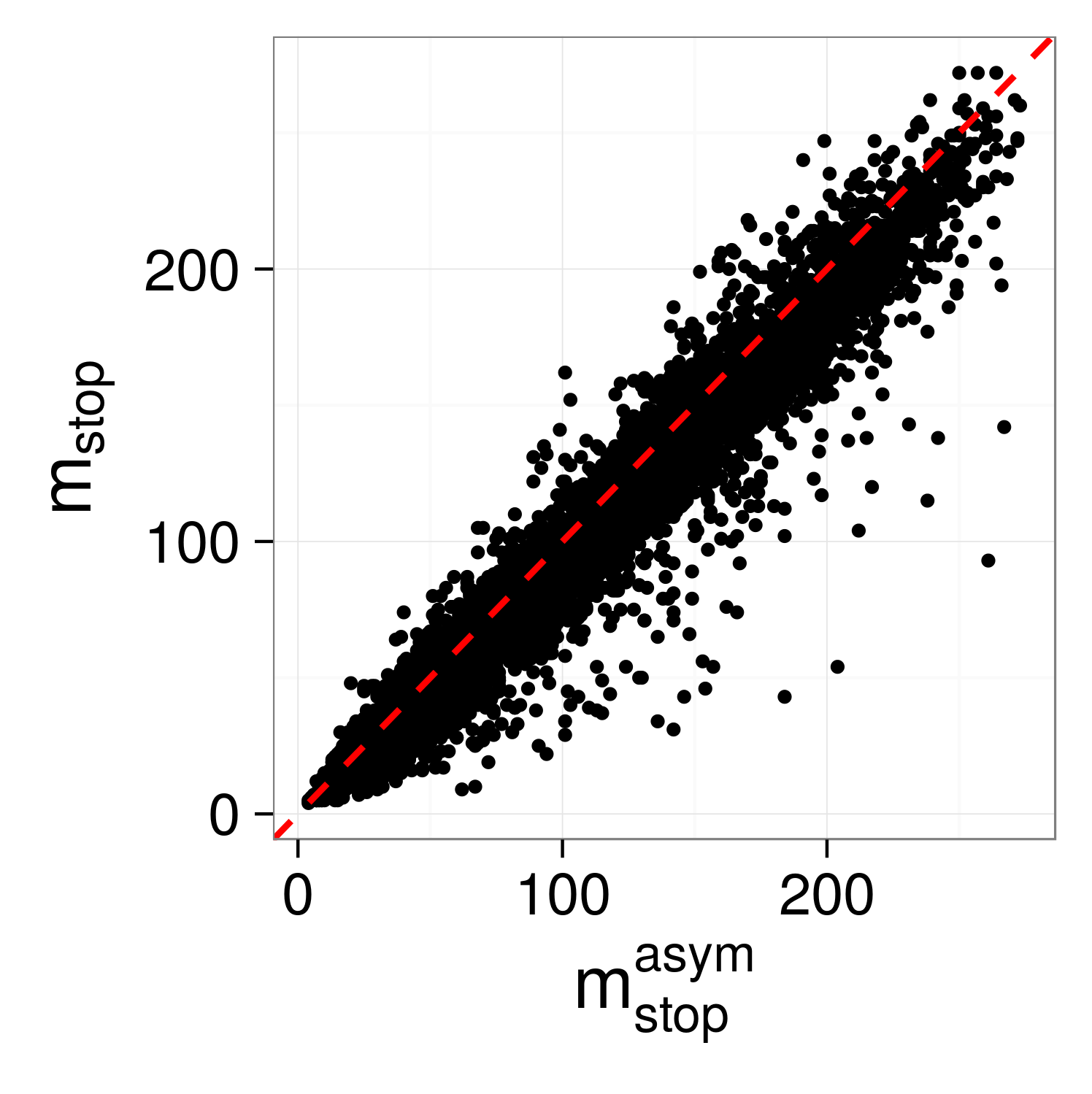}
  \caption{$m_{\text{stop}}^{\text{asym}}$ vs. $m_{\text{stop}}$}
  \label{mStopPlot}
  \end{subfigure}
\caption{Total number of iterations to conduct each test, and comparison between $m_{\text{stop}}^{\text{asym}}$ and $m_{\text{stop}}$. $m_{\text{stop}}$ is the actual stopping partition, which our resampling algorithm determines dynamically. $m_{\text{stop}}^{\text{asym}}$ is our estimate of the stopping partition based on asymptotic approximations, and can be computed before running the algorithm.}
\label{cancer_iter_plots}
\end{figure}

Table \ref{cancerResults} shows the results for the fifteen genes with the smallest p-values, as well as the deviance and AIC from the Poisson regression fit during the resampling algorithm. We report both the estimate from the initial, single run of our algorithm, as well as the $10^{th}, 50^{th}$, and $90^{th}$ quantiles from an additional 1,000 runs. Note that Table \ref{cancerResults} reports the observed ratio of mean(LUAD)/mean(LUSC), and not the max of the ratios that we used in the permutation test. Of the top 15 genes, none had elevated levels in LUAD. Point estimates for all genes are available as supplementary material.

Eleven of the these fifteen genes, shown in bold (\textit{DSG3, KRT5, DSC3, CALM3, TP63, ATP1B3, KRT6B, TRIM29, PVRL1, FAT2}, and \textit{KRT6C}), were also identified by \citet{zhan2015} as being among the most effective genes for distinguishing between LUAD and LUSC. Like us, \citet{zhan2015} used the TCGA dataset, though they based their analysis on the area under the curve from a Wilcoxon rank-sum test. 

\begin{table}[htbp]
\centering
\caption{Fifteen genes with the smallest p-values, and other output from our algorithm with $B_{\text{pred}}=10^3$ iterations in each partition. \textit{Single run} is the value of $\log_{10}(\tilde{p}_{\text{pred}})$ from the initial run of our resampling algorithm, and the quantiles are from 1,000 replicates. For the single run, $m_\text{stop}$ is the partition at which our algorithm stopped, and deviance and AIC are from the Poisson regression fit during the algorithm. Genes shown in bold were identified by \citet{zhan2015} as being among the most effective genes for distinguishing between LUAD and LUSC using the area under the curve from a Wilcoxon rank-sum test.}
\begin{tabular}{ccccccc}
\hline \hline
& \multicolumn{2}{c}{$\log_{10}(\tilde{p}_{\text{pred}})$} &\\
\cline{2-3}
Gene name & Single run & Quantiles ($10^{th}, 50^{th}, 90^{th}$) & $\frac{\text{mean(LUAD)}}{\text{mean(LUSC)}}$ & $m_\text{stop}$ & Deviance & AIC \\
\hline
\textbf{\textit{DSG3}} & -212 & (-217, -208, -200) & 0.0100 & 5 & 40.1 & 68.1 \\
\textbf{\textit{KRT5}} & -210 & (-223, -214, -205) & 0.0107 & 4 & 12.5 & 38.2 \\
\textbf{\textit{DSC3}} & -197 & (-212, -205, -197) & 0.0175 & 6 & 41.5 & 72.1 \\
\textbf{\textit{CALML3}} & -195 & (-198, -188, -179) & 0.0138 & 6 & 57.8 & 90 \\
\textbf{\textit{TP63}} & -193 & (-199, -192, -186) & 0.0308 & 6 & 24.2 & 55.1 \\
\textbf{\textit{ATP1B3}} & -193 & (-196, -188, -181) & 0.225 & 5 & 28.6 & 57.7 \\
\textit{S1PR5} & -190 & (-190, -181, -173) & 0.0775 & 6 & 98.4 & 131 \\
\textbf{\textit{KRT6B}} & -185 & (-189, -181, -173) & 0.0173 & 5 & 45.4 & 76.1 \\
\textbf{\textit{TRIM29}} & -183 & (-188, -181, -174) & 0.0788 & 6 & 39.3 & 72 \\
\textit{JAG1} & -180 & (-186, -179, -172) & 0.170 & 5 & 60.7 & 92.2 \\
\textbf{\textit{PVRL1}} & -180 & (-183, -177, -171) & 0.110 & 6 & 8.33 & 39.2 \\
\textit{CLCA2} & -178 & (-188, -180, -172) & 0.0138 & 7 & 51.6 & 86.8  \\
\textit{BNC1} & -178 & (-197, -188, -181) & 0.0244 & 7 & 76.8 & 112  \\
\textbf{\textit{FAT2}} & -177 & (-186, -179, -173) & 0.0339 & 7 & 53.5 & 89  \\
\textbf{\textit{KRT6C}} & -177 & (-188, -181, -174) & 0.0183 & 6 & 84.8 & 119
\end{tabular}
\label{cancerResults}
\end{table}

We emphasize that in presenting Table \ref{cancerResults}, we are not trying to promote the use of p-values as the sole source of information for making scientific decisions, such as ranking the importance of genes. Instead, we present Table \ref{cancerResults} and make comparisons with the findings of \citet{zhan2015} as a way of verifying the reasonableness of our results. \citet{zhan2015} used different methods to analyze the TCGA data, so we do not expect our results to be exactly the same, but it is encouraging that our results appear to agree to some extent.

We also want to point out that our resampling algorithm can approximate extremely small p-values, but that in doing so, there is a large amount variability in the estimates. However, we think these estimates could still be used as an approximation of the order of magnitude, and note that they would be infeasible to estimate with existing Monte Carlo methods, including the SAMC algorithm. 

\section{Discussion \label{discussion}}

As we have demonstrated through simulations and an application to cancer genomic data, our methods can quickly approximate small permutation p-values (e.g. $<10^{-6}$) for two-sample tests, where the test statistic is the difference or ratio of means. The computational efficiency of our resampling algorithm is particularly notable when estimating extremely small p-values, (e.g. $<10^{-30}$).

As is suggested in the example of Section \ref{sec_partPermSpac}, our method can only detect mean shifts.

As shown in the Simulations and Appendices, the accuracy of our resampling method is comparable to alternative methods, such as SAMC and MCC, though SAMC and MCC are applicable in situations where our methods are not. In particular, MCC can handle any statistic that can be expressed as, or is permutationally equivalent to, an inner product. In addition to these methods, researchers may want to consider the method of \citet{fieller1954} for obtaining confidence intervals for the ratio of means, and the approaches described by \citet{cui2003statistical} for using t-tests and ANOVA to analyze the mean log ratio.

While the reliability of our resampling algorithm will vary based on the empirical distribution of the data, in general, we recommend having at least 15-20 observations in each group for p-values near $1 \times 10^{-6}$, and at least 70-90 observations in each group for p-values near $1 \times 10^{-30}$ (see Appendix \ref{F}). As demonstrated in Section \ref{sec_app}, there can be considerable variability in estimating extraordinarily small p-values, e.g. $1 \times 10^{-200}$. For these extraordinarily small p-values, we recommend that our method be used only to approximate the order of magnitude of the permutation p-value.

In choosing between our resampling algorithm and asymptotic approximation, we recommend using the resampling algorithm when possible for small p-values, as it appears to perform better in simulations. However, as demonstrated in the appendix, our asymptotic method may be preferable for large p-values, as it appears to be more conservative under the null. Both approaches work best for equal sample sizes, and we suggest caution when using with small and highly imbalanced samples.

Depending on a researcher's needs, our algorithm could be useful as a fast approximation of small p-values. This might be helpful, for example, in a screening study involving many genes, in which a researcher wants to quickly get a sense for which genes have p-values that are likely to be below a small threshold. It might also be helpful as a preliminary analysis to approximate the order of magnitude of a p-value, which could help a researcher to determine whether it would be feasible to follow-up with other Monte Carlo methods, such as SAMC, and if so, how many iterations they would need to use. For some situations, such as our analysis in Section \ref{sec_app}, this could save considerable time and resources.

We want to emphasize that our methods are most useful for approximating small permutation p-values. For large p-values, our resampling algorithm is less computationally efficient than simple Monte Carlo sampling. In the context of genomics data, before using our methods, we recommend that researchers use simple Monte Carlo resampling with a small number of resamples (e.g. $10^3$) to identify which genes have p-values below a certain threshold (e.g., $10^{-3}$). However, this is not a requirement.

This paper focuses on two-sample tests, and we plan to explore extensions to multiple samples in future work. As one way to handle multiple samples, we could conduct a union-intersection test \citep[][p. 380]{casella2002statistical}. For example, say we have $k$ samples $\bm{x}_1, \ldots, \bm{x}_k$, and we wish to test the hypothesis $H_0: \cap_{i \ne j}P_{x_i} = P_{x_j}$ versus the alternative $H_1: \cup_{i\ne j} \mu_{x_i} \ne \mu_{x_j}$, where $\mu_i$ is the mean of $P_{x_i}$. Then we could use Algorithm \ref{alg_two_sided_general} to compute p-values for all pairwise differences (or all pairwise ratios), and then take the minimum p-value. As another alternative, we could extend Algorithm \ref{alg_two_sided_general} to use an omnibus statistic, similar to the ANOVA F-test, and use a multi-sample version of (\ref{pmf}). For example, we might use $T = \sum_i n_i |\bar{x}_i - \bar{x} | / n$ where $\bar{x}_i$ and $n_i$ are the mean and sample size, respectively, for group $i$, $\bar{x}$ is the overall mean, and $n = \sum_i n_i$. However, the extension of (\ref{pmf}) to multiple samples is non-trivial. It is also unclear whether the p-values from the multi-sample case would follow the same trends across the partitions as in the two-sample case.

Returning to the two-sample case, while we have focused on the difference and ratio of the means, preliminary efforts to explain the nearly log-linear trend in p-values across the partitions suggests that the same pattern might hold for other smooth functions of the means. In future work, we plan to explore this further. We also plan to investigate potential diagnostics for assessing the reliability of the algorithm's output, possibly based on the AIC from the Poisson regression. Finally, we note that alternative Monte Carlo methods could be incorporated into our resampling algorithm. For example, the SAMC algorithm could be used in place of simple Monte Carlo resampling within each partition. This might further reduce run-time and increase accuracy.

\section*{Supplementary material}

We have implemented our method in the \text{R} package \verb|fastPerm|, available at\\
 \url{https://github.com/bdsegal/fastPerm}. All code for the simulations and analyses in this paper will be available at \url{https://github.com/bdsegal/code-for-fastPerm-paper}.

\section*{Acknowledgments}
We would like to thank the associate editor and two referees for their insightful comments and suggestions.

\bibliographystyle{plainnat}
\bibliography{fastPerm_bib}

\begin{thebibliography}{46}
\providecommand{\natexlab}[1]{#1}
\providecommand{\url}[1]{\texttt{#1}}
\expandafter\ifx\csname urlstyle\endcsname\relax
  \providecommand{\doi}[1]{doi: #1}\else
  \providecommand{\doi}{doi: \begingroup \urlstyle{rm}\Url}\fi

\bibitem[Bartra et~al.(2013)Bartra, McGuire, and Kable]{bartra2013}
Oscar Bartra, Joseph~T McGuire, and Joseph~W Kable.
\newblock The valuation system: a coordinate-based meta-analysis of bold {fMRI}
  experiments examining neural correlates of subjective value.
\newblock \emph{Neuroimage}, 76:\penalty0 412--427, 2013.

\bibitem[Becker and Kl{\"o\ss}ner(2016)]{Becker2016}
Martin Becker and Stefan Kl{\"o\ss}ner.
\newblock \emph{PearsonDS: Pearson Distribution System}, 2016.
\newblock URL \url{https://CRAN.R-project.org/package=PearsonDS}.
\newblock R package version 0.98.

\bibitem[Benjamini and Hochberg(1995)]{benjamini1995controlling}
Yoav Benjamini and Yosef Hochberg.
\newblock Controlling the false discovery rate: a practical and powerful
  approach to multiple testing.
\newblock \emph{Journal of the Royal Statistical Society. Series B
  (Methodological)}, pages 289--300, 1995.

\bibitem[Booth and Butler(1990)]{booth1990}
James~G Booth and Ronald~W Butler.
\newblock Randomization distributions and saddlepoint approximations in
  generalized linear models.
\newblock \emph{Biometrika}, 77\penalty0 (4):\penalty0 787--796, 1990.

\bibitem[Butler(2007)]{butler2007saddlepoint}
Ronald~W Butler.
\newblock \emph{Saddlepoint approximations with applications}, volume~22.
\newblock Cambridge University Press, 2007.

\bibitem[Casella and Berger(2002)]{casella2002statistical}
George Casella and Roger~L Berger.
\newblock \emph{Statistical inference}, volume~2.
\newblock Duxbury Pacific Grove, CA, 2002.

\bibitem[Chung et~al.(2013)Chung, Romano, et~al.]{chung2013exact}
EunYi Chung, Joseph~P Romano, et~al.
\newblock Exact and asymptotically robust permutation tests.
\newblock \emph{The Annals of Statistics}, 41\penalty0 (2):\penalty0 484--507,
  2013.

\bibitem[Conneely and Boehnke(2007)]{conneely2007}
Karen~N Conneely and Michael Boehnke.
\newblock So many correlated tests, so little time! rapid adjustment of
  p-values for multiple correlated tests.
\newblock \emph{The American Journal of Human Genetics}, 81\penalty0
  (6):\penalty0 1158--1168, 2007.

\bibitem[Cui and Churchill(2003)]{cui2003statistical}
Xiangqin Cui and Gary~A Churchill.
\newblock Statistical tests for differential expression in {cDNA} microarray
  experiments.
\newblock \emph{Genome biology}, 4\penalty0 (4):\penalty0 210, 2003.

\bibitem[Doerge and Churchill(1996)]{doerge1996}
Rebecca~W Doerge and Gary~A Churchill.
\newblock Permutation tests for multiple loci affecting a quantitative
  character.
\newblock \emph{Genetics}, 142\penalty0 (1):\penalty0 285--294, 1996.

\bibitem[Fieller(1954)]{fieller1954}
Edgar~C Fieller.
\newblock Some problems in interval estimation.
\newblock \emph{Journal of the Royal Statistical Society. Series B
  (Methodological)}, pages 175--185, 1954.

\bibitem[H{\'a}jek(1961)]{hajek1961some}
Jaroslav H{\'a}jek.
\newblock Some extensions of the {Wald-Wolfowitz-Noether} theorem.
\newblock \emph{The Annals of Mathematical Statistics}, pages 506--523, 1961.

\bibitem[Han et~al.(2009)Han, Kang, and Eskin]{han2009rapid}
Buhm Han, Hyun~Min Kang, and Eleazar Eskin.
\newblock Rapid and accurate multiple testing correction and power estimation
  for millions of correlated markers.
\newblock \emph{PLoS Genetics}, 5\penalty0 (4):\penalty0 1000456, 2009.

\bibitem[Holm(1979)]{holm1979simple}
Sture Holm.
\newblock A simple sequentially rejective multiple test procedure.
\newblock \emph{Scandinavian Journal of Statistics}, pages 65--70, 1979.

\bibitem[Jiang and Salzman(2012)]{jiang2012statistical}
Hui Jiang and Julia Salzman.
\newblock Statistical properties of an early stopping rule for resampling-based
  multiple testing.
\newblock \emph{Biometrika}, 99\penalty0 (4):\penalty0 973--980, 2012.

\bibitem[Johnson et~al.(1995)Johnson, Kotz, and
  Balakrishnan]{johnson2002continuous}
Norman~L Johnson, Samuel Kotz, and N~Balakrishnan.
\newblock \emph{Continuous multivariate distributions}, volume~2.
\newblock John Wiley \& Sons, 2 edition, 1995.

\bibitem[Kimmel and Shamir(2006)]{kimmel2006fast}
Gad Kimmel and Ron Shamir.
\newblock A fast method for computing high-significance disease association in
  large population-based studies.
\newblock \emph{The American Journal of Human Genetics}, 79\penalty0
  (3):\penalty0 481--492, 2006.

\bibitem[Klar(2015)]{klar2015note}
Bernhard Klar.
\newblock A note on gamma difference distributions.
\newblock \emph{Journal of Statistical Computation and Simulation}, 85\penalty0
  (18):\penalty0 3708--3715, 2015.

\bibitem[Knijnenburg et~al.(2009)Knijnenburg, Wessels, Reinders, and
  Shmulevich]{knijnenburg2009fewer}
Theo~A Knijnenburg, Lodewyk~FA Wessels, Marcel~JT Reinders, and Ilya
  Shmulevich.
\newblock Fewer permutations, more accurate p-values.
\newblock \emph{Bioinformatics}, 25\penalty0 (12):\penalty0 i161--i168, 2009.

\bibitem[Leemis and McQueston(2008)]{leemis2008univariate}
Lawrence~M Leemis and Jacquelyn~T McQueston.
\newblock Univariate distribution relationships.
\newblock \emph{The American Statistician}, 62\penalty0 (1):\penalty0 45--53,
  2008.

\bibitem[Lehman(1975)]{lehman1975Nonparametric}
E~L Lehman.
\newblock \emph{Nonparametrics: Statistical Methods Based on Ranks}.
\newblock Holden-Day, 1975.

\bibitem[Lehmann and Romano(2006)]{lehmann2006testing}
Erich~L Lehmann and Joseph~P Romano.
\newblock \emph{Testing statistical hypotheses}.
\newblock Springer Science \& Business Media, 2006.

\bibitem[Lehmann(1999)]{lehmann1999}
Erich~Leo Lehmann.
\newblock \emph{Elements of large-sample theory}.
\newblock Springer Science \& Business Media, 1999.

\bibitem[Li and Dewey(2011)]{li2011rsem}
Bo~Li and Colin~N Dewey.
\newblock Rsem: accurate transcript quantification from rna-seq data with or
  without a reference genome.
\newblock \emph{BMC Bioinformatics}, 12\penalty0 (1):\penalty0 1, 2011.

\bibitem[Li et~al.(2008)Li, Zheng, Li, and Yu]{li2008}
Qizhai Li, Gang Zheng, Zhaohai Li, and Kai Yu.
\newblock Efficient approximation of p-value of the maximum of correlated
  tests, with applications to genome-wide association studies.
\newblock \emph{Annals of Human Genetics}, 72\penalty0 (3):\penalty0 397--406,
  2008.

\bibitem[Liang et~al.(2007)Liang, Liu, and Carroll]{liang2007}
Faming Liang, Chuanhai Liu, and Raymond~J Carroll.
\newblock Stochastic approximation in {Monte Carlo} computation.
\newblock \emph{Journal of the American Statistical Association}, 102\penalty0
  (477):\penalty0 305--320, 2007.

\bibitem[Lugannani and Rice(1980)]{lugannani1980saddle}
Robert Lugannani and Stephen Rice.
\newblock Saddle point approximation for the distribution of the sum of
  independent random variables.
\newblock \emph{Advances in applied probability}, 12\penalty0 (02):\penalty0
  475--490, 1980.

\bibitem[Mathai(1993)]{mathai1993noncentral}
A.M. Mathai.
\newblock On noncentral generalized laplacianness of quadratic forms in normal
  variables.
\newblock \emph{Journal of Multivariate Analysis}, 45\penalty0 (2):\penalty0
  239--246, 1993.

\bibitem[Mehta and Patel(1983)]{mehta1983network}
Cyrus~R Mehta and Nitin~R Patel.
\newblock A network algorithm for performing {Fisher's} exact test in r$\times$
  c contingency tables.
\newblock \emph{Journal of the American Statistical Association}, 78\penalty0
  (382):\penalty0 427--434, 1983.

\bibitem[Morley et~al.(2004)Morley, Molony, Weber, Devlin, Ewens, Spielman, and
  Cheung]{morley2004}
Michael Morley, Cliona~M Molony, Teresa~M Weber, James~L Devlin, Kathryn~G
  Ewens, Richard~S Spielman, and Vivian~G Cheung.
\newblock Genetic analysis of genome-wide variation in human gene expression.
\newblock \emph{Nature}, 430\penalty0 (7001):\penalty0 743--747, 2004.

\bibitem[{National Cancer Institute}(2015)]{TCGA2015}
{National Cancer Institute}.
\newblock The cancer genome atlas, 2015.
\newblock URL \url{http://cancergenome.nih.gov/}.

\bibitem[Nichols and Holmes(2002)]{nichols2002}
Thomas~E Nichols and Andrew~P Holmes.
\newblock Nonparametric permutation tests for functional neuroimaging: a primer
  with examples.
\newblock \emph{Human Brain Mapping}, 15\penalty0 (1):\penalty0 1--25, 2002.

\bibitem[Pahl and Sch{\"a}fer(2010)]{pahl2010permory}
Roman Pahl and Helmut Sch{\"a}fer.
\newblock {PERMORY}: an {LD}-exploiting permutation test algorithm for powerful
  genome-wide association testing.
\newblock \emph{Bioinformatics}, 26\penalty0 (17):\penalty0 2093--2100, 2010.

\bibitem[Phipson and Smyth(2010)]{phipson2010permutation}
Belinda Phipson and Gordon~K Smyth.
\newblock Permutation p-values should never be zero: calculating exact p-values
  when permutations are randomly drawn.
\newblock \emph{Statistical Applications in Genetics and Molecular Biology},
  9\penalty0 (1), 2010.

\bibitem[{R Core Team}(2015)]{R}
{R Core Team}.
\newblock \emph{R: A Language and Environment for Statistical Computing}.
\newblock R Foundation for Statistical Computing, Vienna, Austria, 2015.
\newblock URL \url{http://www.R-project.org/}.

\bibitem[Raj et~al.(2014)Raj, Rothamel, Mostafavi, Ye, Lee, Replogle, Feng,
  Lee, Asinovski, Frohlich, et~al.]{raj2014}
Towfique Raj, Katie Rothamel, Sara Mostafavi, Chun Ye, Mark~N Lee, Joseph~M
  Replogle, Ting Feng, Michelle Lee, Natasha Asinovski, Irene Frohlich, et~al.
\newblock Polarization of the effects of autoimmune and neurodegenerative risk
  alleles in leukocytes.
\newblock \emph{Science}, 344\penalty0 (6183):\penalty0 519--523, 2014.

\bibitem[Robinson(1982)]{robinson1982}
John Robinson.
\newblock Saddlepoint approximations for permutation tests and confidence
  intervals.
\newblock \emph{Journal of the Royal Statistical Society. Series B
  (Methodological)}, pages 91--101, 1982.

\bibitem[Simpson et~al.(2013)Simpson, Lyday, Hayasaka, Marsh, and
  Laurienti]{simpson2013}
Sean~L Simpson, Robert~G Lyday, Satoru Hayasaka, Anthony~P Marsh, and Paul~J
  Laurienti.
\newblock A permutation testing framework to compare groups of brain networks.
\newblock \emph{Frontiers in Computational Neuroscience}, 7, 2013.

\bibitem[Stranger et~al.(2005)Stranger, Forrest, Clark, Minichiello, Deutsch,
  Lyle, Hunt, Kahl, Antonarakis, Tavar{\'e}, et~al.]{stranger2005}
Barbara~E Stranger, Matthew~S Forrest, Andrew~G Clark, Mark~J Minichiello,
  Samuel Deutsch, Robert Lyle, Sarah Hunt, Brenda Kahl, Stylianos~E
  Antonarakis, Simon Tavar{\'e}, et~al.
\newblock Genome-wide associations of gene expression variation in humans.
\newblock \emph{PLoS Genetics}, 1\penalty0 (6):\penalty0 e78, 2005.

\bibitem[Stranger et~al.(2007)Stranger, Forrest, Dunning, Ingle, Beazley,
  Thorne, Redon, Bird, de~Grassi, Lee, et~al.]{stranger2007}
Barbara~E Stranger, Matthew~S Forrest, Mark Dunning, Catherine~E Ingle, Claude
  Beazley, Natalie Thorne, Richard Redon, Christine~P Bird, Anna de~Grassi,
  Charles Lee, et~al.
\newblock Relative impact of nucleotide and copy number variation on gene
  expression phenotypes.
\newblock \emph{Science}, 315\penalty0 (5813):\penalty0 848--853, 2007.

\bibitem[Wang et~al.(2010)Wang, Singh, Zeng, Coleman, Huang, Savich, He,
  Mieczkowski, Grimm, Perou, et~al.]{wang2010mapsplice}
Kai Wang, Darshan Singh, Zheng Zeng, Stephen~J Coleman, Yan Huang, Gleb~L
  Savich, Xiaping He, Piotr Mieczkowski, Sara~A Grimm, Charles~M Perou, et~al.
\newblock Mapsplice: accurate mapping of rna-seq reads for splice junction
  discovery.
\newblock \emph{Nucleic Acids Research}, 38\penalty0 (18):\penalty0 e178--e178,
  2010.

\bibitem[Yu et~al.(2011)Yu, Liang, Ciampa, and Chatterjee]{yu2011}
Kai Yu, Faming Liang, Julia Ciampa, and Nilanjan Chatterjee.
\newblock Efficient p-value evaluation for resampling-based tests.
\newblock \emph{Biostatistics}, pages 1--11, 2011.

\bibitem[Zhan et~al.(2015)Zhan, Yan, Wang, Sun, Wang, Lin, Zhang, Shi, Jiang,
  and Wang]{zhan2015}
Cheng Zhan, Li~Yan, Lin Wang, Yang Sun, Xingxing Wang, Zongwu Lin, Yongxing
  Zhang, Yu~Shi, Wei Jiang, and Qun Wang.
\newblock Identification of immunohistochemical markers for distinguishing lung
  adenocarcinoma from squamous cell carcinoma.
\newblock \emph{Journal of Thoracic Disease}, 7\penalty0 (8):\penalty0 1398,
  2015.

\bibitem[Zhang and Liu(2011)]{Zhang2011}
Yu~Zhang and Jun~S Liu.
\newblock Fast and accurate approximation to significance tests in genome-wide
  association studies.
\newblock \emph{Journal of the American Statistical Association}, 106\penalty0
  (495):\penalty0 846--857, 2011.

\bibitem[Zhou(2014)]{mcc}
Yi-Hui Zhou.
\newblock \emph{{mcc}: Moment Corrected Correlation}, 2014.
\newblock URL \url{https://CRAN.R-project.org/package=mcc}.
\newblock R package version 1.0.

\bibitem[Zhou and Wright(2015)]{zhou2015hypothesis}
Yi-Hui Zhou and Fred~A Wright.
\newblock Hypothesis testing at the extremes: fast and robust association for
  high-throughput data.
\newblock \emph{Biostatistics}, pages 1--15, 2015.

\end{thebibliography}

\begin{appendices}

\renewcommand{\thetable}{S\arabic{table}}   
\renewcommand{\thefigure}{S\arabic{figure}}
\setcounter{figure}{0}    
\setcounter{table}{0}    

\section{Proofs \label{proofs}}

In this appendix, we find the limiting distribution of $T = \max(\bar{x}/\bar{y}, \bar{y}/\bar{x})$ and $T = |\bar{x} - \bar{y}|$ within each partition, and note the corresponding trend in p-values across the partitions. In the process, we prove the results discussed in Section 3. We structure this appendix around the statistic $T=\max(\bar{x}/\bar{y}, \bar{y}/\bar{x})$ to help to motivate our discussion, and then extend our results to the statistic $T = |\bar{x} - \bar{y}|$.

As before, we denote the total sample size as $N$, and we require that $N \ge 2$ to allow for at least one observation in each sample. Let $\{ \mN \}_{N=2}^{\infty}$, $\{\nxN\}_{N=2}^{\infty}$, and $\{\nyN\}_{N=2}^{\infty}$, be sequences, such that $\mN/N \rightarrow \tau$ and $\nxN / N \rightarrow \lambda$ as $N \rightarrow \infty$, and for all $N$, $\nyN = N - \nxN$. We require that, for all $N$, $0 < \mN \le \nxN \le \nyN < N$, and similarly, $0 < \tau \le \lambda \le 1-\lambda < 1$. We denote the observed data as $\bxN$ and $\byN$, which are $\nxN \times 1$ and $\nyN \times 1$ vectors, respectively.

Let $\bd^{\mN}_x = (\delta^{\mN}_{x,1}, \ldots, \delta^{\mN}_{x, \nxN})'$ and $\bd^{\mN}_y = (\delta^{\mN}_{y,1}, \ldots, \delta^{\mN}_{y, \nyN})'$ be $\nxN \times 1$ and $\nyN \times 1$ indicator vectors, respectively, with 1's corresponding to indices of $\bxN$ and $\byN$ that are exchanged for a particular permutation $\pi$ and zero elsewhere. To be specific, for a permutation $\pi \in \Pi(\mN)$, we define $\delta^{\mN}_{x,i}$ and $\delta^{\mN}_{y,j}$ as
\begin{align*}
\delta^{\mN}_{x,i} &=
\begin{cases}
1 \text{ if } \pi(i) >\nxN \\
0 \text{ if } \pi(i) \le \nxN
\end{cases} && i=1,\ldots,\nxN \\
\delta^{\mN}_{y,j} &=
\begin{cases}
1 \text{ if } \pi(\nxN + j) \le \nxN \\
0 \text{ if } \pi(\nxN + j) > \nxN
\end{cases} &&j=1, \ldots, \nyN.
\end{align*}
For completeness, we note that for fixed $m$ and $i \ne j$, and dropping dependence on $N$,
\begin{align*}
\mathbb{E}[\delta^m_{x,i}] &= m/n_x && \mathbb{E}[\delta^m_{y,i}] = m/n_y\\
\Var(\delta^m_{x,i}) &= \frac{m}{n_x} \left(1 - \frac{m}{n_x} \right)  && \Var(\delta^m_{y,i}) = \frac{m}{n_y} \left(1 - \frac{m}{n_y} \right)\\
\Cov(\delta^m_{x,i}, \delta^m_{x,j}) &= \frac{-m(n_x-m)}{n_x^2(n_x-1)} && \Cov(\delta^m_{y,i}, \delta^m_{y,j}) = \frac{-m(n_y-m)}{n_y^2(n_y-1)}
\end{align*}

We denote the ratio of means as $R=\bar{x}/\bar{y}$. With the permutation test, for each permutation $\pi$ in partition $\mN$, we calculate the statistic (ignoring, for now, the max function used earlier)
$$
\Rm = \frac{\frac{1}{\nxN} [ (\bOne-\bdxN)'\bxN + {\bdyN}' \byN]}{\frac{1}{\nyN}[{\bdxN}' \bxN + (\bOne-\bdyN)' \byN]}.
$$
As for all permutation tests, $\Rm$ is conditional on the data. The random quantities are $(\bdxN, \bdyN$), which indexed by $N$, form a triangular array of identically distributed, dependent random variables. We can rewrite $\Rm$ as
\begin{align}
\Rm
&=\frac{\nyN}{\nxN}\left( \frac{ \nxN \bar{x} + \left( \sum_{j=1}^{\nyN} \dN_{y,j} \yN_j -\sum_{i=1}^{\nxN} \dN_{x,i} \xN_i \right)}{ \nyN \bar{y} -\left( \sum_{j=1}^{\nyN} \dN_{y,j} \yN_j - \sum_{i=1}^{\nxN} \dN_{x,i} \xN_i  \right)} \right) \nonumber \\
&= g \underset{\W}{\left( \underbrace{ \sum_{j=1}^{\nyN} \dN_{y,j} \yN_j - \sum_{i=1}^{\nxN} \dN_{x,i} \xN_i } \right)} . \label{gTmax}
\end{align}

Writing $\Rm$ as a function of $\W$ will make it straightforward to generalize our results. We note that conditional on the observed data $\bxN$ and $\byN$, all terms in $\Rm$ are constant except for $\W$.

We can further split $\W$ into
\begin{equation}
\W = \underset{\Wy}{\underbrace{\sum_{j=1}^{\nyN} \dN_{y,j} \yN_j}} - \underset{\Wx}{\underbrace{\sum_{i=1}^{\nxN} \dN_{x,i} \xN_i}} \label{T_scaled}
\end{equation}
Following Theorem 2.8.2 in \citet[][p. 116]{lehmann1999}, restated in Theorem \ref{Wlem} below, under certain conditions both $\Wy$ and $\Wx$ in (\ref{T_scaled}) converge to normal random variables, in which case $\W$ also converges to a normal random variable.

We make a few observations before stating Theorem \ref{Wlem}. The following statements focus on $\Wy$, but equivalent statements apply to $\Wx$. First, we note that conditional on $\byN$, $\Wy$ is the sum of a random sample without replacement of $\mN$ elements from a finite population $\byN = (\yN_1, \ldots, \yN_{\nyN})'$. We consider a sequence of populations of increasing size, $\bm{y}^{N}, N = 2, 3, \ldots$, and random samples $\bvN = (\vN_1,\ldots, \vN_{\mN})'$ from each $\byN$. To be specific, for fixed $\bdyN$, let $\mathcal{K} = \{j: \dN_{y,j} = 1\}$ be the set of indices corresponding to the selected elements of $\byN$. Then writing $\mathcal{K} = \{ k_1, \ldots, k_{\mN} \}$, we have $\bvN = (\yN_{k_1}, \ldots, \yN_{k_{\mN}})'$.

Let $\bar{v}_{\mN} = (1/\mN) \sum_{k = 1}^{\mN} \vN_k$, and $\bar{y}_{\nyN} = (1/\nyN) \sum_{j=1}^{\nyN} \yN_j$. Then as shown by \citet[][p. 116-117]{lehmann1999},
\begin{align*}
\mathbb{E}[\bar{v}_{\mN} | \byN] &= \bar{y}_{\nyN} \\
\Var(\bar{v}_{\mN} | \byN) &= \frac{\nyN - \mN}{\mN (\nyN - 1)} \frac{1}{\nyN} \sum_{j = 1}^{\nyN} (\yN_j - \bar{y}_{\nyN})^2.
\end{align*}
We can now state Theorem \ref{Wlem}.
\begin{theorem}[Theorem 2.8.2, \citet{lehmann1999}]
\begin{equation*}
\frac{\bar{v}_{\mN} - \mathbb{E}[\bar{v}_{\mN} | \byN]}{\sqrt{\Var(\bar{v}_{\mN} | \byN)}} \rightarrow N(0, 1)
\end{equation*}
provided that $\mN \rightarrow \infty$ and $\nyN - \mN \rightarrow \infty$ as $N \rightarrow \infty$, and either of the following two conditions is satisfied:\\
i) $\mN / \nyN$ is bounded away from 0 and 1 as $N \rightarrow \infty$, and 
\begin{equation*}
\frac{\max(\yN_j - \bar{y}_{\nyN})^2}{\sum_j (\yN_j - \bar{y}_{\nyN})^2} \rightarrow 0
\end{equation*}
or\\
ii)
\begin{equation*}
\frac{\max(\yN_j - \bar{y}_{\nyN})^2}{\sum_j(\yN_j - \bar{y}_{\nyN})^2 / \nyN}
\end{equation*}
remains bounded as $N \rightarrow \infty$.
\label{Wlem}
\end{theorem}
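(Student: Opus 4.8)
The plan is to prove this finite-population central limit theorem by reducing the without-replacement sample sum to a sum of independent terms, to which the classical Lindeberg--Feller theorem applies, and then controlling the error introduced by the dependence inherent in sampling without replacement. The statement and argument are phrased for $\bar{v}_{\mN}$ (equivalently $\Wy$), and the corresponding statement for $\Wx$ follows verbatim with the roles of $\bxN$ and $\byN$ reversed.

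First I would center the population. Setting $c_j = \yN_j - \bar{y}_{\nyN}$, so that $\sum_{j=1}^{\nyN} c_j = 0$, the standardized statistic becomes a combinatorial sum $S_N = \left( \sum_{j=1}^{\nyN} \dN_{y,j} c_j \right) \big/ \sqrt{\mN \, \Var(\bar{v}_{\mN} \mid \byN)}$, where the indicator vector $\bdyN$ is uniform over all $\binom{\nyN}{\mN}$ configurations with exactly $\mN$ ones. The dependence among the $\dN_{y,j}$ (they are constrained to sum to $\mN$) is precisely what prevents a direct appeal to a CLT for independent summands, and this is the crux of the argument.

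Next I would replace the without-replacement design by independent Bernoulli (Poisson, or rejective) sampling: let each unit $j$ be included independently with probability $p_N = \mN / \nyN$. Under this scheme $\sum_j B_j c_j$ is a sum of independent, mean-zero terms, and the Lindeberg--Feller theorem delivers asymptotic normality precisely under a Lindeberg condition. I would then show that condition (i)---$\mN/\nyN$ bounded away from $0$ and $1$ together with $\max_j c_j^2 / \sum_j c_j^2 \to 0$---or the stronger uniform bound in condition (ii) implies this Lindeberg condition, using $\mN \to \infty$ and $\nyN - \mN \to \infty$ to keep the variance of the independent sum comparable to $\mN \, \Var(\bar{v}_{\mN} \mid \byN)$.

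The final and hardest step is to transfer normality from the independent design back to the exact without-replacement design. Since simple random sampling without replacement is Poisson sampling conditioned on $\{\sum_j B_j = \mN\}$, I would argue that conditioning on this asymptotically non-degenerate sample-size event preserves the limiting normality. This is the main obstacle: one must show the conditioning does not distort the tail, which I would handle either by a joint characteristic-function / local-CLT argument for the pair (sample sum, sample size), or by a cumulant comparison in the spirit of H\'ajek, verifying that all standardized cumulants of order $\ge 3$ of the combinatorial sum vanish under the stated negligibility conditions. Condition (ii) I would carry through in parallel, as the case in which the Lindeberg condition holds automatically without restricting $\mN / \nyN$, which is what allows the theorem to dispense with the ratio constraint in that branch.
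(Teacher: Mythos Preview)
The paper does not prove this theorem at all: it is quoted verbatim as Theorem 2.8.2 of Lehmann (1999), and immediately after stating it the authors write ``For a proof, please see \citet{lehmann1999} and references therein, particularly the corollary to Lemma 4.1 in \citet{hajek1961some}, as well as Example 4.1 and Section 5 in \citet{hajek1961some}.'' So there is no in-paper argument for you to match; the paper simply invokes the result as a known finite-population CLT and moves on to apply it (Corollaries \ref{Wlem2} and \ref{Tnormal}).

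That said, your sketch is in the spirit of the very references the paper points to. Representing simple random sampling without replacement as Poisson/rejective sampling conditioned on the realized sample size, applying Lindeberg--Feller to the independent Bernoulli sum, and then arguing that conditioning on $\{\sum_j B_j = \mN\}$ preserves the Gaussian limit is essentially H\'ajek's programme; the alternative you mention (directly showing higher-order standardized cumulants vanish) is the route taken in H\'ajek (1961), which is exactly what the paper cites. Your reading of condition (i) as supplying the Lindeberg negligibility when $p_N = \mN/\nyN$ stays away from $0$ and $1$ is correct; your treatment of condition (ii) is a little loose---it is not that Lindeberg ``holds automatically,'' but rather that the uniform bound $\max_j c_j^2 \le C \cdot (\nyN)^{-1}\sum_j c_j^2$, combined with $\mN \to \infty$ and $\nyN - \mN \to \infty$, lets one bound the relevant truncated sums without assuming anything about $\mN/\nyN$. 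The genuinely delicate step you correctly flag is the conditioning: a full proof needs either a local CLT for $\sum_j B_j$ or H\'ajek's cumulant/coupling machinery, which your outline gestures at but does not carry out.
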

For a proof, please see \citet{lehmann1999} and references therein, particularly the corollary to Lemma 4.1 in \citet{hajek1961some}, as well as Example 4.1 and Section 5 in \citet{hajek1961some}. Our constraints on $\mN, \nxN, \nyN$ imply that $\mN \rightarrow \infty$ and $\nyN - \mN \rightarrow \infty$ as $N \rightarrow \infty$. The other conditions in Theorem \ref{Wlem} require that the contribution of each deviance to the sum of deviances becomes negligible as the sample size becomes large. This excludes data coming from distributions with a non-finite variance, such as the Cauchy distribution.

Applying Theorem \ref{Wlem} to $\W$ we get Corollary \ref{Wlem2}.
\begin{corollary}
Conditional on $\bxN$ and $\byN$, and assuming the conditions in Theorem \ref{Wlem} hold,
\begin{equation*}
\frac{\W - \mu(\mN)}{\sqrt{V(\mN)}} \rightarrow N(0,1),
\end{equation*}
where  $\mu(\mN) = \mu_y(\mN) - \mu_x(\mN)$ and $V(\mN) = V_y(\mN) + V_x(\mN)$, with 
\begin{align*}
\mu_y(\mN) &= \mathbb{E}[\Wy | \byN] = \mN \bar{y}_{\nyN} \\
\mu_x(\mN) &= \mathbb{E}[\Wx | \bxN] = \mN\bar{x}_{\nxN} 
\end{align*}
and
\begin{align*}
V_y(\mN) &= \Var(\Wy | \byN) = \mN \frac{\nyN -\mN }{\left(\nyN - 1\right) \nyN} \sum_{j = 1}^{\nyN} (\yN_j - \bar{y}_{\nyN})^2  \\
V_x(\mN) &= \Var(\Wx | \bxN) = \mN \frac{\nxN -\mN}{\left(\nxN - 1\right) \nxN} \sum_{i = 1}^{\nxN} (\xN_i - \bar{x}_{\nxN})^2.
\end{align*}
\label{Wlem2}
\end{corollary}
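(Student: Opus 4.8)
The plan is to reduce the joint statement about $\W = \Wy - \Wx$ to two separate applications of Theorem \ref{Wlem}, one for each component, and then to recombine the two marginal limits into a single standardized limit. The structural fact I would establish first is that, conditional on the data $\bxN, \byN$ and under uniform sampling of $\pi \in \Pi(\mN)$, the indicator vectors $\bdxN$ and $\bdyN$ are \emph{independent}, with $\bdxN$ indicating a uniformly chosen size-$\mN$ subset of $\{1,\ldots,\nxN\}$ and $\bdyN$ indicating a uniformly chosen size-$\mN$ subset of $\{1,\ldots,\nyN\}$. This follows from a bijective description of $\Pi(\mN)$: a permutation in $\Pi(\mN)$ is specified by the set of $\mN$ departing $x$-indices, the set of $\mN$ entering $y$-indices, and an arrangement of elements within the two blocks, and the number of such arrangements is the constant $\nxN!\,\nyN!$, independent of the two chosen sets. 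Hence $\Wy$ and $\Wx$ are independent given the data, the randomness residing entirely in $\pi$.

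Next I would apply Theorem \ref{Wlem} to each component. Since $\Wy = \mN\,\bar{v}_{\mN}$, the conditional moments recorded just before the corollary give $\mathbb{E}[\Wy\mid\byN] = \mu_y(\mN)$ and $\Var(\Wy\mid\byN) = \mN^2\,\Var(\bar{v}_{\mN}\mid\byN) = V_y(\mN)$, and Theorem \ref{Wlem} yields $(\Wy - \mu_y(\mN))/\sqrt{V_y(\mN)} \to N(0,1)$; the same argument applied to the $x$-sample gives the analogous statement for $\Wx$. At this point I would check that the hypotheses of Theorem \ref{Wlem} hold for \emph{both} samples: the regime $\mN/N \to \tau > 0$, $\nxN/N \to \lambda$, $\nyN/N \to 1-\lambda$ ensures $\mN \to \infty$ and $\nyN - \mN \to \infty$, and I would additionally require $\nxN - \mN \to \infty$ (automatic when $\tau < \lambda$, and otherwise folded into the regularity conditions), while the Lindeberg-type tail condition is precisely the one inherited from Theorem \ref{Wlem} and rules out infinite-variance data.

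To recombine, I would write
\[
\frac{\W - \mu(\mN)}{\sqrt{V(\mN)}}
= a_N\,\frac{\Wy - \mu_y(\mN)}{\sqrt{V_y(\mN)}}
- b_N\,\frac{\Wx - \mu_x(\mN)}{\sqrt{V_x(\mN)}},
\qquad a_N = \sqrt{\tfrac{V_y(\mN)}{V(\mN)}},\; b_N = \sqrt{\tfrac{V_x(\mN)}{V(\mN)}},
\]
so that $a_N^2 + b_N^2 = 1$, and by independence $\mathbb{E}[\W\mid\cdot] = \mu(\mN)$ and $\Var(\W\mid\cdot) = V(\mN)$. Since the two standardized summands are independent given the data, the conditional characteristic function of the left side factorizes as $\phi_{S_y}(a_N t)\,\phi_{S_x}(-b_N t)$, where each $\phi$ converges to the $N(0,1)$ characteristic function uniformly on compacts. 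To avoid assuming that the variance ratio $a_N^2$ converges, I would use a subsequence argument: along any subsequence, pass to a further subsequence on which $(a_N,b_N) \to (a,b)$ with $a^2 + b^2 = 1$; then the product of characteristic functions tends to $\exp\{-(a^2+b^2)t^2/2\} = \exp\{-t^2/2\}$, and Lévy continuity gives the $N(0,1)$ limit \emph{irrespective} of $(a,b)$. Because every subsequence admits a further subsequence with this same limit, the full sequence converges, establishing the corollary.

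The main obstacle I anticipate is this final recombination, rather than either marginal CLT: producing a single limit for the pooled standardization by $\sqrt{V_y(\mN) + V_x(\mN)}$ from two marginal CLTs whose relative scale need not stabilize. The characteristic-function factorization together with the subsequence device resolves it cleanly, precisely because the limiting law does not depend on the limiting weights. The conditional independence of $\Wy$ and $\Wx$, while it requires the counting argument above, is routine once stated.
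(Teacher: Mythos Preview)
Your proposal is correct and follows the same overall architecture as the paper: establish that $\Wx$ and $\Wy$ are independent given the data, apply Theorem~\ref{Wlem} to each component to obtain marginal CLTs, and combine. The paper proves the independence step as a separate lemma (Lemma~\ref{Wind}), asserting $\dN_{x,i}\perp\dN_{y,j}$ from the sampling description and deducing zero covariance; your bijective counting argument makes the same point more explicitly and yields exact independence of $\Wx$ and $\Wy$ for every $N$, not merely zero covariance.

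The one genuine difference is in the recombination. The paper argues informally: for large $N$ each of $\Wx,\Wy$ is approximately normal, uncorrelated normals are independent, and the sum of independent normals is normal, hence $\W$ is approximately $N(\mu(\mN),V(\mN))$. This is heuristic, since $\Wx,\Wy$ are only asymptotically normal and the ``uncorrelated normals are independent'' step is redundant given that exact independence was already available from the lemma. Your route---factor the conditional characteristic function using exact independence, then use a subsequence argument on the weights $(a_N,b_N)$ with $a_N^2+b_N^2=1$---is more rigorous and, importantly, does not require the variance ratio $V_y(\mN)/V(\mN)$ to converge. This buys you a clean proof even when the relative scales of the two populations fluctuate, whereas the paper's argument implicitly treats the limiting normals as if they were exact. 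Your version is the tighter of the two.
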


Before proving Corollary \ref{Wlem2}, we state Lemma \ref{Wind}.
\begin{lemma}
For all $m$ and $N$, $\Cov \left(W_x(m^N),  W_y(m^N) | \bx, \by \right) = 0$.
\label{Wind}
\end{lemma}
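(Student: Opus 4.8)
The plan is to reduce the claim to a statement about the joint law of the random indicator vectors $\bdxN$ and $\bdyN$. Conditional on the observed data $\bx$ and $\by$, both $\Wx = \sum_{i=1}^{\nxN} \dN_{x,i}\xN_i$ and $\Wy = \sum_{j=1}^{\nyN} \dN_{y,j}\yN_j$ are linear in these indicators with constant coefficients, so bilinearity of covariance gives
$$
\Cov\left(\Wx, \Wy \mid \bx, \by\right) = \sum_{i=1}^{\nxN}\sum_{j=1}^{\nyN} \xN_i\, \yN_j\, \Cov\left(\dN_{x,i}, \dN_{y,j}\right).
$$
Thus the whole lemma reduces to showing that every cross-covariance $\Cov(\dN_{x,i}, \dN_{y,j})$ vanishes, i.e.\ that each ``leaving-$x$'' indicator is uncorrelated with each ``entering-$x$'' indicator.

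Next I would compute the joint distribution of $(\bdxN, \bdyN)$ under uniform sampling from $\Pi(\mN)$. The key observation is that a permutation $\pi \in \Pi(\mN)$ is pinned down by three ingredients: the set $S_x \subseteq \{1,\ldots,\nxN\}$ of $x$-indices sent to $y$-positions (the support of $\bdxN$, of size $\mN$), the set $S_y \subseteq \{1,\ldots,\nyN\}$ of $y$-indices sent to $x$-positions (the support of $\bdyN$, of size $\mN$), and the actual bijective arrangement within the first $\nxN$ positions and within the last $\nyN$ positions. The arrangement contributes a factor of $\nxN!\,\nyN!$ that is identical for every choice of $(S_x, S_y)$ and does not affect the values of $\bdxN$ or $\bdyN$. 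Consequently, under uniform sampling the pair $(S_x, S_y)$, equivalently $(\bdxN, \bdyN)$, is uniform over the product of the two constraint sets $\{|S_x| = \mN\}\times\{|S_y|=\mN\}$, which is precisely the assertion that $\bdxN$ and $\bdyN$ are independent.

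I would then close the argument in either of two equivalent ways. Directly, independence of $\bdxN$ and $\bdyN$ forces $\Cov(\dN_{x,i}, \dN_{y,j})=0$ for all $i,j$, so the displayed double sum collapses to $0$. Alternatively, for an explicit count, the multiplicity argument yields
$$
\Pr\left(\dN_{x,i}=1,\ \dN_{y,j}=1\right) = \frac{\binom{\nxN-1}{\mN-1}\binom{\nyN-1}{\mN-1}}{\binom{\nxN}{\mN}\binom{\nyN}{\mN}} = \frac{\mN}{\nxN}\cdot\frac{\mN}{\nyN} = \mathbb{E}\left[\dN_{x,i}\right]\mathbb{E}\left[\dN_{y,j}\right],
$$
so each cross-covariance is zero. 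Either route settles the claim for every fixed $m$ and $N$; note that the partition constraint $0 < \mN \le \nxN$ keeps all binomials well defined.

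The hard part will be the combinatorial decoupling step: one must argue carefully that choosing which $x$-indices leave and which $y$-indices enter are independent under uniform sampling from $\Pi(\mN)$. The apparent subtlety is that $\bdxN$ and $\bdyN$ are each individually constrained to have support of size exactly $\mN$, reflecting the fixed group sizes, which might suggest dependence. The resolution is that these two constraints are separable --- the equal-size requirement links the supports only through the common value $\mN$ that \emph{defines} the partition, not through any genuinely joint restriction --- so once $m$ is fixed the supports $S_x$ and $S_y$ range freely over their respective families of $\mN$-subsets. Verifying rigorously that the per-$(S_x, S_y)$ multiplicity $\nxN!\,\nyN!$ is truly constant, and that $\dN_{x,i}$ and $\dN_{y,j}$ depend on $\pi$ only through $S_x$ and $S_y$, is the crux that makes the independence precise.
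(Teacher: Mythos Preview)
Your proposal is correct and follows essentially the same approach as the paper: both reduce the covariance to the cross-indicator terms and then invoke independence of $\dN_{x,i}$ and $\dN_{y,j}$ under uniform sampling from $\Pi(\mN)$. The paper asserts this independence in one sentence as ``a direct consequence of the sampling procedure,'' whereas you supply the combinatorial justification (constant per-$(S_x,S_y)$ multiplicity $\nxN!\,\nyN!$ and the explicit joint probability count); your version is therefore a more careful rendering of the same argument rather than a different route.
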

\begin{proof}
First note that for all $m, N, i$, and $j$, $\dN_{x,i} \perp \dN_{y,j}$. This is a direct consequence of the sampling procedure implied by the permutation, in which we condition on the number of elements to exchange ($m$), and then randomly select $m$ elements of $\bx$ and $m$ elements of $\by$. Therefore, dropping dependence on $N$, 
\begin{align*}
\mathbb{E}\left[W_x(m) W_y(m) | \bx, \by \right] &= \mathbb{E}\left[\left(\sum_i \delta^m_{x,i} x_i \right) \left(\sum_j \delta^m_{y,j} y_j\right) | \bx, \by \right] \\
 &= \mathbb{E} \left[\sum_{i} \sum_j \delta^m_{x,i} x_i \delta^m_{y,j} y_j | \bx, \by\right] \\
 &= \sum_i \sum_j x_i y_j \mathbb{E}\left[\delta^m_{x,i} \delta^m_{y,j} \right] \\
 &= \sum_i x_i \mathbb{E} \left[\delta^m_{x,i}\right] \sum_j y_j \mathbb{E}\left[\delta^m_{y,j}\right] && (\delta^m_{x,i} \perp \delta^m_{y,j}) \\
 &= \mathbb{E}\left[W_x(m) | \bx \right] \mathbb{E}\left[W_y(m) | \by \right].
\end{align*}
Therefore,  
\begin{align*}
\Cov \left( W_x(m^N),  W_y(m^N) | \bx, \by \right) &=\mathbb{E} \left[ W_x(m^N) W_y(m^N) |\bx, \by \right] - \mathbb{E} \left[W_x(m^N) | \bx \right] \mathbb{E}\left[W_y(m^N) | \by \right] \\
 &= 0
\end{align*}
which proves the lemma.
\end{proof}

Now we prove Corollary \ref{Wlem2}.
\begin{proof} (Corollary \ref{Wlem2}) Working with the first term in (\ref{T_scaled}), we have
\begin{align*}
\Wy =\sum_{j=1}^{\nyN} \dN_{y,j} \yN_j = \mN \bar{v}_{\mN}
\end{align*}
Therefore, as shown by \citet[][p. 116-117]{lehmann1999},
\begin{align*}
\mu_y(\mN) = \mathbb{E}[\Wy | \byN] &= \mN \bar{y}_{\nyN}
\end{align*}
and 
\begin{align*}
V_y(\mN) = \Var(\Wy | \byN) &= \left(\mN \right)^2 \frac{\nyN - \mN}{\mN (\nyN - 1)} \frac{1}{\nyN} \sum_{j = 1}^{\nyN} (\yN_j - \bar{y}_{\nyN})^2. \\
&= {\mN} \frac{\nyN - \mN}{(\nyN - 1)} \frac{1}{\nyN} \sum_{j = 1}^{\nyN} (\yN_j - \bar{y}_{\nyN})^2.
\end{align*}
Similarly, working with the second term in (\ref{T_scaled}),
\begin{align*}
\mu_x(\mN) &=\mathbb{E}[\Wx | \bxN] = \mN \bar{x}_{\nxN} \\
V_x(\mN) &= \mN \frac{\nxN -\mN}{(\nxN - 1)} \frac{1}{\nxN} \sum_{i = 1}^{\nxN} (\xN_i - \bar{x}_{\nxN})^2.
\end{align*}
Applying Theorem \ref{Wlem}, we have
\begin{align*}
\frac{\Wy - \mu_y(\mN)}{\sqrt{V_y(\mN)}} = \frac{\bar{v}_{\mN} - \mathbb{E}[\bar{v}_{\mN} | \byN]}{\sqrt{\Var(\bar{v}_{\mN} | \byN)}} &\rightarrow N(0,1).
\end{align*}
Similarly, we have
\begin{align*}
\frac{\Wx - \mu_x(\mN)}{\sqrt{V_x(\mN)}} &\rightarrow N(0,1).
\end{align*}
Then by Lemma \ref{Wind}, we have
\begin{align*}
\Var \left( \Wy - \Wx \right | \bx, \by) = V_y(\mN) + V_x(\mN),
\end{align*}
Also, since uncorrelated normal random variables are independent, for $N$ sufficiently large we also have $W_y(\mN) \perp W_x(\mN)$.
Since the sum of independent normal random variables is also normal, for $N$ sufficiently large we have
$$
\W = \Wy -\Wx \sim N\left(\mu_y(\mN) - \mu_x(\mN), V_y(\mN) + V_x(\mN) \right).
$$
Equivalently, we have
\begin{equation*}
\frac{\W - \mu(\mN)}{\sqrt{ V(\mN)}} \rightarrow N(0,1)
\end{equation*}
which proves the corollary.
\end{proof}

In the rest of this appendix, we assume that $N$ is sufficiently large for asymptotic normality to hold for any given partition $m$, and so we drop $N$ from the notation.

In Corollary \ref{Tnormal} below, we apply the delta method to show that for sufficiently large $N$, the permutation distribution of the statistic $R(m)$ is normal within each partition.

\begin{corollary}
Let $R = g(W)$, and suppose that $g'(\mu(m)) > 0$ exists. Also, suppose the conditions in Theorem \ref{Wlem} hold. Then conditional on the observed data $\bx, \by$, and for N sufficiently large, $R(m) \sim N(\nu(m), \sigma^2(m))$, where the mean $\nu(m)$ and variance $\sigma^2(m)$ are functions of the partition $m$.
\label{Tnormal}
\end{corollary}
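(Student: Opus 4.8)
The plan is to derive the limiting distribution of $R(m) = g(W(m))$ from the asymptotic normality of $W(m)$ established in Corollary \ref{Wlem2} via the delta method. Corollary \ref{Wlem2} gives $(W(m) - \mu(m))/\sqrt{V(m)} \to N(0,1)$, and the corollary assumes $g$ is differentiable at $\mu(m)$ with $g'(\mu(m)) > 0$. The delta method then identifies the limit as
\[
R(m) = g(W(m)) \sim N\left( g(\mu(m)),\, [g'(\mu(m))]^2\, V(m) \right),
\]
so that $\nu(m) = g(\mu(m))$ and $\sigma^2(m) = [g'(\mu(m))]^2 V(m)$.

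First I would write $W(m) = \mu(m) + \sqrt{V(m)}\, Z_N$, where $Z_N \to N(0,1)$ by Corollary \ref{Wlem2}, and expand $g$ to first order about $\mu(m)$:
\[
g(W(m)) = g(\mu(m)) + g'(\mu(m))\big(W(m) - \mu(m)\big) + \tfrac{1}{2} g''(\tilde W)\big(W(m) - \mu(m)\big)^2,
\]
for some $\tilde W$ between $W(m)$ and $\mu(m)$. Dividing by $\sigma(m) = g'(\mu(m)) \sqrt{V(m)}$, the linear term reduces to $Z_N$, which delivers the claimed normal limit, provided the standardized quadratic remainder vanishes.

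The main obstacle is that the textbook delta method does not apply verbatim, because both the centering $\mu(m)$ and the function $g$ depend on $N$ (the latter through $n_x$ and $n_y$, since $g$ is really a sequence $g_N$), so I would instead control the Taylor remainder directly. Using the explicit form $g(W) = (n_y/n_x)(n_x \bar{x} + W)/(n_y \bar{y} - W)$, the denominator $n_y \bar{y} - W$ stays bounded away from zero in a neighborhood of $\mu(m)$ for positive data, so $g$, $g'$, and $g''$ are well defined there; a routine order-of-magnitude count then gives $\mu(m) = O(N)$, $\sqrt{V(m)} = O(\sqrt{N})$, $g'(\mu(m)) = O(N^{-1})$, and $g''(\tilde W) = O(N^{-2})$. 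Consequently $\sigma(m) = O(N^{-1/2})$ while the remainder is $O_p(N^{-1})$, so the remainder divided by $\sigma(m)$ is $O_p(N^{-1/2}) \to 0$; an application of Slutsky's theorem then upgrades the Taylor expansion to the stated convergence in distribution.

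Finally, I would observe that $\nu(m) = g(\mu(m))$ and $\sigma^2(m) = [g'(\mu(m))]^2 V(m)$ are explicit functions of the partition $m$ (and of the fixed data $\bx, \by$), since Corollary \ref{Wlem2} expresses $\mu(m)$ and $V(m)$ as functions of $m$, which completes the argument.
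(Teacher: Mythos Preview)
Your proposal is correct and follows the same route as the paper: invoke Corollary~\ref{Wlem2} for the asymptotic normality of $W(m)$, then apply the delta method to $g$. The paper's own proof is a two-line appeal to these two facts; your version is considerably more careful, in particular addressing the subtlety that $g = g_N$ varies with $N$ and verifying explicitly via order counts that the Taylor remainder is negligible, which the paper simply takes for granted.
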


\begin{proof}
By Corollary \ref{Wlem2}, $W$ is normal for $N$ sufficiently large. Then by the delta method, $g(W)$ also converges to a normal distribution, which proves the corollary.
\end{proof}

The result in Corollary \ref{Tnormal} for the one-sided statistic $R(m)$ leads directly to the following result for its two-sided counterpart $T(m)$, given in Corollary \ref{cor1} below. However, we first define a new function $g^{\text{con}}$, the conjugate of $g$.

\begin{definition}[Conjugate $\gc$]
Let $g(W)$ be a function of $W$, in which the only other terms are the constants $n_x$, $n_y$, $\bar{x}$ and $\bar{y}$. The conjugate $\gc$ is formed by switching the place of $n_x$ with $n_y$, and $\bar{x}$ with $\bar{y}$, and reversing the sign on each occurrence of $W$.
\end{definition}
For example, for $R = \bar{x} / \bar{y}$, we have
\begin{align*}
g = \frac{n_y}{n_x} \left( \frac{n_x \bar{x} + W}{n_y \bar{y} - W} \right) && \gc = \frac{n_x}{n_y} \left( \frac{n_y \bar{y} - W}{n_x \bar{x} + W} \right)
\end{align*}
and for $R = \bar{x} - \bar{y}$, as shown below, we have
\begin{align*}
g = \bar{x} - \bar{y} + \left(\frac{1}{n_x} + \frac{1}{n_y} \right) W && \gc = \bar{y} - \bar{x} - \left(\frac{1}{n_y} + \frac{1}{n_x} \right) W.
\end{align*}
We also note that $(\gc)^{\text{con}} = g$.

\begin{corollary}
Let $T(m) = \max \left(g(W(m)), \gc(W(m)) \right)$. Under the conditions of Theorem \ref{Wlem}, and assuming $g'(\mu(m)) > 0$ and $(\gc)'(\mu(m)) > 0$ exist, then for $N$ sufficiently large,
\begin{equation}
\Pr \left( T(m) \ge t | \bx, \by \right) \approx 2-\Phi \left[\xi \left(\min \left\{ m, 2 m_{\text{max}} - m \right\} \right) \right]-\Phi \left[\xic \left(\min \left\{ m, 2 m_{\text{max}} - m \right\} \right) \right], \label{p_val}
\end{equation}
where $\Phi$ is the standard normal CDF, $m_{\text{max}}=\arg \max f(m)$, and
\begin{align*}
\xi(m) = \frac{t- g \left( \mu(m) \right) }{g'\left( \mu(m) \right) \sqrt{V(m)}}, && \xic(m) = \frac{t- \gc \left( \mu(m) \right) }{(\gc)'\left( \mu(m) \right) \sqrt{V(m)}}.
\end{align*}
\label{cor1}
\end{corollary}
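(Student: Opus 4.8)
The plan is to reduce this two-sided statement to the one-sided normal approximation already in hand from Corollary~\ref{Tnormal}, splitting the event $\{T(m) \ge t\}$ according to which branch of the $\max$ is active and then recombining. First I would write $\{T(m) \ge t\} = \{g(W(m)) \ge t\} \cup \{\gc(W(m)) \ge t\}$ and apply inclusion--exclusion. Corollary~\ref{Tnormal} (the delta method layered on top of the normality of $W$ from Corollary~\ref{Wlem2}) gives that, for $N$ large, $g(W(m))$ is approximately normal with mean $g(\mu(m))$ and standardizing scale $g'(\mu(m))\sqrt{V(m)}$, so that its upper tail is $\Pr(g(W(m)) \ge t) \approx 1 - \Phi(\xi(m))$; the same argument applied to $\gc$ yields $\Pr(\gc(W(m)) \ge t) \approx 1 - \Phi(\xic(m))$. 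These are exactly the two subtracted terms in (\ref{p_val}), so this first stage produces the right functional form once the intersection is disposed of.

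Next I would argue that the two events are disjoint for the relevant threshold, so inclusion--exclusion collapses to a sum. The explicit conjugates make this immediate: for the ratio statistic one checks that $g(W)\,\gc(W) = 1$ identically in $W$, so $g(W(m))$ and $\gc(W(m))$ are reciprocals and cannot simultaneously exceed a common $t > 1$; for the difference statistic $\gc(W) = -g(W)$, so they cannot simultaneously exceed a common $t > 0$. Since the observed two-sided statistic always satisfies $t > 1$ (ratio) or $t > 0$ (difference), the intersection is empty and $\Pr(T(m) \ge t) \approx [1 - \Phi(\xi(m))] + [1 - \Phi(\xic(m))] = 2 - \Phi(\xi(m)) - \Phi(\xic(m))$. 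This disjointness is really the same fact as the opposite monotonicity of $g$ and $\gc$ in $W$: one branch selects the upper tail of the single normal variable $W(m)$ and the other its lower tail, which is also the sign bookkeeping one must respect in order to orient each standardization toward the upper tail.

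The last stage is to justify replacing $m$ by $\min\{m, 2m_{\max} - m\}$, i.e. to show that $\Pr(T(m) \ge t)$ is symmetric about $m_{\max}$. I would prove this exactly in the balanced case $n_x = n_y = n$, where $m_{\max} = n/2$: there $V(m) = V_x(m) + V_y(m)$ is proportional to $m(n-m)$ and hence symmetric about $m_{\max}$, while the reciprocal (respectively sign-flip) relationship between $g$ and $\gc$, together with the linearity of $\mu(m)$ in $m$, forces $g(\mu(m_{\max}+k)) = \gc(\mu(m_{\max}-k))$ with the standardizing scales matching. Substituting into the definitions of $\xi$ and $\xic$ gives $\xi(m_{\max}+k) = \xic(m_{\max}-k)$ and $\xic(m_{\max}+k) = \xi(m_{\max}-k)$, so the symmetric combination $2 - \Phi(\xi) - \Phi(\xic)$ is invariant under $m \mapsto 2m_{\max} - m$. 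In the balanced case the $\min$ therefore changes nothing, and the reflected evaluation is merely a convenient way to write an already-symmetric function.

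The main obstacle is the unbalanced case $n_x \ne n_y$. There $m_{\max} = \arg\max_m f(m)$ no longer sits at the common center of $V_x$ and $V_y$, and the reflection $m \mapsto 2m_{\max} - m$ no longer maps the mean and variance functions exactly onto their conjugate counterparts, so the symmetry --- and hence the substitution in (\ref{p_val}) --- holds only approximately. I would not try to bound this discrepancy sharply; instead, following the discussion in Section~\ref{sec_statDist}, I would treat the reflected evaluation as an approximation whose error is governed by the degree of imbalance, which is precisely the origin of the ``$\approx$'' in the statement and of the claim that accuracy deteriorates as the group sizes become unequal. Turning that heuristic into a quantitative error bound would be the genuinely hard part of a fully rigorous proof.
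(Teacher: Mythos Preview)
Your proposal is correct and follows essentially the same route as the paper: decompose $\{T(m)\ge t\}$ into the two branch events, apply the delta-method normal approximation from Corollary~\ref{Tnormal} to each, and then invoke the (exact for $n_x=n_y$, approximate otherwise) symmetry about $m_{\max}$ to justify the $\min\{m,2m_{\max}-m\}$ substitution. If anything, you are more careful than the paper, which simply writes the two tail probabilities as a sum without explicitly arguing disjointness and asserts the symmetry in a single sentence.
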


\begin{proof}
For $m = 1,\ldots, m_{\max}$,
\begin{align}
\Pr(T(m) > t | \bx, \by) &= \Pr \left(g(W(m)) > t \right) + \Pr \left(\gc(W(m)) > t \right) \nonumber \\
  &= \Pr \left( Z > \frac{t- g \left( \mu(m) \right) }{g'\left( \mu(m) \right) \sqrt{V(m)}} \right) + \Pr \left( Z > \frac{t- \gc \left( \mu(m) \right) }{(\gc)'\left( \mu(m) \right) \sqrt{V(m)}} \right) \label{delta} \\
  &\approx 1-\Phi \left( \xi(m) \right) + 1 - \Phi \left( \xic(m) \right) \label{asymApprox}
\end{align}
where $Z$ is a standard normal random variable, and $\mu(m)$ and $V(m)$ are given in Corollary \ref{Wlem2}. Line (\ref{delta}) follows from the delta method, and line (\ref{asymApprox}) follows from Corollary \ref{Tnormal} for $N$ sufficiently large. 

Furthermore, since the partition-specific p-values are approximately symmetric about $m_{\max}$ (the p-values are exactly symmetric for equal sample sizes, and the symmetry worsens as the sample sizes become more imbalanced), we can get the asymptotic p-value for any partition $m  = 1,\ldots, \min(n_y, n_x)$ as 
$$
\Pr \left( T(m) \ge t | \bx, \by \right) \approx 2-\Phi \left[\xi \left(\min \left\{ m, 2 m_{\text{max}} - m \right\} \right) \right]-\Phi \left[\xic \left(\min \left\{ m, 2 m_{\text{max}} - m \right\} \right) \right].
$$
This proves the corollary.
\end{proof}

We also note that when $n_x=n_y$, the approximation in (\ref{p_val}) is equally accurate for partitions both smaller and larger than $m_{\text{max}}$. However, for unequal sample size, the approximation is less accurate for partitions larger than $m_{\text{max}}$.

In summary, and to be explicit with all quantities, for the statistic $T=\max(\bar{x}/\bar{y}, \bar{y}/\bar{x})$, we have
$$
\Pr \left( T(m) \ge t | \bx, \by \right) \approx 2-\Phi \left[\xi \left(\min \left\{ m, 2 m_{\text{max}} - m \right\} \right) \right]-\Phi \left[\xic \left(\min \left\{ m, 2 m_{\text{max}} - m \right\} \right) \right]
$$
where $\Phi$ is the standard normal CDF, $m_{\max} = \arg \max_m f(m)$, $f(m) = \binom{N}{n_{\min}}^{-1}\binom{n_x}{m}\binom{n_y}{m}$, $n_{\min} = \min(n_x, n_y)$, and \footnote{Implementation note: In the fastPerm package, we use the same function to compute $\xi$ and $\xic$, just reversing the order of the arguments related to $x$ and $y$.}
\begin{align*}
\xi(m) &= \frac{t- g \left( \mu(m) \right) }{g'\left( \mu(m) \right) \sqrt{V(m)}} && \xic(m) = \frac{t- \gc \left( \mu(m) \right) }{(\gc)'\left( \mu(m) \right) \sqrt{V(m)}} \\
g(\mu(m))&=\frac{n_y}{n_x}\left( \frac{ n_x \bar{x} + \mu(m)}{ n_y \bar{y} - \mu(m)} \right) && \gc(\mu(m))=\frac{n_x}{n_y}\left( \frac{ n_y \bar{y} - \mu(m)}{ n_x \bar{x} + \mu(m)} \right) \\
g' \left( \mu(m)\right) &= \frac{n_y}{n_x} \left( \frac{ n_y \bar{y} + n_x \bar{x} }{\left( n_y \bar{y} - \mu(m) \right)^2} \right) && (\gc)' \left( \mu(m)\right) = -\frac{n_y}{n_x} \left( \frac{n_x \bar{x} + n_y \bar{y} }{\left( n_x \bar{x} + \mu(m) \right)^2} \right)
\end{align*}
where
\begin{align*}
\mu(m) &= m(\bar{y} - \bar{x}) \\
V(m) &= m \left[\frac{n_y - m}{n_y (n_y - 1)} \sum_{j=1}^{n_y}(y_j - \bar{y})^2 + \frac{n_x - m}{n_x (n_x - 1)} \sum_{i=1}^{n_x}(x_i - \bar{x})^2 \right].
\end{align*}

To get the expected trend shown Figure 1 of Section 3, we set $t=\bar{x}/\bar{y}$ (the observed test statistic), and substituted expected values for the sample quantities. For example, if we generated the elements of $\bx$ as iid realizations of a random variable $X$, then we substituted $\mathbb{E}[X]$ for $\bar{x}$, and $\Var(X)$ for $(n_x -1)^{-1} \sum_{i=1}^{n_x}(x_i - \bar{x})^2$.

We note that we get similar results for $T=|\bar{x}-\bar{y}|$. In this case we can write $R(m)$ as
\begin{align*}
R(m) &= \frac{1}{n_x} [ (\bOne-\bd_x)' \bx + \bd_y' \by] - \frac{1}{n_y}[{\bd_x}' \bx + (\bOne-\bd_y)' \by]\\
  &=\bar{x}-\bar{y} + \left( \frac{1}{n_x} + \frac{1}{n_y} \right) W(m)
\end{align*}
Therefore, (\ref{p_val}) still holds, but with $g(\mu(m)) = \bar{x} - \bar{y} +\left(n_x^{-1} + n_y^{-1} \right) \mu(m) $, and $g'(\mu(m)) = \left(n_x^{-1} + n_y^{-1} \right)$, with the corresponding results for $\gc$ and $(\gc)'$. All other formula are the same as those given for the ratio of means. The resulting trend for $T = |\bar{x} - \bar{y}|$ is shown in Figure \ref{pApprox_Diff_100_100} with $n_x = n_y = 100$, $\mu_x = 4, \mu_y = 2$, and $\sigma_x^2 = \sigma_y^2 = 1$.

\begin{figure}[htbp]
\centering
\includegraphics[scale = 0.48]{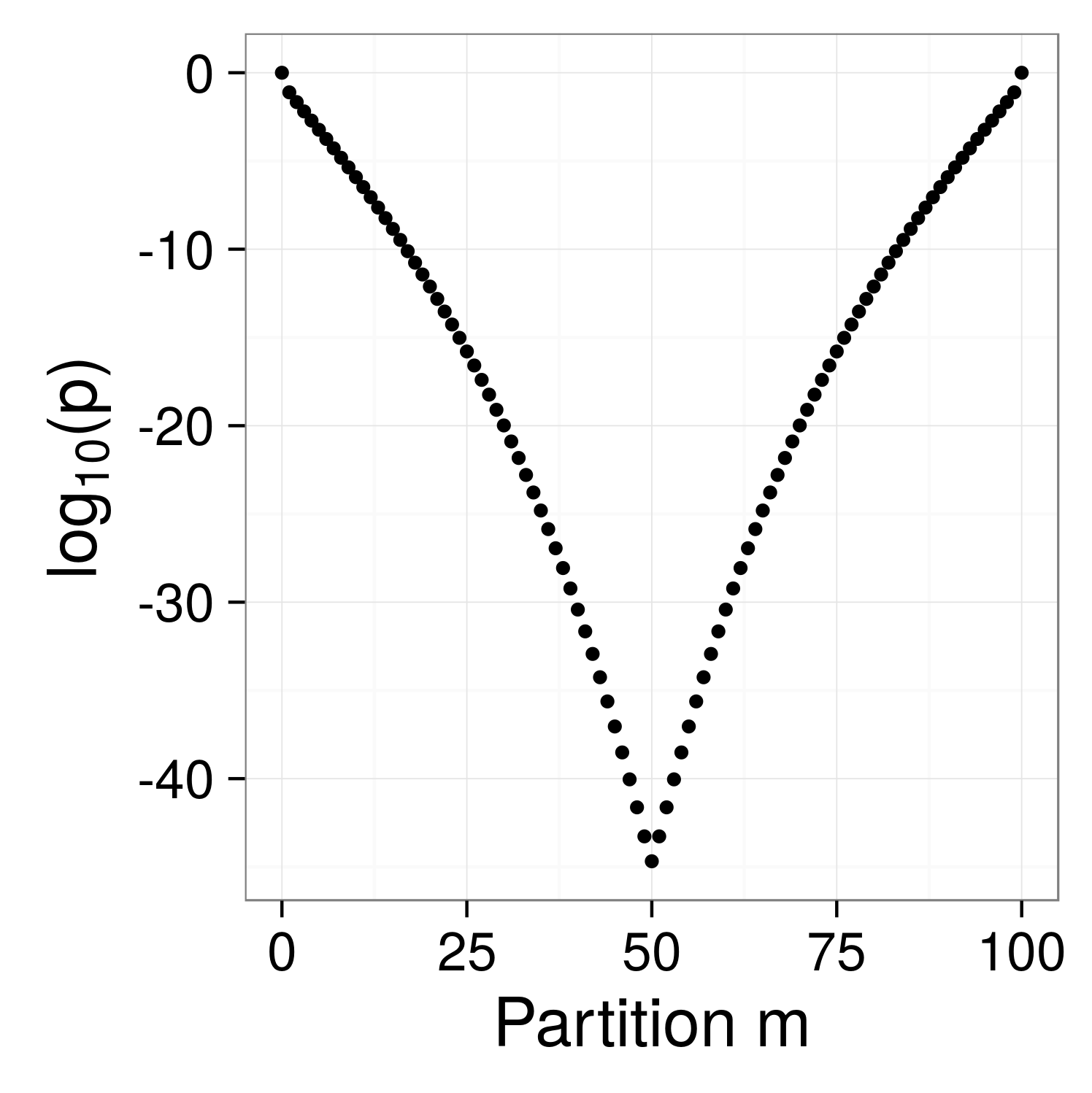}
\caption{Trend in p-values across the partitions for $T = |\bar{x} - \bar{y}|$ with $n_x = n_y = 100$, $\mu_x = 4, \mu_y = 2, \sigma_x^2 = \sigma_y^2 = 1$.}
\label{pApprox_Diff_100_100}
\end{figure}

While this appendix shows that the nearly log concave trend holds for both $T = |\bar{x} - \bar{y}|$ and $T = \max(\bar{x}/ \bar{y}, \bar{y} / \bar{x})$, we speculate that the trend might be similar for other statistics that are smooth functions of the means. The results for $R = \bar{x} / \bar{y}$ and $R = \bar{x} - \bar{y}$ above suggest a general formulation of permutation statistics in terms of $W$, which might help with this effort. This general formulation is presented in Proposition \ref{generalize}, in which $R$ could be any statistic of the sample means, and not necessarily the ratio or difference of means.
\begin{proposition}
Let $R(m) = R(\bar{x}^*(m), \bar{y}^*(m) | \bx, \by)$ be any statistic of the permuted sample means conditional on observed data $\bx, \by$, where $\bar{x}^*(m)$ and $\bar{y}^*(m)$ are the means of a permuted dataset $({\bx^*}', {\by^*}')'$ corresponding to a permutation $\pi \in \Pi (m)$. Then we can always write $R(m) = g(W(m))$ for some function $g$ that is conditional on the observed data $\bm{x}, \bm{y}$.
\label{generalize}
\end{proposition}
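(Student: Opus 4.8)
The plan is to reduce everything to a single algebraic observation: both permuted sample means are \emph{affine} functions of the scalar $W(m)$, after which the conclusion $R(m) = g(W(m))$ follows by mere composition. First I would expand the permuted means in terms of the exchange indicators exactly as in Section \ref{sec_dist}, writing $\bar{x}^*(m) = n_x^{-1}[(\bOne - \bm{\delta}_x)'\bx + \bm{\delta}_y'\by]$ and $\bar{y}^*(m) = n_y^{-1}[\bm{\delta}_x'\bx + (\bOne - \bm{\delta}_y)'\by]$. Recalling that $W_x(m) = \bm{\delta}_x'\bx$ and $W_y(m) = \bm{\delta}_y'\by$, these simplify to $\bar{x}^*(m) = \bar{x} + (W_y(m) - W_x(m))/n_x$ and $\bar{y}^*(m) = \bar{y} - (W_y(m) - W_x(m))/n_y$.

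The crucial step is to notice that $W_x(m)$ and $W_y(m)$ enter both means \emph{only} through their difference $W(m) = W_y(m) - W_x(m)$, and in no other combination. This is forced by the exchange structure of a permutation in $\Pi(m)$: each element removed from $\bx$ is compensated by an element brought in from $\by$, so the net displacement of either mean is governed by the single quantity $W(m)$. Consequently $\bar{x}^*(m) = \bar{x} + W(m)/n_x$ and $\bar{y}^*(m) = \bar{y} - W(m)/n_y$, each affine in $W(m)$ with slopes and intercepts that are fixed constants once we condition on $\bx, \by$.

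To finish, since $R(m)$ is by hypothesis a function of the pair $(\bar{x}^*(m), \bar{y}^*(m))$, I would substitute the two affine expressions to obtain $R(m) = R(\bar{x} + W(m)/n_x, \bar{y} - W(m)/n_y)$, which is precisely of the form $g(W(m))$ with $g(w) = R(\bar{x} + w/n_x, \bar{y} - w/n_y)$. The dependence on the observed data is absorbed into $g$, as the statement permits, and one can check this recovers the explicit $g$ given earlier for $R = \bar{x}/\bar{y}$ and $R = \bar{x} - \bar{y}$.

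The argument is essentially a one-line algebraic reduction, so I expect no genuine obstacle; the only point warranting care is the claim that $W_x$ and $W_y$ appear solely through their difference. Establishing this is what collapses a seemingly two-dimensional dependence (on $W_x$ and $W_y$ separately) down to the one-dimensional $W(m)$, and it is exactly this collapse that gives the general formulation its content and makes the downstream asymptotic analysis via Corollaries \ref{Wlem2} and \ref{Tnormal} go through for arbitrary smooth functions of the means.
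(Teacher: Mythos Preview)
Your proposal is correct and follows essentially the same approach as the paper: both arguments hinge on the affine identities $\bar{x}^*(m) = \bar{x} + W(m)/n_x$ and $\bar{y}^*(m) = \bar{y} - W(m)/n_y$, after which the result follows by substitution. Your write-up is simply more explicit about deriving those identities from the indicator expansions and about the fact that $W_x$ and $W_y$ enter only through their difference, whereas the paper states the affine relations directly and absorbs the constants into $g$ in one line.
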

\begin{proof}
Noting that $\bar{x}^*(m) = \bar{x} + (1/n_x)W(m)$ and $\bar{y}^*(m) = \bar{y} - (1/n_y)W(m)$, we have
\begin{align*}
R \left( \bar{x}^*(m), \bar{y}^*(m) | \bx, \by \right) &= R \left( \bar{x} + (1/n_x)W(m), \bar{y} - (1/n_y)W(m) | \bx, \by \right) \\
 &=g \left( W(m) \right)
\end{align*}
where the last line follows, because $\bar{x}$, $\bar{y}$, $n_x$, and $n_y$ are constant conditional on $\bx, \by$, and can be absorbed into the functional form of $R$. This proves the proposition.
\end{proof}

Then for any one-sided statistic $R = g(W)$, in order for asymptotic normality to hold within each partition for the corresponding two-sided statistic $T$, we must check the conditions in Theorem \ref{Wlem} and Corollary \ref{cor1}. However, it remains to be shown what additional properties are required to ensure a log concave trend in p-values across the partitions, so we must currently check new statistics on a case-by-base basis.

\section{Parametric p-values for ratios and differences of gamma random variables \label{B}}

The results in this appendix are used in our simulations of exponential and gamma random variables to obtain parametric approximations to the permutation p-value.

\subsection{Ratio of means}

Let $F$ be the beta prime CDF (also called a Pearson type VI distribution \citep[][p. 248]{johnson2002continuous}), and let $f$ be the corresponding pdf. Following the form given by \citet{Becker2016}, for $Z \sim F$,
$$
f_Z(z; \alpha_1, \alpha_2, s, q) = \frac{\left(\frac{z - q}{s}\right)^{\alpha_1 -1} \left(1 + \frac{z - q}{s}\right)^{-\alpha_1 - \alpha_2}}{sB(\alpha_1, \alpha_2)}.
$$
As we show in this section, if $X_i \overset{iid}{\sim} \text{Exp}(\lambda_x)$ and  $Y_j \overset{iid}{\sim} \text{Exp}(\lambda_y)$, then $\bar{X} / \bar{Y}$ and $\bar{Y}/\bar{X}$ follow scaled beta prime distributions. This allows us to approximate the permutation p-value for the ratio statistic with the p-value from a beta prime. We note that the beta prime p-value is not conditional on the data, so is not the same as the permutation p-value, but simulation results suggest it is a reasonable approximation.

As in Section 5.2, let $x_i, i=1,\ldots,n_x$, and $y_j, j=1,\ldots,n_y$, be realizations of the respective random variables $X_i \overset{iid}{\sim} \text{Exp}(\lambda_x)$ and  $Y_j \overset{iid}{\sim} \text{Exp}(\lambda_y)$. We consider the quantity $T=\max \left(\bar{X}/\bar{Y}, \bar{Y}/\bar{X} \right)$, and denote the observed statistic as $t=\max \left(\bar{x}/\bar{y}, \bar{y}/\bar{x} \right)$. Then under the null hypothesis that $\lambda_x=\lambda_y$, the p-value from the beta prime distribution is
\begin{align}
p_\beta &= \Pr(T \ge t) \nonumber \\
  &= \Pr \left( \max(\bar{X}/\bar{Y}, \bar{Y}/\bar{X}) \ge t \right) \nonumber \\
  &= \Pr \left( \left\{ \bar{X}/\bar{Y} \ge t \right\} \cup \left\{\bar{Y}/\bar{X} \ge t \right\} \right) \nonumber \\
  &= \Pr \left(\bar{X}/\bar{Y} \ge t \right) + \Pr \left( \bar{Y}/\bar{X} \ge t \right) && (\text{disjoint})  \label{disjoint}\\
  &= \Pr \left( \frac{n_y}{n_x} \frac{\sum_i X_i}{\sum_j Y_j} \ge t \right) + \Pr \left(\frac{n_x}{n_y} \frac{\sum_j Y_j}{\sum_i X_i} \ge t \right) \label{transform} \\
  &=1-F \left(t; \alpha_1 = n_x, \alpha_2 = n_y, s = n_y / n_x, q = 0 \right) \label{betaResult}\\
  &+1 - F \left(t; \alpha_1 = n_y, \alpha_2 = n_x, s = n_x/n_y, q = 0 \right). \nonumber
\end{align}
The equality in (\ref{disjoint}) follows because $\bar{X}/\bar{Y} \ge t$ if and only if $\bar{Y}/\bar{X} < t$ (assuming $t \ne 1$, which occurs with probability one). Line \ref{betaResult} follows from well known properties, which we outline below.

Let $U_1 \sim \text{Gamma}(\alpha_1,\lambda_1)$ and $U_2 \sim \text{Gamma}(\alpha_2,\lambda_2)$, $U_1 \perp U_2$. Also, let $V_1 = h_1(U_1, U_2) = U_1 / U_2$ and $V_2 = h_2(U_1,U_2) = U_2$, with respective inverse transformations $U_1 = h^{-1}(V_1, V_2)=V_1 V_2$ and $U_2 = h^{-1}(V_1, V_2)=V_2$. Then, noting that the Jacobian of the transformation is
\begin{equation*}
J = \begin{vmatrix}
\partial u_1 / \partial v_1 & \partial u_1 / \partial v_2 \\
\partial u_2 / \partial v_1 & \partial u_2 / \partial v_2
\end{vmatrix} =
\begin{vmatrix}
v_2 & v_1 \\
0 & 1
\end{vmatrix} = 
v_2,
\end{equation*}
we have
\begin{align*}
f_{V_1, V_2}(v_1, v_2) &= f_{U_1, U_2} \left( h^{-1}_1(v_1, v_2), h^{-1}_2(v_1, v_2) \right) |J| \\
&= \frac{ \lambda_1^{\alpha_1}}{\Gamma(\alpha_1)} (v_1 v_2)^{\alpha_1 - 1} e^{-\lambda_1 v_1 v_2}
\frac{ \lambda_2^{\alpha_2}}{\Gamma(\alpha_2)} v_2^{\alpha_2 - 1} e^{-\lambda_2 v_2} v_2\\
&= \frac{ \lambda_1^{\alpha_1} \lambda_2^{\alpha_2} }{\Gamma(\alpha_1) \Gamma(\alpha_2)} v_1^{\alpha_1 - 1} v_2^{\alpha_1 + \alpha_2 -1} e^{-(\lambda_1 v_1 + \lambda_2)v_2}.
\end{align*}
Therefore,
\begin{align*}
f_{V_1}(v_1) &= \int_0^\infty f_{V_1, V_2}(v_1, v_2) d v_2 \\
&= \frac{ \lambda_1^{\alpha_1} \lambda_2^{\alpha_2} }{\Gamma(\alpha_1) \Gamma(\alpha_2)} v_1^{\alpha_1 - 1} \int_0^\infty v_2^{\alpha_1 + \alpha_2 -1} e^{-(\lambda_1 v_1 + \lambda_2)v_2} d v_2 \\
&= \frac{ \lambda_1^{\alpha_1} \lambda_2^{\alpha_2} }{\Gamma(\alpha_1) \Gamma(\alpha_2)} v_1^{\alpha_1 - 1} \frac{\Gamma(\alpha_1 + \alpha_2)}{(\lambda_1 v_1 + \lambda_2)^{\alpha_1 + \alpha_2}} \\
&= \frac{\left( \frac{v_1}{\lambda_2 / \lambda_1} \right)^{\alpha_1 - 1} \left(1 + \frac{v_1}{\lambda_2/\lambda_1} \right)^{-\alpha_1 - \alpha_2}}{(\lambda_2 / \lambda_1) B(\alpha_1, \alpha_2)},
\end{align*}
which is a generalized beta prime distribution with shape parameters $\alpha_1$ and $\alpha_2$, location parameter $q=0$, and scale parameter $s=\lambda_2/\lambda_1$. In the case where $\lambda_1 = \lambda_2$, this simplifies to the standard beta prime distribution with shape parameters $\alpha_1$ and $\alpha_2$. This shows that whenever $U_1 \sim \text{Gamma}(\alpha_1,\lambda)$, $U_2 \sim \text{Gamma}(\alpha_2,\lambda)$, and $U_1 \perp U_2$, we have $U_1/U_2 \sim F(\alpha_1, \alpha_2, 1, 0)$. We note that some sources report that for $U_1 \sim \text{Gamma}(\alpha_1,\lambda_1)$, $U_2 \sim \text{Gamma}(\alpha_2,\lambda_2)$, and $U_1 \perp U_2$, we have $U_1/U_2 \sim F(\alpha_1, \alpha_2, 1, 0)$ if $\lambda_1 = \lambda_2 =1$ \citep[e.g.,][]{leemis2008univariate}. However, as shown above, this also holds when $\lambda_1 = \lambda_2 \ne 1$.

Now let $Z = \left(\sum_{i=1}^{n_x} X_i \right) / \left(\sum_{j=1}^{n_y}Y_i \right)$. Since $X_i \overset{iid}{\sim} \text{Exp}(\lambda_x)$ and  $Y_j \overset{iid}{\sim} \text{Exp}(\lambda_y)$, it follows that $\sum_{i=1}^{n_x} X_i \sim \text{Gamma}(n_x, \lambda_x)$ and $\sum_{j=1}^{n_y} Y_j \sim \text{Gamma}(n_y, \lambda_y)$. Then under the null of $\lambda_x = \lambda_y$, the results above give $Z \sim F(n_x, n_y, 1, 0)$ and $1/Z \sim F(n_y, n_x, 1, 0)$.

Now let $W = s Z$. Then by a change of variable, we have
$$
f_W(w) = \frac{\left(\frac{w}{s}\right)^{n_x -1} \left(1 + \frac{w}{s}\right)^{-n_x - n_y}}{sB(n_x, n_y)}
$$
Applying this result to (\ref{transform}), we have 
\begin{align*}
\frac{n_y}{n_x}\frac{\sum_{i=1}^{n_x} X_i}{\sum_{j=1}^{n_y} Y_j} \sim F(\cdot;n_x, n_y, n_y/n_x, 0)
\end{align*}
and similarly,
\begin{align*}
\frac{n_x}{n_y}\frac{\sum_{j=1}^{n_y} Y_j}{\sum_{i=1}^{n_x} X_i} \sim F(\cdot;n_y, n_x, n_x/n_y, 0)
\end{align*}
Then (\ref{betaResult}) follows directly from (\ref{transform}).

To compute the CDF values for the scaled beta prime, we used the \verb|PearsonDS| package for \textsf{R} \citep{Becker2016}.

Similarly, for $X_i \overset{iid}{\sim} \text{Gamma}(\alpha_x, \lambda_x)$ and $Y_j \overset{iid}{\sim} \text{Gamma}(\alpha_y, \lambda_y)$, $\sum_{i=1}^{n_x} X_i \sim \text{Gamma}(n_x \alpha_x, \lambda_x)$ and $\sum_{j=1}^{n_y} Y_j \sim \text{Gamma}(n_y \alpha_y, \lambda_y)$. Then letting $Z = \left( \sum_{i=1}^{n_x} X_i \right) / \left(\sum_{j=1}^{n_y} Y_j \right)$, under the null of $H_0: \lambda_x = \lambda_y, \alpha_x = \alpha_y = \alpha$, we have $Z \sim F(\cdot ; n_x \alpha, n_y \alpha, 1, 0)$ and $1 / Z \sim F(\cdot ; n_y \alpha, n_x \alpha, 1, 0)$, so $(n_y / n_x) Z \sim F(\cdot; n_x \alpha, n_y \alpha, n_y / n_x, 0)$ and $(n_x / n_y) Z \sim F(\cdot; n_y \alpha, n_x \alpha, n_x / n_y, 0)$. Therefore, 
\begin{align*}
p_\beta = \Pr(T \ge t) &=1-F \left(t; n_x \alpha, n_y \alpha, n_y / n_x, 0 \right) \\
  &+1 - F \left(t; n_y \alpha, n_x \alpha, n_x/n_y, 0 \right). \nonumber
\end{align*}

In our simulations, we generate data under the alternative $H_1: \lambda_x \ne \lambda_y, \alpha_x = \alpha_y = \alpha$ for various values of $\alpha$. While we would ideally also simulate under the alternatives $H_1: \lambda_x \ne \lambda_y, \alpha_x \ne \alpha_y$ and $H_1: \lambda_x = \lambda_y, \alpha_x \ne \alpha_y$, in these scenarios it is not possible to compute $p_\beta$ under $H_0: \alpha_x = \alpha_y, \lambda_x = \lambda_y$, because $\alpha$ does not disappear in the beta prime density. Consequently, we would have to compute $p_\beta$ under $H_0: \alpha_x = \alpha_y = c, \lambda_x = \lambda_y$ for a specified constant $c$. This is more restrictive than the null hypothesis for the permutation test, and consequently, it would not be clear how to compute the parametric p-value to use as an approximation for the true permutation p-value.

\subsection{Difference in means}

Let $M_X(t)$ be the moment generating function (MGF) for random variable $X$. Then for $X_i \overset{iid}{\sim} \text{Gamma}(\alpha, \lambda), i=1,\ldots, n$, $M_{\frac{1}{n}\sum_{i^=1}^n X_i}(t) = M_{\sum_{i=1}^n X_i}(t/n) = \prod_{i=1}^n M_{X_i}(t/n) = \left(1 - \frac{1}{n \lambda} t\right)^{-n \alpha}$, which is the MGF for a Gamma distribution with shape parameter $n \alpha$ and rate parameter $n \lambda$. Therefore, $\bar{X} \sim \text{Gamma}(n \alpha, n\lambda)$.

Then for $X_i \overset{iid}{\sim} \text{Gamma}(\alpha, \lambda), i=1,\ldots, n_x$ and $Y_j \overset{iid}{\sim} \text{Gamma}(\alpha, \lambda), j=1,\ldots, n_y$, the distribution of $\bar{X} - \bar{Y}$, which we denote as $G$, is \citep{klar2015note} 
\begin{equation}
G(z) = \Pr(\bar{X} - \bar{Y} \le z) = C \underset{A(z)}{\underbrace{\int_{\max\{0, -z\}}^\infty v^{n_y \alpha -1} e^{-n_y \lambda v} \gamma \left(n_x\alpha, n_x \lambda(v + z) \right) dv}},
\label{difGammaCDF}
\end{equation}
where $\gamma(a,b) = \int_0^b s^{a-1}e^{-s} ds$ is the lower incomplete gamma function, and\\
$C = (n_y\lambda)^{n_y \alpha} / \left(\Gamma(n_x\alpha) \Gamma(n_y\alpha)\right)$ is the normalizing constant. \citet{klar2015note} also gives the density for $\bar{X} - \bar{Y}$, which was derived by \citet{mathai1993noncentral}.

However, we found that in our simulations, several scenarios led to numerical problems in computing (\ref{difGammaCDF}) due to large gamma and incomplete gamma function values. These were not solved by computing $G(z) = \exp\{n_y \alpha \log(n_y \lambda) - \log \Gamma(n_x \alpha) - \log \Gamma(n_y \alpha)\ + \log(A(z))\}$ where $\log \Gamma$ is the log gamma function. As an alternative, we used a saddlepoint approximation for (\ref{difGammaCDF}). As described below, the saddlepoint approximation is accurate, and did not pose computational difficulties.

To compute the saddlepoint approximation, note that under $H_0: \lambda_x = \lambda_y = \lambda, \alpha_x = \alpha_y = \alpha$, the MGF of $\bar{X} - \bar{Y}$ is
\begin{align*}
M_{\bar{X} - \bar{Y}}(t) = \left(1-\frac{1}{n_x\lambda} t \right)^{-n_x \alpha} \left(1 + \frac{1}{n_y \lambda} t \right)^{-n_y \alpha} \quad t \in (-n_y \lambda, n_x \lambda),
\end{align*}
and the cumulant generating function is 
$$
K(t) = \log \left( M_{\bar{X} - \bar{Y}}(t) \right) = -n_x \alpha \log \left(1 - \frac{t}{n_x \lambda} \right) - n_y \alpha \log \left(1 + \frac{t}{n_y \lambda} \right).
$$
After some algebra, we get the derivatives
\begin{align*}
K'(t) &= \frac{\alpha (n_x + n_y)t}{(n_x \lambda - t) (n_y \lambda + t)} \\
K''(t) &= \alpha (n_x + n_y) \frac{t^2 + n_x n_y \lambda^2}{\left[(n_x \lambda - t) (n_y \lambda + t)\right]^2}.
\end{align*}
Let $\hat{t} = \hat{t}(z) \in (-n_y \lambda, n_x \lambda)$ be the solution to $K'(\hat{t}) = z$. Then as \citet{butler2007saddlepoint} describes, the saddlepoint approximation of the cumulative distribution for $z \ne \mathbb{E}[\bar{X} - \bar{Y}] = 0$ is \citep{lugannani1980saddle}
\begin{equation}
\hat{G}(z) = \Phi(\hat{w}) + \phi(\hat{w}) \left(\frac{1}{\hat{w}} - \frac{1}{\hat{u}} \right),
\label{FHat}
\end{equation}
where $\hat{w} = \text{sgn}(\hat{t}) \sqrt{2 \left[\hat{t}z - K(\hat{t}) \right]}$, $\hat{u} = \hat{t}\sqrt{K''(\hat{t})}$, and $\Phi$ and $\phi$ are the standard normal distribution and density, respectively. The two-sided p-value is then $p_{\text{saddle}} = \Pr(T \ge t) =  1 - \hat{G}(t; n_x, n_y, \lambda, \alpha) + \hat{G}(-t; n_x, n_y, \lambda, \alpha).$

Figure \ref{FHat_Ftrue_comparison} compares the true distribution (\ref{difGammaCDF}) and saddlepoint approximation (\ref{FHat}) for $n_x = n_y = 100$, $\alpha = 1$, and $\lambda = 4$. Figure \ref{FHat_Ftrue_comparison} shows agreement between the true distribution and saddlepoint approximation far into the tail. The trend is similar for other parameter values (not shown), and appears to be reliable up to quantile values of around $10^{-200}$. We also note that through simulations, we found that both the true distribution and the saddlepoint approximation agreed with empirical distribution for a variety of parameter values (not shown).

\begin{figure}[htbp]
\centering
\includegraphics[scale = 0.5]{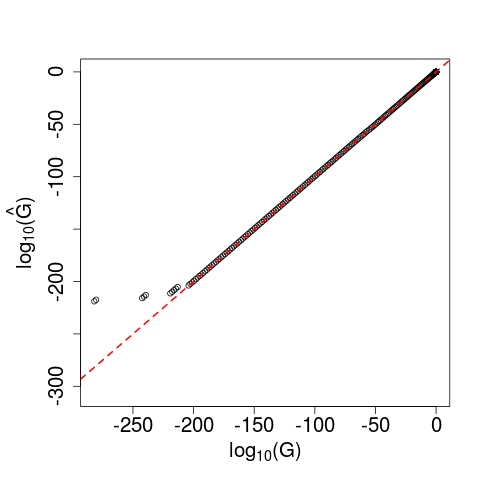}
\caption{Comparison of true ($G$) and saddlepoint approximation ($\hat{G}$) distributions of the difference of gamma random variables}
\label{FHat_Ftrue_comparison}
\end{figure}

Both the true distribution (\ref{difGammaCDF}) and saddlepoint approximation (\ref{FHat}) are functions of $\alpha$ and $\lambda$. Neither parameter disappears under the null of $H_0: \alpha_x = \alpha_y = \alpha, \lambda_x = \lambda_y = \lambda$, so we must set $\alpha$ and $\lambda$ to fixed values to compute p-values. To do this, in the simulations, we pooled the generated data, computed the maximum likelihood estimates (MLEs), and plugged the MLEs into (\ref{FHat}). In the simulations, we found that allowing both $\alpha$ and $\lambda$ to vary led to less reliable results than allowing just one parameter to vary. To be consistent with our simulations for the ratio of gamma means, we fixed $\alpha$ and used the MLE estimate for $\lambda$ in the simulations.

We note that this procedure for obtaining a parametric approximation to the permutation p-value involves three approximations: 1) approximating the permutation p-value (conditional on the data) with a parametric distribution (not conditional on the data), 2) approximating the parametric distribution with a saddlepoint approximation, and 3) approximating the general null $H_0: \lambda_x = \lambda_y$ with the more restrictive null $H_0: \lambda_x = \lambda_y = \hat{\lambda}$, where $\hat{\lambda}$ is the MLE.

To obtain the MLE estimates, let $\bm{z} = (\bx', \by')'$ be the pooled data, $N = n_x + n_y$ be the total sample size, and $\bar{z} = N^{-1} \sum_{i=1}^N z_i, s^2 = (N-1)^{-1} \sum_i (z_i - \bar{z})^2$ be the sample mean and variance, respectively. Then assuming iid observations, the joint log likelihood is
$$
\ell = N \alpha \log(\lambda) - N \log \left(\Gamma(\alpha) \right) + (\alpha - 1) \sum_i \log(z_i) - N \lambda \bar{z}.
$$
Taking the derivative with respect to $\lambda$ and setting to zero, we get $\lambda = \alpha / \bar{z}$. Then taking $\partial \ell / \partial \alpha$ and substituting in $\lambda = \alpha / \bar{z}$, we get
\begin{align*}
\ell' (\alpha) &= N \log \left( \frac{\alpha}{\bar{z}} \right) - N \Psi(\alpha) + \sum_i \log(x_i) \\
\ell'' (\alpha) &= \frac{N}{\alpha} - N \Psi'(\alpha),
\end{align*}
where $\Psi(\alpha) = d \log(\Gamma(\alpha)) / d \alpha$ is the digamma function, and $\Psi'(\alpha) = d \Psi(\alpha) / d \alpha$ is the trigamma function. We used Newton-Raphson until convergence of $\ell(\alpha)$ to get the MLE $\hat{\alpha}$, where each update is given by $\alpha^{k+1} = \alpha^k - \ell' \left(\alpha^k \right) / \ell'' \left(\alpha^k \right)$, and then set $\hat{\lambda} = \hat{\alpha} \bar{z}$. To get initial values for $\alpha$, we used the method of moments and set $\alpha^0 = \bar{z}^2 / s^2$.

\section{Additional Simulations \label{C}}

In this section, we present simulation results under additional scenarios.

\subsection{Difference in means with normal data \label{diffMean_normal}}

In this subsection, we use the statistic $T = |\bar{x} - \bar{y}|$ with data generated as normal random variables.

\subsubsection{Small sample sizes}

We generated data $x_{i}, i=1,\ldots,n_x$ and $y_j, j=1,\ldots,n_y$ as realizations of the respective random variables $X_i\overset{\text{iid}}{\sim} N(\mu_x, 1)$ and $Y_{j} \overset{\text{iid}}{\sim} N(\mu_y, 1)$. For equal sample sizes, we set $n=n_x = n_y = 20, 40, 60$, and for unequal sample sizes we set $n_x = 20, 40, 60$ and $n_y = 100$. For both equal and unequal sample sizes, and for each each $n$ or $n_x$, we set $\mu_x =2$ or 3, and $\mu_y = 0$, and simulated 100 datasets for each combination of parameters. We use the p-value from a t-distribution, denoted as $p_t$ as an approximation for the true permutation p-value. 

Results for equal and unequal sample size are shown in Figures \ref{simDiff_sym_smallN} and \ref{simDiff_nonSym_smallN}, respectively. \textit{Alg 1} is our resampling algorithm with $B_{\text{pred}}=10^3$ iterations in each partition, \textit{Asym} is our asymptotic approximation, \textit{SAMC} is the SAMC algorithm, and $p_t$ is a two-sided t-test with equal variance. The number of iterations used by our resampling algorithm is shown in Figures \ref{simDiff_sym_iter_smallN} and \ref{simDiff_nonSym_iter_smallN}. We note that the bias shown in Figures \ref{simDiff_sym_smallN_pvals} and \ref{simDiff_nonSym_smallN_pvals} are similar to that obtained with moment-corrected correlation (MCC) \citep{zhou2015hypothesis}, shown in Figure \ref{mcc_smallN} of Appendix \ref{D}.

\begin{figure}[htbp]
\centering
  \begin{subfigure}{0.58\textwidth}
  \centering
  \includegraphics[width=1\linewidth]{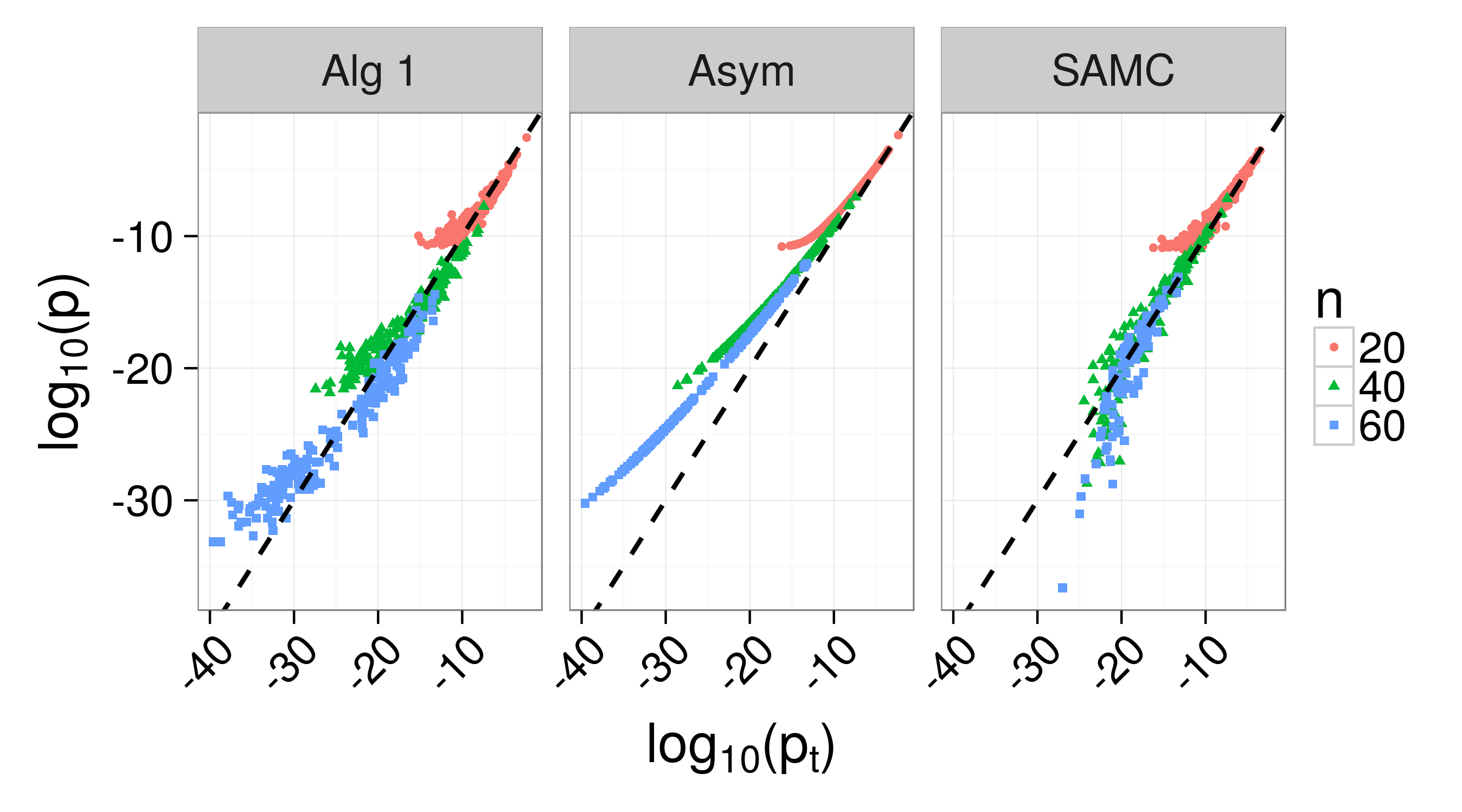}
  \caption{p-values}
  \label{simDiff_sym_smallN_pvals}
  \end{subfigure}
  \begin{subfigure}{0.38\textwidth}
  \centering
  \includegraphics[width=1\linewidth]{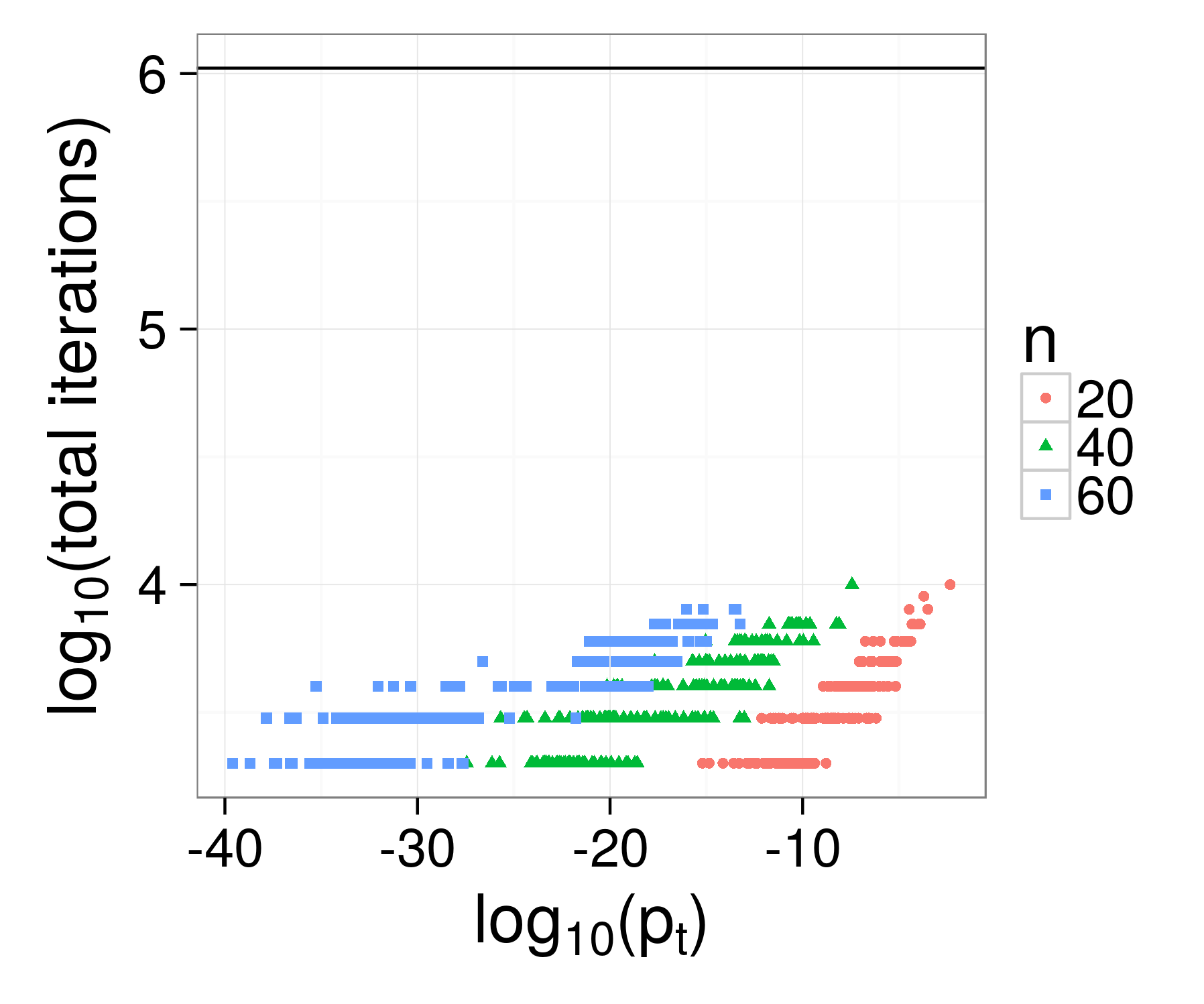}
  \caption{Iterations in resampling algorithm}
  \label{simDiff_sym_iter_smallN}
  \end{subfigure}
\caption{Simulation results using the statistic $T=|\bar{x}-\bar{y}|$ with normal data and $\mu_x = 2$ or 3, and $\mu_y = 0$, with equal sample sizes of $n=n_x=n_y=20$, 40, 60. \textit{Alg 1} is our resampling algorithm with $B_{\text{pred}}=10^3$ iterations in each partition, \textit{Asym} is our asymptotic approximation, \textit{SAMC} is the SAMC algorithm, and $p_t$ is a two-sided t-test with equal variance. The diagonal dashed line has slope of 1 and intercept of 0, and indicates agreement between methods.}
\label{simDiff_sym_smallN}
\end{figure}

\begin{figure}[htbp]
\centering
  \begin{subfigure}{0.58\textwidth}
  \centering
  \includegraphics[width=1\linewidth]{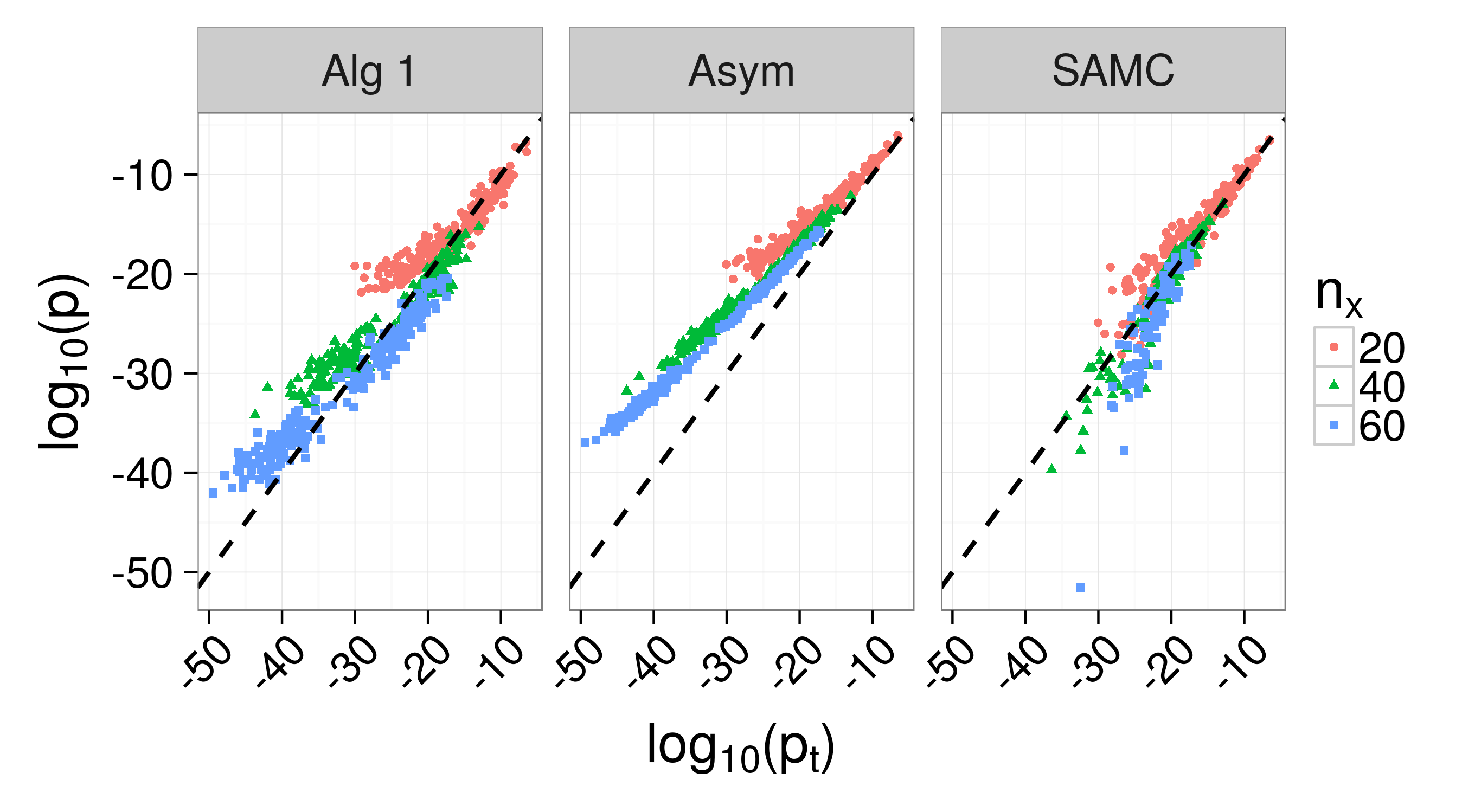}
  \caption{p-values}
  \label{simDiff_nonSym_smallN_pvals}
  \end{subfigure}
  \begin{subfigure}{0.38\textwidth}
  \centering
  \includegraphics[width=1\linewidth]{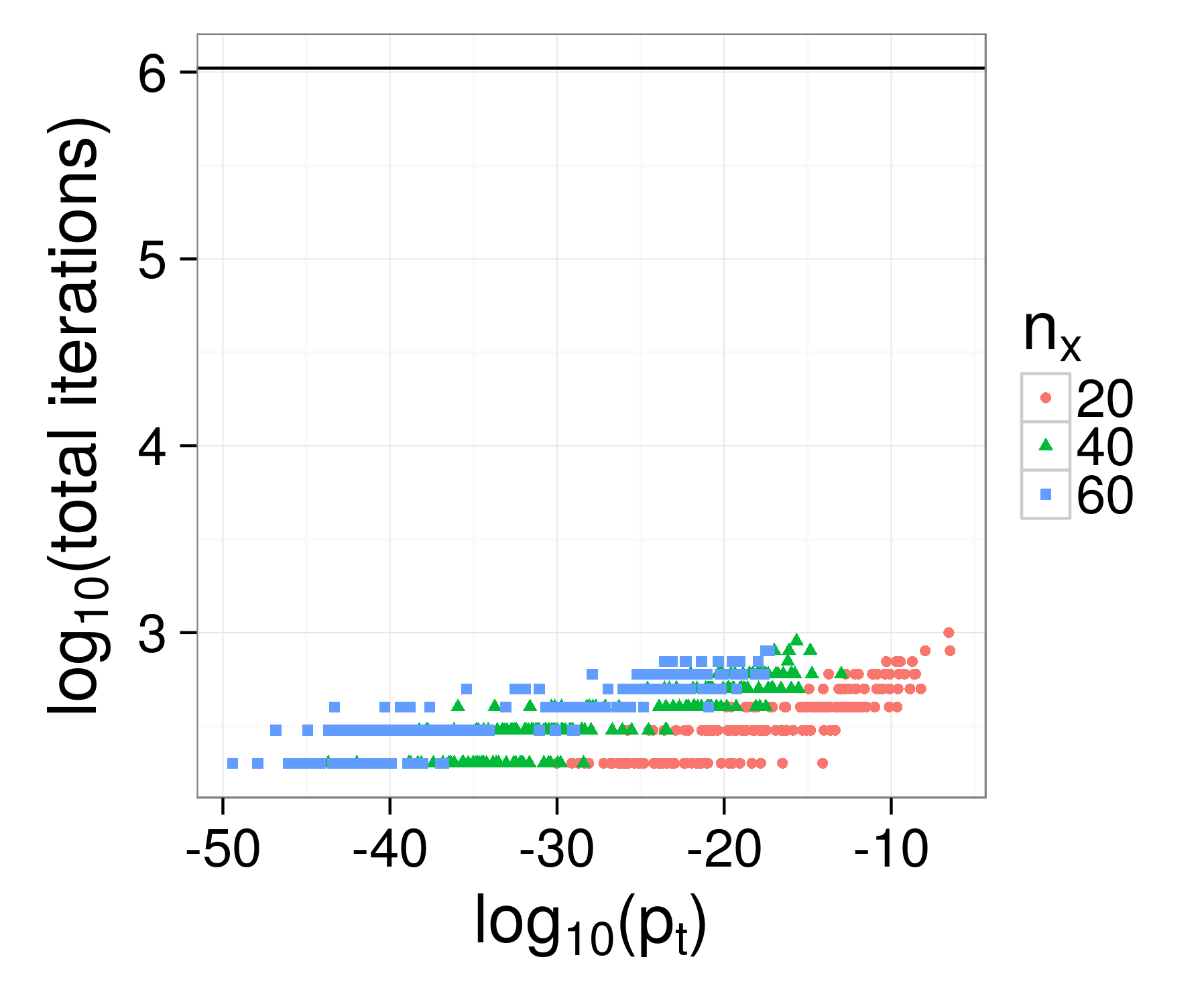}
  \caption{Iterations in resampling algorithm}
   \label{simDiff_nonSym_iter_smallN}
  \end{subfigure}
\caption{Simulation results using the statistic $T=|\bar{x}-\bar{y}|$ with normal data and $\mu_x = 2$ or $3$, $\mu_y = 0$, with unequal sample sizes, where $n_y=100$ and $n_x=20, 40, 60$. \textit{Alg 1} is our resampling algorithm with $B_{\text{pred}}=10^3$ iterations in each partition, \textit{Asym} is our asymptotic approximation, \textit{SAMC} is the SAMC algorithm, and $p_t$ is a two-sided t-test with equal variance. The diagonal dashed line has slope of 1 and intercept of 0, and indicates agreement between methods.}
\label{simDiff_nonSym_smallN}
\end{figure}

\subsubsection{Under the null hypothesis $P_x = P_y$}
 
We generated data $x_{i}, i=1,\ldots,n_x$ and $y_j, j=1,\ldots,n_y$ as realizations of the respective random variables $X_i\overset{\text{iid}}{\sim} N(0, 1)$ and $Y_{j} \overset{\text{iid}}{\sim} N(0, 1)$. For equal sample sizes, we set $n = n_x = n_y = 20, 40, 60$, and for unequal sample sizes we set $n_x = 20, 40, 60$ and $n_y = 100$. For both equal and unequal sample sizes, and for each each $n$ or $n_x$, we simulated 1,000 datasets (we used 1,000 datasets instead of 100 to better investigate the type I error rate). We used the p-value from simple Monte Carlo resampling with $10^5$ iterations, denoted as $\tilde{p}$, as an approximation for the true permutation p-value.

Results for equal and unequal sample size are shown in Figures \ref{simDiff_sym_null} and \ref{simDiff_nonSym_null}, respectively. \textit{Alg 1} is our resampling algorithm with $B_{\text{pred}}=10^3$ iterations in each partition, \textit{Asym} is our asymptotic approximation, \textit{t-test} shows the p-value from a two-sided t-test with equal variance, and $\tilde{p}$ is from simple Monte Carlo resampling with $10^5$ iterations. We compare p-values from the t-test against $\tilde{p}$, which shows close agreement. We do not show results from the SAMC algorithm, because the \verb|EXPERT| package \citep{yu2011} does not provide results for p-values $>10^{-3}$.

\begin{figure}[htbp]
\centering
\includegraphics[scale = 0.5]{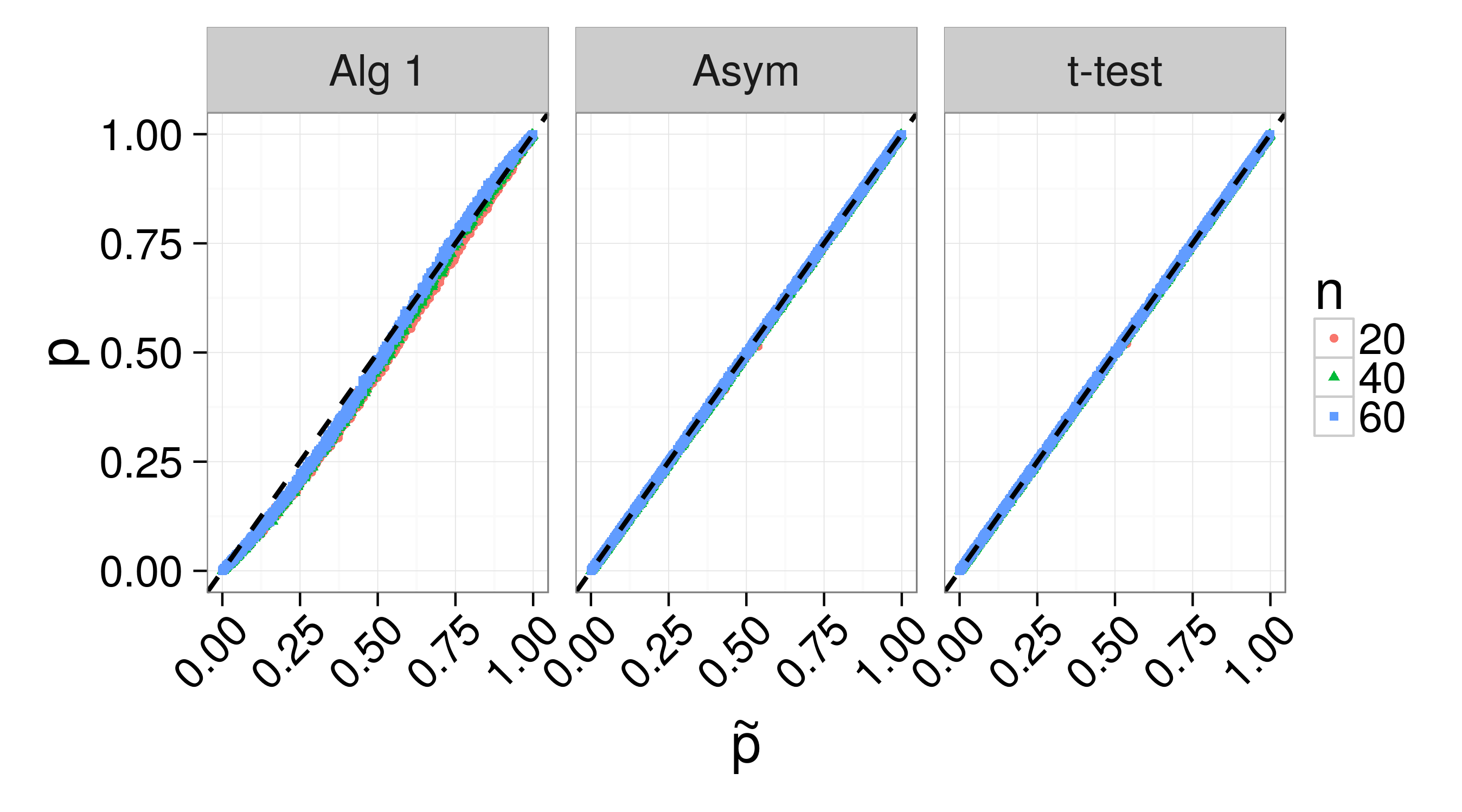}
\caption{Simulation results using the statistic $T=|\bar{x}-\bar{y}|$ with normal data under the null $P_x = P_y$ (means $\mu_x = \mu_y = 0$), with equal sample sizes of $n=n_x=n_y=20$, 40, 60. \textit{Alg 1} is our resampling algorithm with $B_{\text{pred}}=10^3$ iterations in each partition, \textit{Asym} is our asymptotic approximation, \textit{t-test} shows the p-value from a two-sided t-test with equal variance, and $\tilde{p}$ is from simple Monte Carlo resampling with $10^5$ iterations. The diagonal dashed line has slope of 1 and intercept of 0, and indicates agreement between methods.}
\label{simDiff_sym_null}
\end{figure}

\begin{figure}[htbp]
\centering
\includegraphics[scale = 0.5]{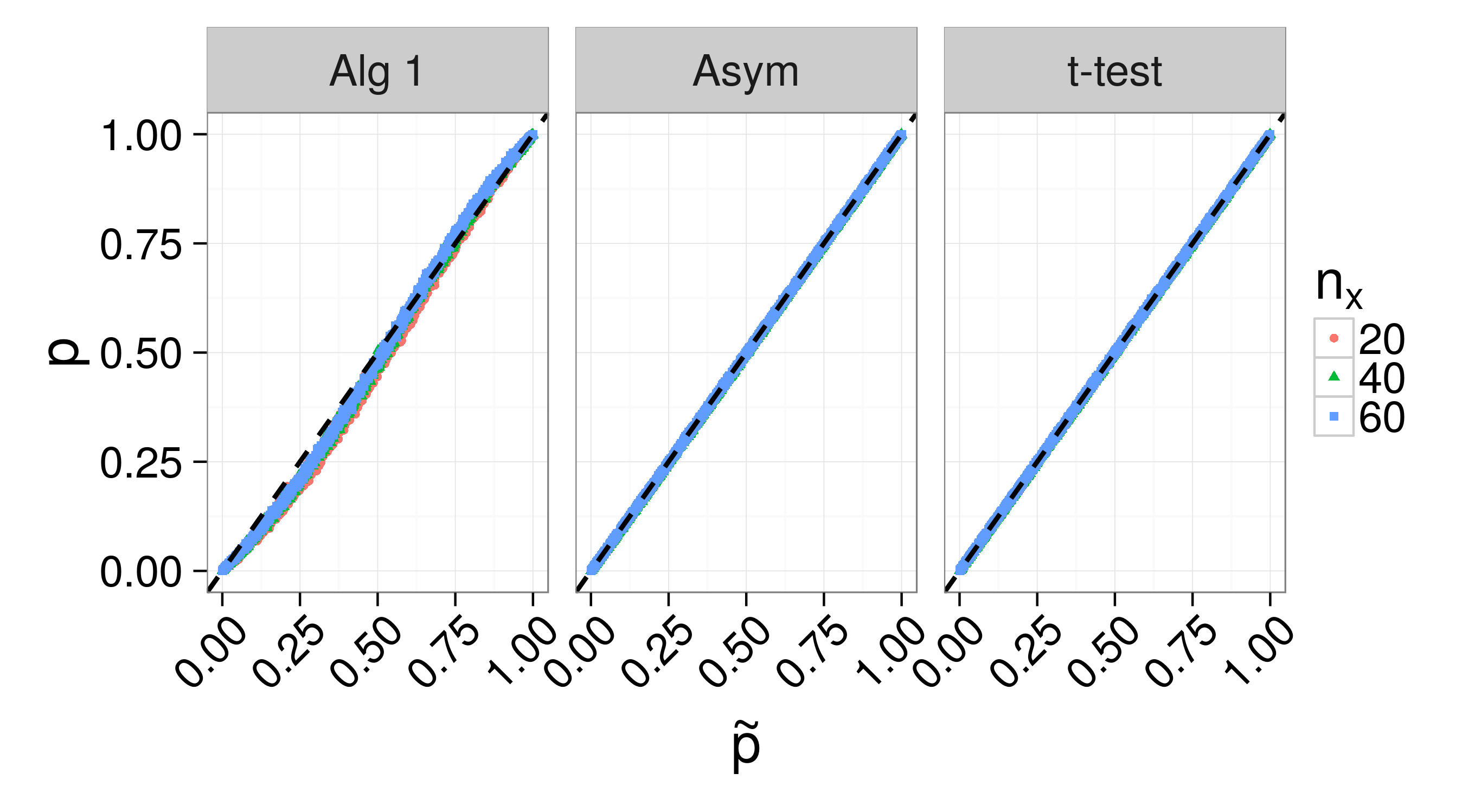}
\caption{Simulation results using the statistic $T=|\bar{x}-\bar{y}|$ with normal data under the null $P_x = P_y$ (means $\mu_x = \mu_y = 0$), with unequal sample sizes of $n_x=20, 40, 60$ and $n_y = 100$. \textit{Alg 1} is our resampling algorithm with $B_{\text{pred}}=10^3$ iterations in each partition, \textit{Asym} is our asymptotic approximation, \textit{t-test} shows the p-value from a two-sided t-test with equal variance, and $\tilde{p}$ is from simple Monte Carlo resampling with $10^5$ iterations. The diagonal dashed line has slope of 1 and intercept of 0, and indicates agreement between methods.}
\label{simDiff_nonSym_null}
\end{figure}

Tables \ref{type1ErrorDiffSym} and \ref{type1ErrorDiffNonSym} show the Type I error rates under the null $H_0: P_x = P_y$ for the equal and unequal sample size simulations, respectively. \textit{MC} is the unadjusted p-value from simple Monte Carlo resampling and $10^5$ iterations, \textit{t-test} is a two-sided t-test with equal variance, \textit{Alg 1} is our resampling algorithm, and \textit{Asymptotic} is our asymptotic approximation.

\begin{table}[htbp]
\centering
\caption{Type I error rates $\Pr(\text{p-value} \le \text{signif level} | H_0)$ for $T = |\bar{x} - \bar{y}|$ with normal data and equal sample sizes $n=n_x = n_y$. \textit{MC} is the unadjusted p-value from simple Monte Carlo resampling and $10^5$ iterations, \textit{t-test} is a two-sided t-test with equal variance, \textit{Alg 1} is our resampling algorithm, and \textit{Asymptotic} is our asymptotic approximation.}
\begin{tabular}{cccccc}
\hline \hline
signif level & $n$ & MC & t-test & Alg 1 & Asymptotic \\
\hline
\multirow{3}{*}{0.01} & 20 & 0.010 & 0.010 & 0.015 & 0.010 \\
 & 40 & 0.013 & 0.013 & 0.015 & 0.013 \\
 & 60 & 0.010 & 0.010 & 0.011 & 0.010 \\
\hline
\multirow{3}{*}{0.05}  & 20 & 0.048 & 0.050 & 0.064 & 0.050 \\
 & 40 & 0.055 & 0.055 & 0.075 & 0.056 \\
 & 60 & 0.049 & 0.050 & 0.061 & 0.050 \\
\hline
\multirow{3}{*}{0.1}  & 20 & 0.098 & 0.098 & 0.14 & 0.11 \\
 & 40 & 0.11 & 0.11 & 0.14 & 0.11 \\
 & 60 & 0.10 & 0.10 & 0.12 & 0.10
\end{tabular}
\label{type1ErrorDiffSym}
\end{table}

\begin{table}[htbp]
\centering
\caption{Type I error rates $\Pr(\text{p-value} \le \text{signif level} | H_0)$ for $T = |\bar{x} - \bar{y}|$ with normal data and unequal sample sizes $n_x \ne n_y$ ($n_x$ shown, and $n_y = 100$). \textit{MC} is the unadjusted p-value from simple Monte Carlo resampling and $10^5$ iterations, \textit{t-test} is a two-sided t-test with equal variance, \textit{Alg 1} is our resampling algorithm, and \textit{Asymptotic} is our asymptotic approximation.}
\begin{tabular}{cccccc}
\hline \hline
signif level & $n_x$ & MC & t-test & Alg 1 & Asymptotic \\
\hline
\multirow{3}{*}{0.01} & 20 & 0.013 & 0.013 & 0.018 & 0.013 \\
 & 40 & 0.016 & 0.016 & 0.018 & 0.016 \\
 & 60 & 0.010 & 0.010 & 0.013 & 0.010 \\
\hline
\multirow{3}{*}{0.05} & 20 & 0.049 & 0.049 & 0.075 & 0.049 \\
 & 40 & 0.047 & 0.047 & 0.066 & 0.047 \\
 & 60 & 0.044 & 0.044 & 0.057 & 0.044 \\
\hline
\multirow{3}{*}{0.1} & 20 & 0.090 & 0.090 & 0.14 & 0.092 \\
 & 40 & 0.10 & 0.10 & 0.14 & 0.11 \\
 & 60 & 0.090 & 0.090 & 0.13 & 0.090
\end{tabular}
\label{type1ErrorDiffNonSym}
\end{table}

\subsection{Ratio of means with exponential data}

In this subsection, we use the statistic $T = \max(\bar{x}/ \bar{y}, \bar{y}/ \bar{x})$ with data generated as exponential random variables.

\subsubsection{Small sample sizes}

We generated data $x_{i}, i=1,\ldots,n_x$ and $y_j, j=1,\ldots,n_y$ as realizations of the respective random variables $X_i\overset{\text{iid}}{\sim} \text{Exp}(\lambda_x)$ and $Y_{j} \overset{\text{iid}}{\sim} \text{Exp}(\lambda_y)$. For equal sample sizes, we set $n = n_x = n_y = 20, 40, 60$, and for unequal sample sizes, we set $n_x = 20, 40, 60$ and $n_y = 100$. For each $n$ or $n_x$, we set $\lambda_y=5$ or 10, and $\lambda_x = 1$. For both equal and unequal sample sizes, we simulated 100 datasets for each combination of parameters. We use the p-value from the beta prime distribution, denoted as $p_\beta$ (see Appendix \ref{B}) as an approximation to the true permutation p-value.

Results for equal and unequal sample size are shown in Figures \ref{sim_sym_smallN} and \ref{sim_nonSym_smallN}, respectively. \textit{Alg 1} is our resampling algorithm with $B_{\text{pred}}=10^3$ iterations in each partition, \textit{Asym} is our asymptotic approximation, \textit{Delta} is the delta method, \textit{SAMC} is the SAMC algorithm, and $p_{\beta}$ is the two-sided p-value from the beta prime distribution. The number of iterations used by our resampling algorithm is shown in Figures \ref{simExp_sym_iter_smallN} and \ref{simExp_nonSym_iter_smallN}.

\begin{figure}[htbp]
\centering
  \begin{subfigure}{0.58\textwidth}
  \centering
  \includegraphics[width=1\linewidth]{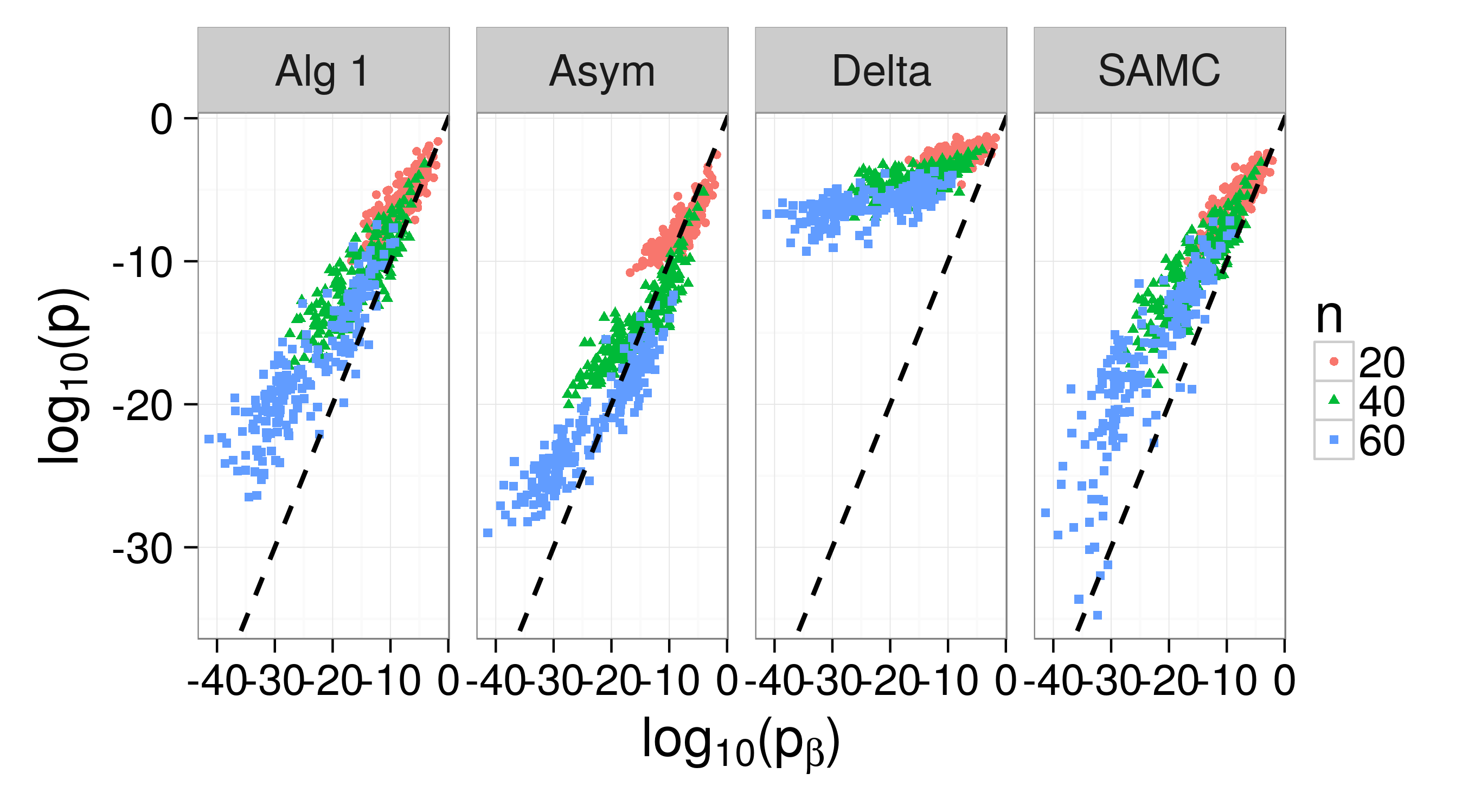}
  \caption{p-values}
  \end{subfigure}
  \begin{subfigure}{0.38\textwidth}
  \centering
  \includegraphics[width=1\linewidth]{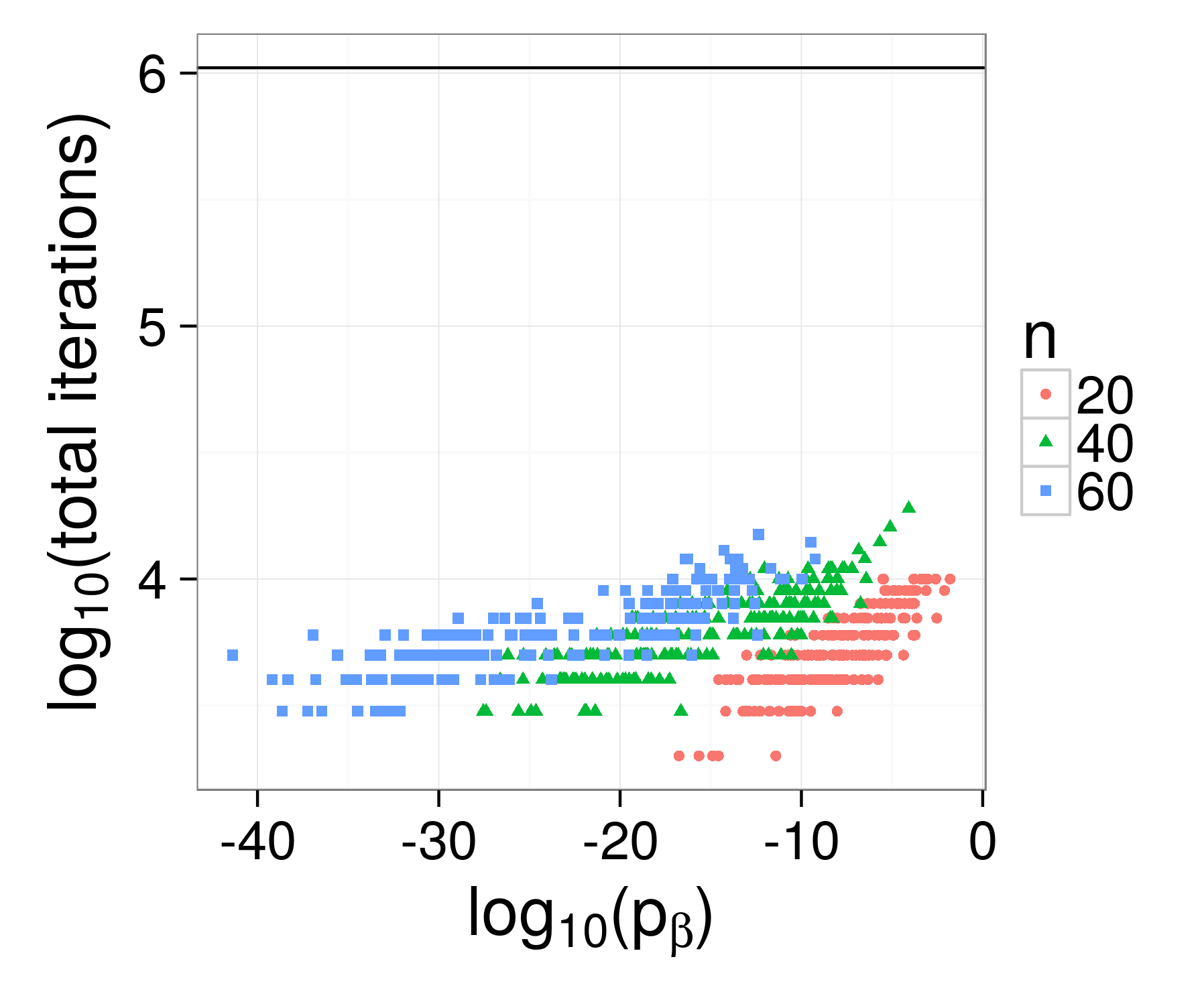}
  \caption{Iterations in resampling algorithm}
  \label{simExp_sym_iter_smallN}
  \end{subfigure}
\caption{Simulation results using the statistic $T= \max(\bar{x} / \bar{y}, \bar{y} / \bar{x})$, with exponential data, $n = n_x=n_y=20$, 40, 60, and rates $\lambda_y = 5, 10$, and $\lambda_x = 1$. \textit{Alg 1} is our resampling algorithm with $B_{\text{pred}}=10^3$ iterations in each partition, \textit{Asym} is our asymptotic approximation, \textit{Delta} is the delta method, \textit{SAMC} is the SAMC algorithm, and $p_{\beta}$ is the two-sided p-value from the beta prime distribution. The diagonal dashed line has slope of 1 and intercept of 0, and indicates agreement between methods. The horizontal line in \ref{simExp_sym_iter_smallN} shows the number of permutations used in the SAMC algorithm (set in advance, and independent of p-value). The SAMC algorithm did not produce values for 15 tests (points missing).}
\label{sim_sym_smallN}
\end{figure}

\begin{figure}[htbp]
\centering
  \begin{subfigure}{0.58\textwidth}
  \centering
  \includegraphics[width=1\linewidth]{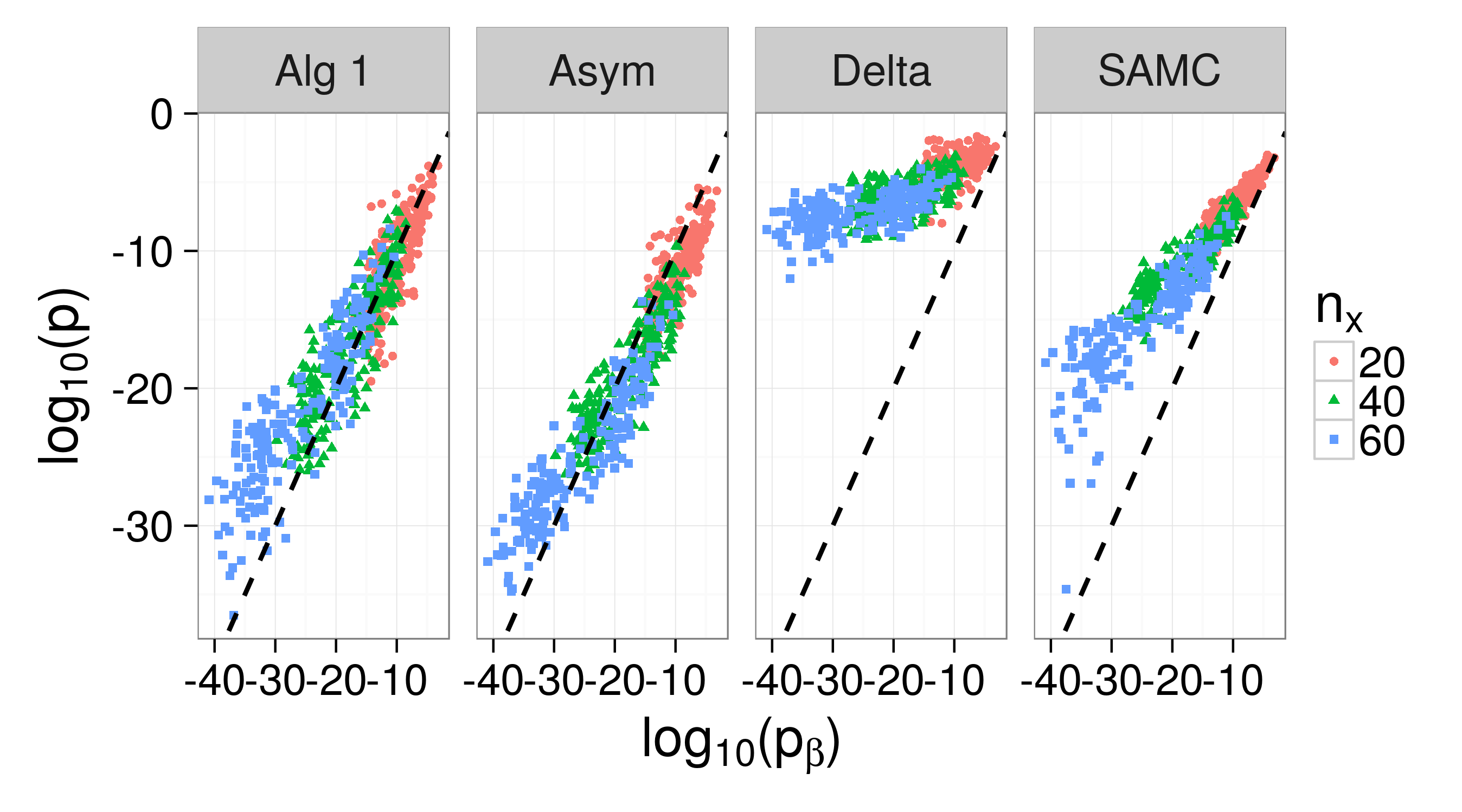}
  \caption{p-values}
  \end{subfigure}
  \begin{subfigure}{0.38\textwidth}
  \centering
  \includegraphics[width=1\linewidth]{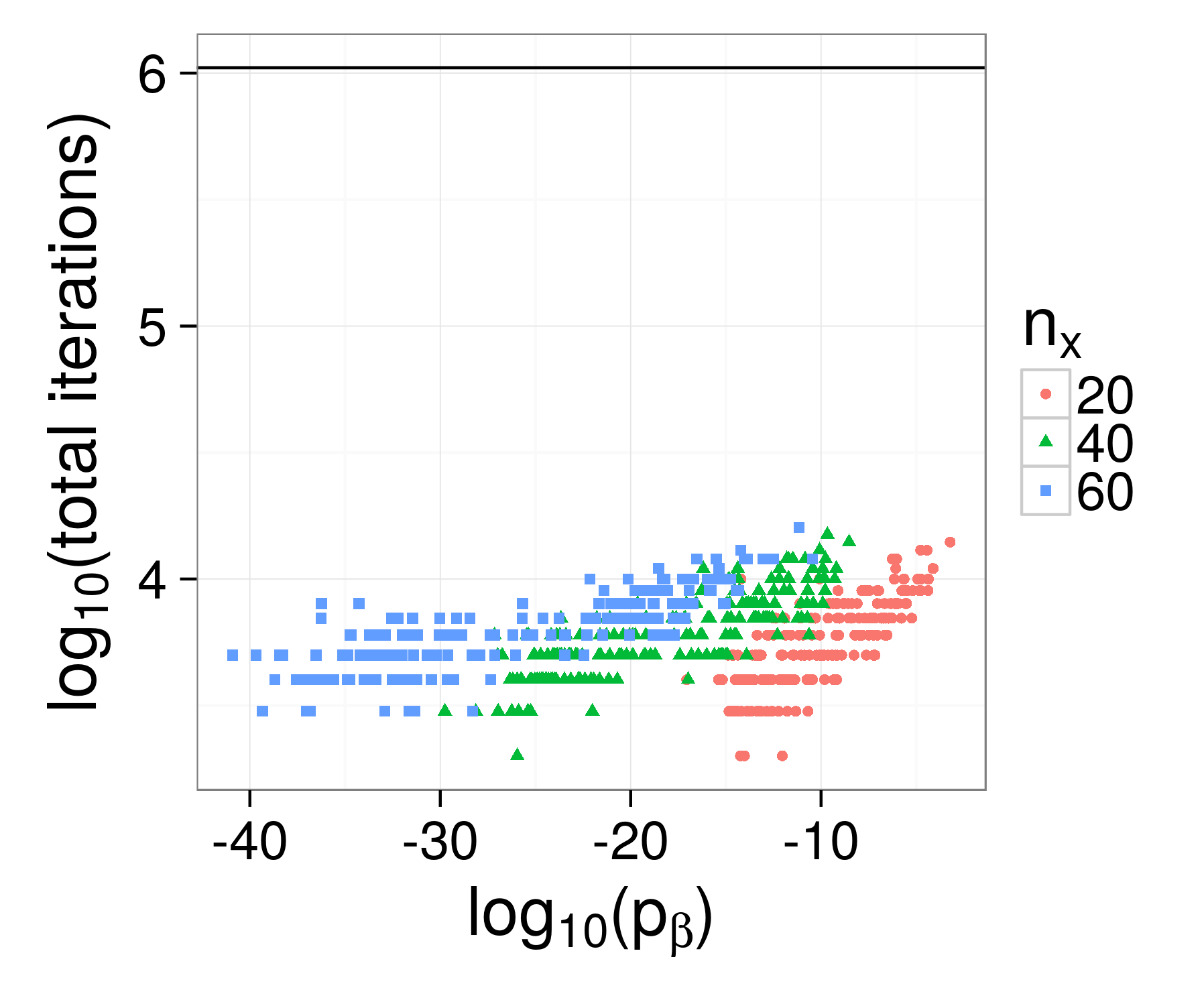}
  \caption{Iterations in resampling algorithm}
  \label{simExp_nonSym_iter_smallN}
  \end{subfigure}
\caption{Simulation results using the statistic $T= \max(\bar{x} / \bar{y}, \bar{y} / \bar{x})$, with exponential data, $n_x=20$, 40, 60, $n_y = 100$, and rates $\lambda_y = 5, 10$, and $\lambda_x = 1$. \textit{Alg 1} is our resampling algorithm with $B_{\text{pred}}=10^3$ iterations in each partition, \textit{Asym} is our asymptotic approximation, \textit{Delta} is the delta method, \textit{SAMC} is the SAMC algorithm, and $p_{\beta}$ is the two-sided p-value from the beta prime distribution. The diagonal dashed line has slope of 1 and intercept of 0, and indicates agreement between methods. The horizontal line in \ref{simExp_sym_iter_smallN} shows the number of permutations used in the SAMC algorithm (set in advance, and independent of p-value).}
\label{sim_nonSym_smallN}
\end{figure}

\subsubsection{Under the null $P_x = P_y$}

We generated data $x_{i}, i=1,\ldots,n_x$ and $y_j, j=1,\ldots,n_y$ as realizations of the respective random variables $X_i\overset{\text{iid}}{\sim} \text{Exp}(1)$ and $Y_{j} \overset{\text{iid}}{\sim} \text{Exp}(1)$. For equal sample sizes, we set $n = n_x = n_y = 20, 40, 60$. For unequal sample sizes, we set $n_x = 20, 40, 60$ and $n_y = 100$. For both equal and unequal sample sizes, we simulated 1,000 datasets for each combination of parameters (we used 1,000 datasets, as opposed to 100, to better investigate the type I error rate). We used the p-value from simple Monte Carlo resampling with $10^5$ iterations, denoted as $\tilde{p}$, as an approximation for the true permutation p-value.

Results for equal and unequal sample size are shown in Figures \ref{simExp_sym_null} and \ref{simExp_nonSym_null}, respectively. \textit{Alg 1} is our resampling algorithm with $B_{\text{pred}}=10^3$ iterations in each partition, \textit{Asym} is our asymptotic approximation, \textit{Delta} is the delta method, \textit{Beta prime} gives the p-value from the beta prime distribution, and $\tilde{p}$ is from simple Monte Carlo resampling with $10^5$ iterations. Given the large p-values, using $10^5$ Monte Carlo resamples should be sufficient to obtain reliable estimates of the true permutation p-value. Therefore, this comparison demonstrates that the permutation p-value is not exactly the same as the p-value from the beta prime distribution. However, it appears reasonably close, and so we use it as an approximation to the truth in other simulations, in which the p-values are much smaller and simple Monte Carlo methods are not feasible.

We do not show results from the SAMC algorithm, because as noted above, the \verb|EXPERT| package \citep{yu2011} does not provide results for p-values $>10^{-3}$.

\begin{figure}[htbp]
\centering
\includegraphics[scale = 0.5]{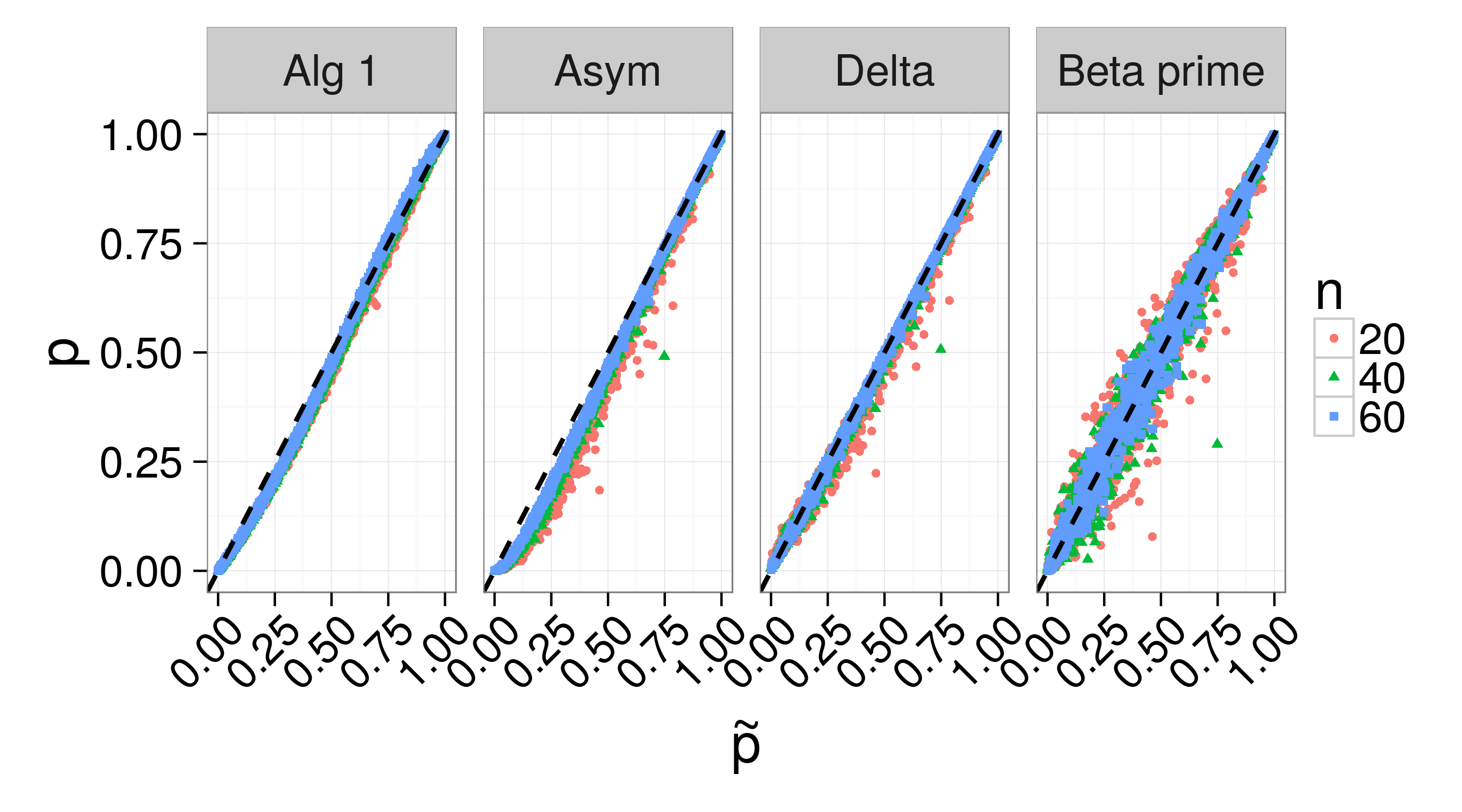}
\caption{Simulation results using the statistic $T= \max(\bar{x} / \bar{y}, \bar{y} / \bar{x})$, with exponential data under the null of $P_x = P_y$ (rates $\lambda_x = \lambda_y = 1$), with equal sample sizes of $n=n_x=n_y=20$, 40, 60. \textit{Alg 1} is our resampling algorithm with $B_{\text{pred}}=10^3$ iterations in each partition, \textit{Asym} is our asymptotic approximation, \textit{Delta} is the delta method, \textit{Beta prime} gives the p-value from the beta prime distribution, and $\tilde{p}$ is from simple Monte Carlo resampling with $10^5$ iterations. The diagonal dashed line has slope of 1 and intercept of 0, and indicates agreement between methods.}
\label{simExp_sym_null}
\end{figure}

\begin{figure}[htbp]
\centering
\includegraphics[scale = 0.5]{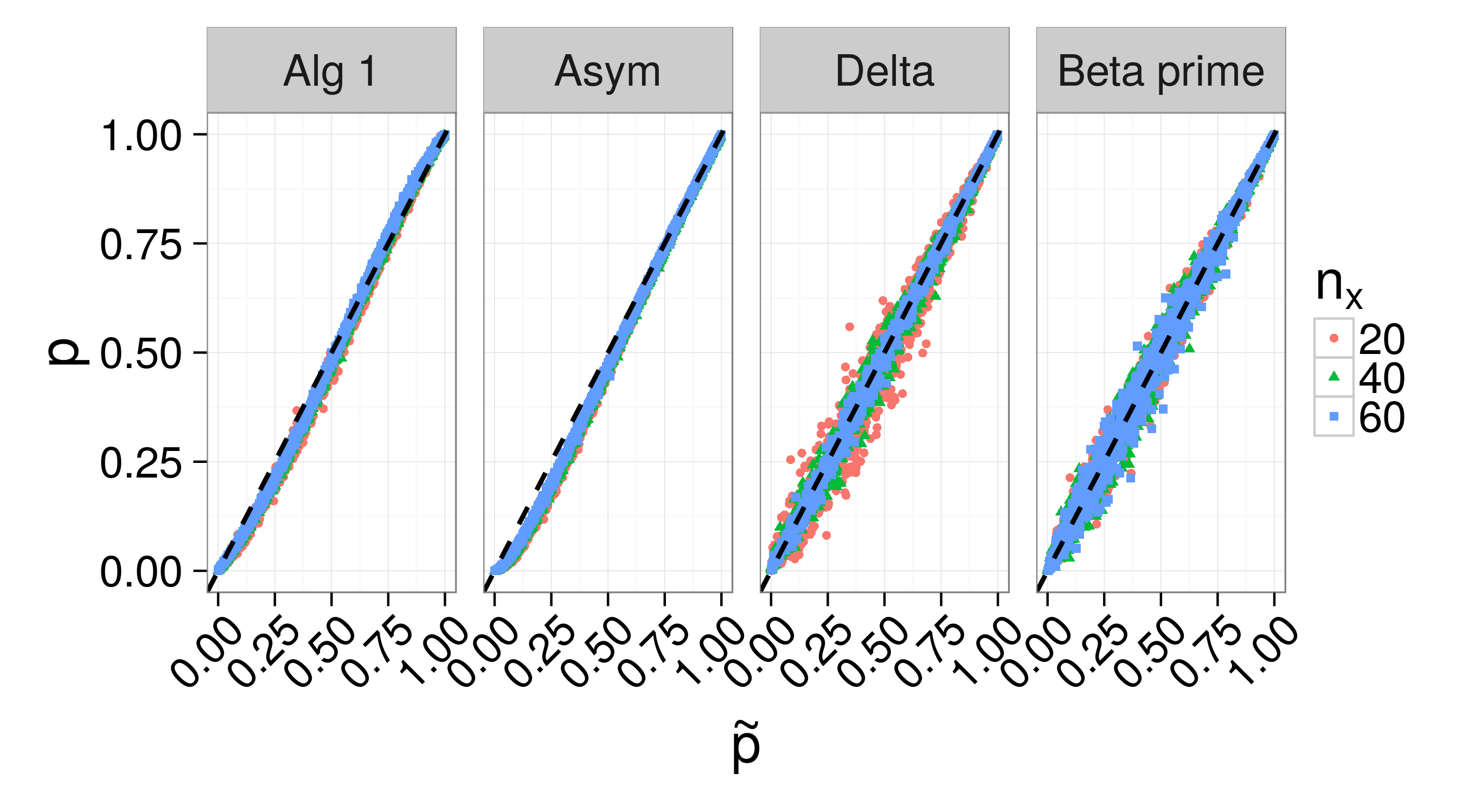}
\caption{Simulation results using the statistic $T= \max(\bar{x} / \bar{y}, \bar{y} / \bar{x})$ with exponential data under the null of $P_x = P_y$ (rates $\lambda_x = \lambda_y = 1$), with unequal sample sizes of $n_x = 20$, 40, 60 and $n_y = 100$. \textit{Alg 1} is our resampling algorithm with $B_{\text{pred}}=10^3$ iterations in each partition, \textit{Asym} is our asymptotic approximation, \textit{Delta} is the delta method, \textit{Beta prime} gives the p-value from the beta prime distribution, and $\tilde{p}$ is from simple Monte Carlo resampling with $10^5$ iterations. The diagonal dashed line has slope of 1 and intercept of 0, and indicates agreement between methods.}
\label{simExp_nonSym_null}
\end{figure}

Tables \ref{type1ErrorRatioSym} and \ref{type1ErrorRatioNonSym} show the Type I error rates under the null $H_0: P_x = P_y$ for the equal and unequal sample size simulations, respectively. \textit{MC} is the unadjusted p-value from simple Monte Carlo resampling and $10^5$ iterations, \textit{Beta prime} is the p-value from the beta prime distribution, \textit{Alg 1} is our resampling algorithm, and \textit{Asymptotic} is our asymptotic approximation.

\begin{table}[htbp]
\centering
\caption{Type I error rates $\Pr(\text{p-value} \le \text{signif level} | H_0)$ for $T = \max(\bar{x} / \bar{y}, \bar{y} / \bar{x})$ with exponential data and equal sample sizes $n=n_x = n_y$. \textit{MC} is simple Monte Carlo resampling with $10^5$ iterations, \textit{Alg 1} is our resampling algorithm, and \textit{Asymptotic} is our asymptotic approximation, \textit{Delta} is the delta method, and \textit{Beta prime} is the the beta prime distribution.}
\begin{tabular}{ccccccc}
\hline \hline
signif level & $n$ & MC & Alg 1 & Asymptotic & Delta & Beta prime \\
\hline
 \multirow{3}{*}{0.01} & 20 & 0.010 & 0.016 & 0.066 &  0.003 & 0.009 \\
 & 40 & 0.010 & 0.018 & 0.050 &  0.002 & 0.008 \\
 & 60 & 0.013 & 0.013 & 0.031 &  0.006 & 0.015 \\
 \hline
 \multirow{3}{*}{0.05} & 20 & 0.064 & 0.084 & 0.14 &  0.045 & 0.058 \\
 & 40 & 0.061 & 0.079 & 0.11 &  0.054 & 0.061 \\
 & 60 & 0.051 & 0.063 & 0.091 &  0.050 & 0.047 \\
 \hline
 \multirow{3}{*}{0.10} & 20 & 0.11 & 0.15 & 0.21 &  0.12 & 0.11 \\
 & 40 & 0.11 & 0.14 & 0.17 & 0.11 & 0.11 \\
 & 60 & 0.093 & 0.11 & 0.14 & 0.095 & 0.092
\end{tabular}
\label{type1ErrorRatioSym}
\end{table}

\begin{table}[htbp]
\centering
\caption{Type I error rates $\Pr(\text{p-value} \le \text{signif level} | H_0)$ for $T = \max(\bar{x} / \bar{y}, \bar{y} / \bar{x})$ with exponential data and unequal sample sizes $n_x \ne n_y$ ($n_x$ shown, and $n_y = 100$). \textit{MC} is simple Monte Carlo resampling with $10^5$ iterations, \textit{Alg 1} is our resampling algorithm, and \textit{Asymptotic} is our asymptotic approximation, \textit{Delta} is the delta method with, and \textit{Beta prime} is the beta prime distribution.}
\begin{tabular}{ccccccc}
\hline \hline
signif level & $n$ & MC & Alg 1 & Asymptotic & Delta & Beta prime \\
\hline
\multirow{3}{*}{0.01} & 20 & 0.011 & 0.016 & 0.054 & 0.008 & 0.012 \\
 & 40 & 0.008 & 0.012 & 0.033 & 0.004 & 0.006 \\
 & 60 & 0.012 & 0.016 & 0.035 & 0.007 & 0.014 \\
 \hline
\multirow{3}{*}{0.05} & 20 & 0.061 & 0.082 & 0.127 & 0.065 & 0.056 \\
 & 40 & 0.048 & 0.062 & 0.097 & 0.047 & 0.050 \\
 & 60 & 0.047 & 0.065 & 0.083 & 0.044 & 0.051 \\
 \hline
\multirow{3}{*}{0.10} & 20 & 0.12 & 0.16 & 0.19 & 0.14 & 0.12 \\
 & 40 & 0.10 & 0.14 & 0.17 & 0.11 & 0.10 \\
 & 60 & 0.091 & 0.12 & 0.14 & 0.093 & 0.088
\end{tabular}
\label{type1ErrorRatioNonSym}
\end{table}

\subsection{Difference in means with gamma data \label{diffMean_gamma}}

In this subsection, we use the statistic $T = |\bar{x} - \bar{y}|$ with data generated as gamma random variables.

\subsubsection{Small sample sizes}

We generated data $x_{i}, i=1,\ldots,n_x$ and $y_j, j=1,\ldots,n_y$ as realizations of the respective random variables $X_i\overset{\text{iid}}{\sim} \text{Gamma}(\alpha, \lambda_x)$ and $Y_{j} \overset{\text{iid}}{\sim} \text{Gamma}(\alpha, \lambda_y)$, where $\alpha = 0.5, 3, 5$, $\lambda_x = 1$, and $\lambda$ is the rate parameter. For equal sample sizes, we set $n=n_x = n_y = 20, 40, 60$, and for unequal sample sizes we set $n_x = 20, 40, 60$ and $n_y = 100$. For $\alpha = 0.5$, we set $\lambda_y = 2.5, 3$ for all $n$ or $n_x$. For $\alpha = 3$, we set $\lambda_y = 1.5, 1.75$ for all $n$ or $n_x$. For $\alpha = 5$, we set $\lambda_y = 1.25, 1.5$ for all $n$ or $n_x$. For both equal and unequal sample sizes, we simulated 100 datasets for each combination of parameters.

Results for equal and unequal sample size are shown in Figures \ref{simGammaDiff_sym_smallN} and \ref{simGammaDiff_nonSym_smallN}, respectively. \textit{Alg 1} is our resampling algorithm with $B_{\text{pred}}=10^3$ iterations in each partition, \textit{Asym} is our asymptotic approximation, \textit{t-test} is a t-test with unequal variance, and \textit{Saddle} is the saddlepoint approximation (see Appendix \ref{B}). SAMC results are not shown, as the \verb|EXPERT| package does not provide p-values larger than $10^{-3}$. We use the p-values from simple Monte Carlo resampling, denoted as $\tilde{p}$, with $10^5$ iterations as a basis of comparison, and only show values for which $\tilde{p} > 10^{-3}$ to ensure that the $\tilde{p}$ are reliable (1,023 values shown in Figure \ref{simGammaDiff_sym_smallN}, and 573 values shown in Figure \ref{simGammaDiff_nonSym_smallN}).

\begin{figure}[htbp]
\centering
\includegraphics[scale = 0.6]{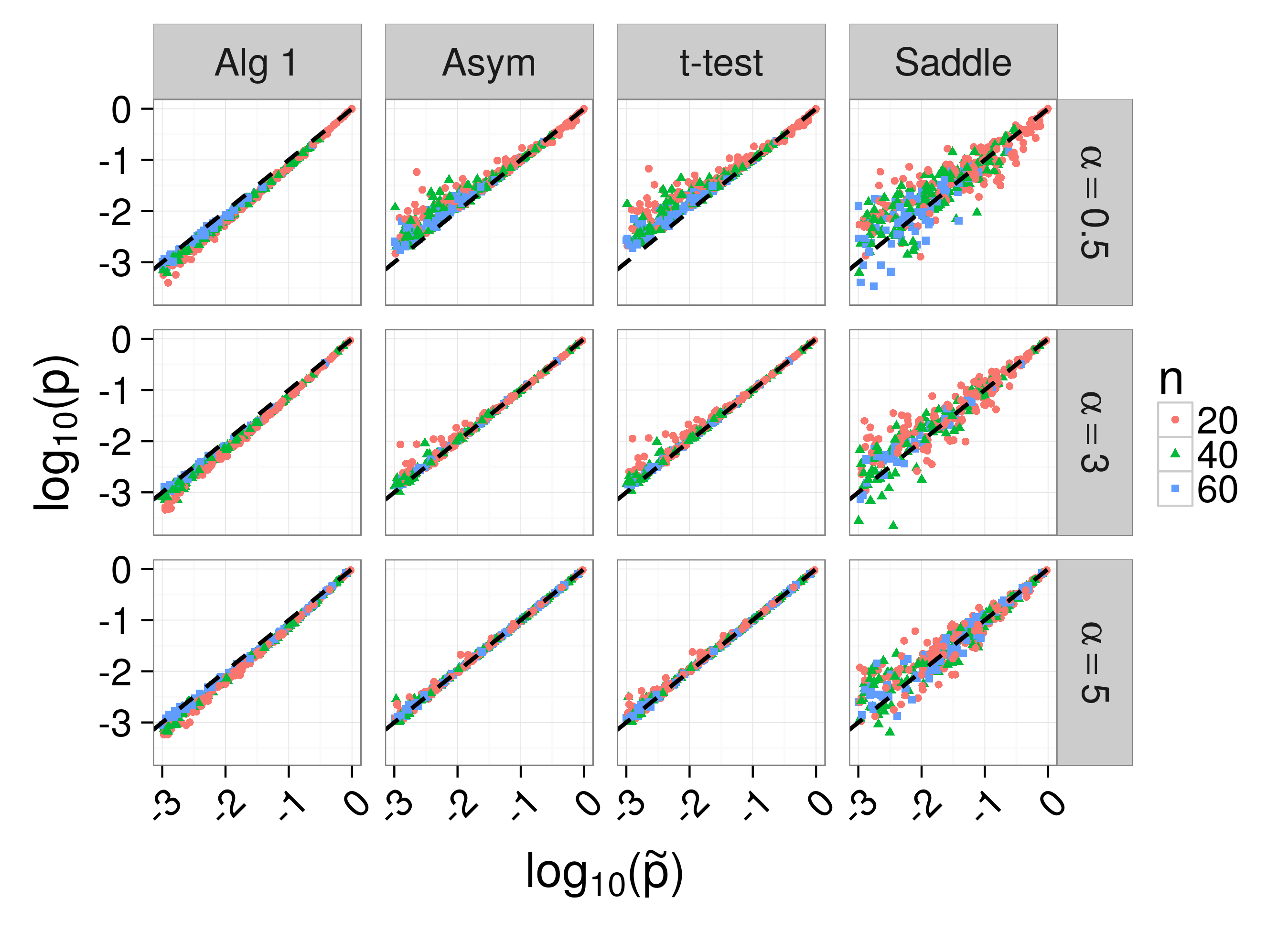}
\caption{Simulation results using the statistic $T=|\bar{x}-\bar{y}|$ with gamma data and equal sample sizes of $n=n_x=n_y=20$, 40, 60. \textit{Alg 1} is our resampling algorithm with $B_{\text{pred}}=10^3$ iterations in each partition, \textit{Asym} is our asymptotic approximation, \textit{t-test} is a t-test with unequal variance, and \textit{Saddle} is the saddlepoint approximation (see Appendix \ref{B}). $\tilde{p}$ is the p-values from simple Monte Carlo resampling with $10^5$ iterations. SAMC results not shown, as the EXPERT package does not produce p-values larger than $10^{-3}$. Only simulations with $\tilde{p} > 10^{-3}$ shown (1,023 values shown). The diagonal dashed line has slope of 1 and intercept of 0, and indicates agreement between methods.}
\label{simGammaDiff_sym_smallN}
\end{figure}

\begin{figure}[htbp]
\centering
 \includegraphics[scale = 0.6]{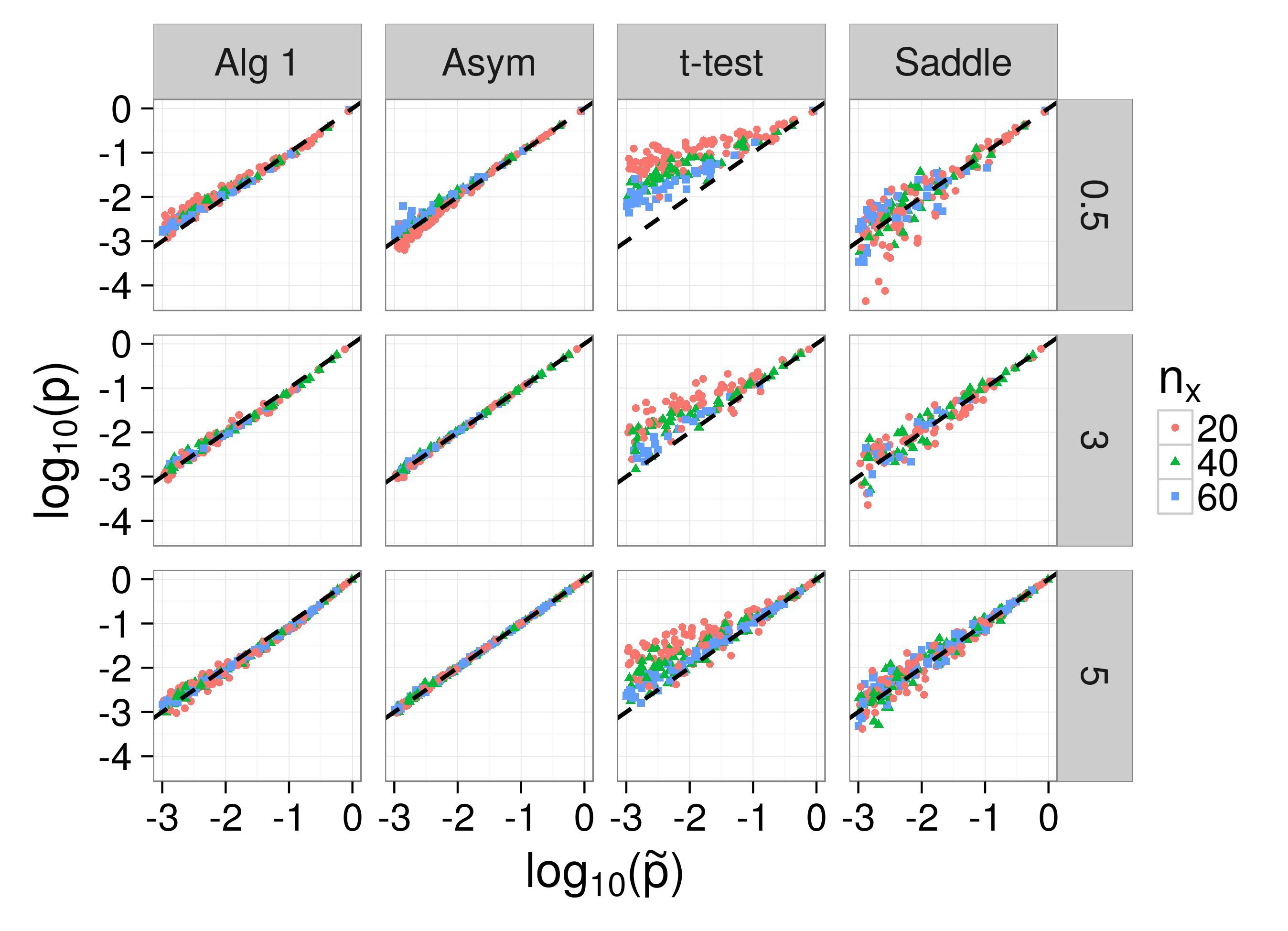}
\caption{Simulation results using the statistic $T=|\bar{x}-\bar{y}|$ with gamma data and unequal sample sizes of $n_x=20, 40, 60$ and $n_y=100$. \textit{Alg 1} is our resampling algorithm with $B_{\text{pred}}=10^3$ iterations in each partition, \textit{Asym} is our asymptotic approximation, \textit{SAMC} is the SAMC algorithm, and $p_t$ is a two-sided t-test with equal variance. SAMC results not shown, as the EXPERT package does not produce p-values larger than $10^{-3}$. Only simulations with $\tilde{p} > 10^{-3}$ shown (573 values shown). The diagonal dashed line has slope of 1 and intercept of 0, and indicates agreement between methods.}
\label{simGammaDiff_nonSym_smallN}
\end{figure}

Overall, Figures \ref{simGammaDiff_sym_smallN} and \ref{simGammaDiff_nonSym_smallN} suggest that our methods work well in this setting, though our resampling algorithm might be liberal for equal sample sizes and $\alpha = 0.5$. The t-test performs well in some scenarios, but tends to be too conservative, particularly for unequal sample sizes. Overall, the Saddlepoint approximation with fixed $\alpha$ and the MLE $\hat{\lambda}$ from the pooled data appears to have more variance than the other methods.

\subsubsection{Under the null hypothesis $P_x = P_y$}
 
We generated data $x_{i}, i=1,\ldots,n_x$ and $y_j, j=1,\ldots,n_y$ as realizations of the respective random variables $X_i\overset{\text{iid}}{\sim} \text{Gamma}(\alpha, \lambda)$ and $Y_{j} \overset{\text{iid}}{\sim} \text{Gamma}(\alpha, \lambda)$ for $\alpha = 0.5, 3, 5$ and $\lambda = 1,5$, where $\lambda$ is the rate parameter. For equal sample sizes, we set $n = n_x = n_y = 20, 40, 60$, and for unequal sample sizes we set $n_x = 20, 40, 60$ and $n_y = 100$. For both equal and unequal sample sizes, and for each each $n$ or $n_x$, and combination of $\alpha$ and $\lambda$, we simulated 1,000 datasets (we used 1,000 datasets instead of 100 to better investigate the type I error rate). We used the p-value from simple Monte Carlo resampling with $10^5$ iterations, denoted as $\tilde{p}$, as an approximation for the true permutation p-value.

Results for equal and unequal sample size are shown in Figures \ref{simGammaDiff_sym_null} and \ref{simGammaDiff_nonSym_null}, respectively. textit{Alg 1} is our resampling algorithm with $B_{\text{pred}}=10^3$ iterations in each partition, \textit{Asym} is our asymptotic approximation, \textit{Saddle} is the saddlepoint approximation described in Appendix \ref{B}, \textit{t-test} shows the p-value from a two-sided t-test with unequal variance, and $\tilde{p}$ is from simple Monte Carlo resampling with $10^5$ iterations. We do not show results from the SAMC algorithm, because the \verb|EXPERT| package \citep{yu2011} does not provide results for p-values $>10^{-3}$.

\begin{figure}[htbp]
\centering
\includegraphics[scale = 0.6]{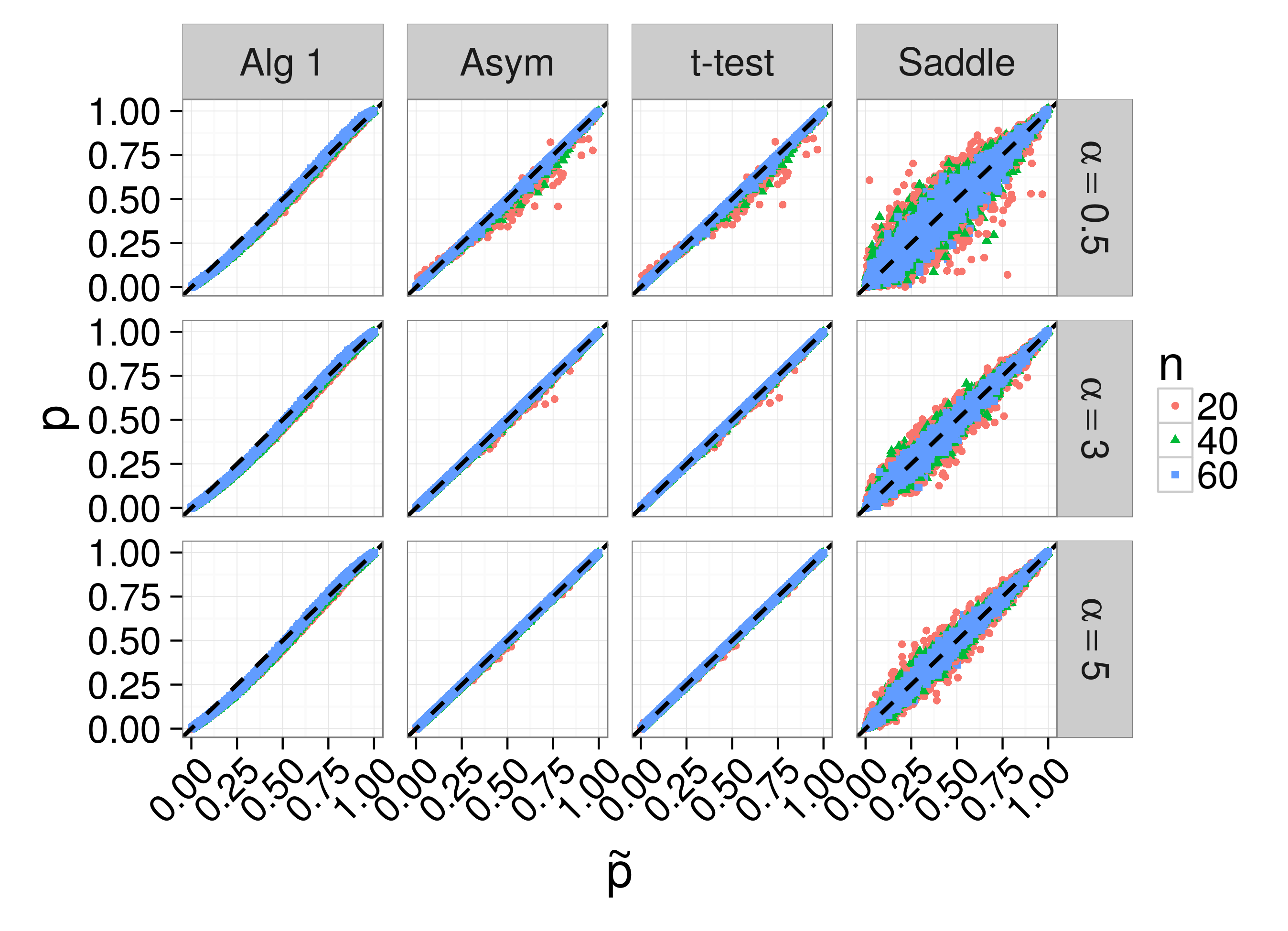}
\caption{Simulation results using the statistic $T=|\bar{x}-\bar{y}|$ with gamma data under the null $P_x = P_y$, with equal sample sizes of $n=n_x=n_y=20$, 40, 60. \textit{Alg 1} is our resampling algorithm with $B_{\text{pred}}=10^3$ iterations in each partition, \textit{Asym} is our asymptotic approximation, \textit{Saddle} is the saddlepoint approximation described in Appendix \ref{B}, \textit{t-test} shows the p-value from a two-sided t-test with unequal variance, and $\tilde{p}$ is from simple Monte Carlo resampling with $10^5$ iterations. The diagonal dashed line has slope of 1 and intercept of 0, and indicates agreement between methods.}
\label{simGammaDiff_sym_null}
\end{figure}

\begin{figure}[htbp]
\centering
\includegraphics[scale = 0.6]{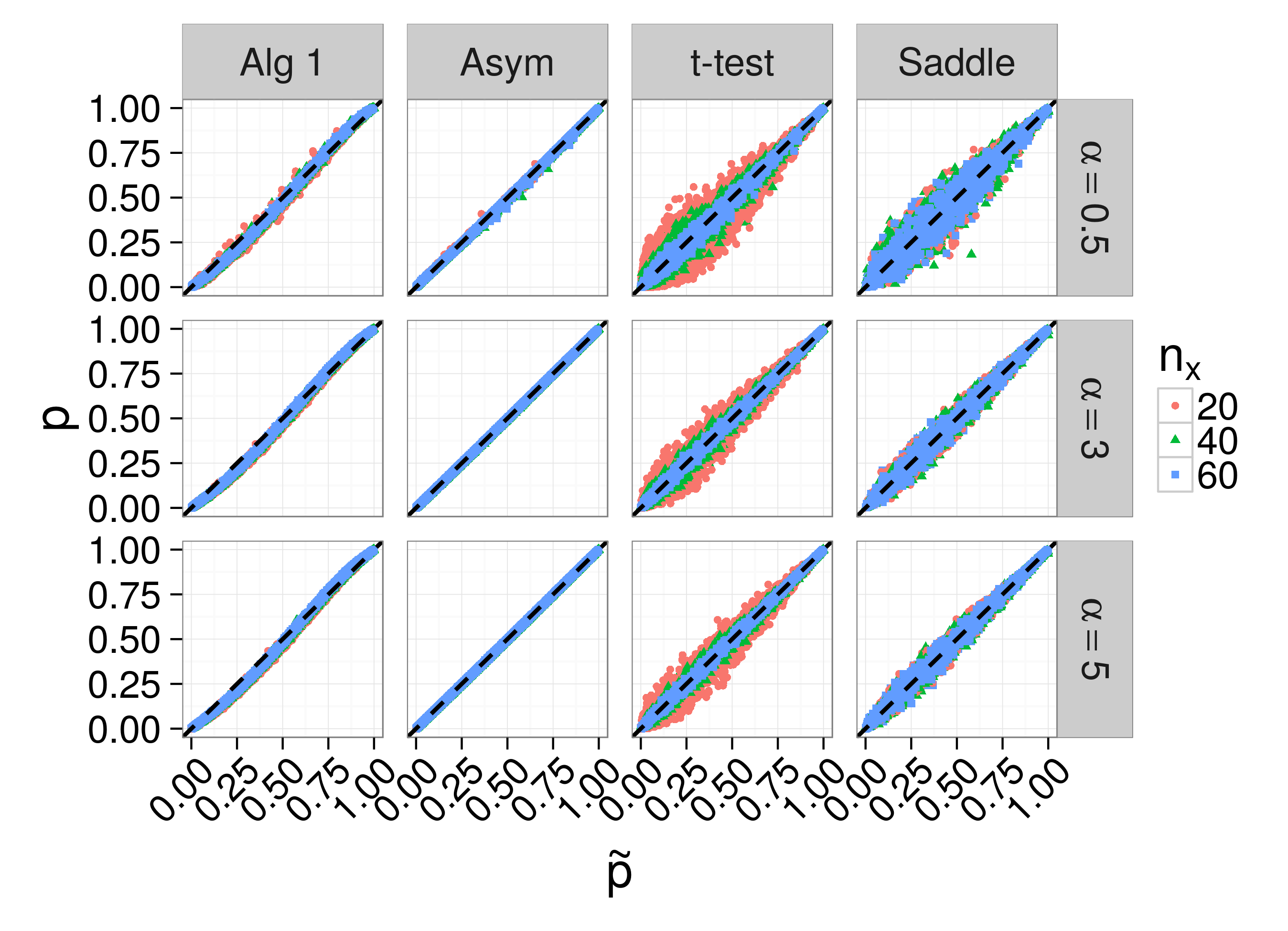}
\caption{Simulation results using the statistic $T=|\bar{x}-\bar{y}|$ with gamma data under the null $P_x = P_y$, with unequal sample sizes of $n_x=20, 40, 60$ and $n_y = 100$. \textit{Alg 1} is our resampling algorithm with $B_{\text{pred}}=10^3$ iterations in each partition, \textit{Asym} is our asymptotic approximation, \textit{Saddle} is the saddlepoint approximation described in Appendix \ref{B}, \textit{t-test} shows the p-value from a two-sided t-test with unequal variance, and $\tilde{p}$ is from simple Monte Carlo resampling with $10^5$ iterations. The diagonal dashed line has slope of 1 and intercept of 0, and indicates agreement between methods.}
\label{simGammaDiff_nonSym_null}
\end{figure}

Figures \ref{simGammaDiff_sym_null} and \ref{simGammaDiff_nonSym_null} suggest that our methods work well in this setting, and have less variability than both the t-test and saddlepoint approximation (using fixed $\alpha$ fixed and the MLE $\hat{\lambda}$ from the pooled data).

Tables \ref{type1ErrorGammaDiffSym} and \ref{type1ErrorGammaDiffNonSym} show the Type I error rates under the null $H_0: P_x = P_y$ for the equal and unequal sample size simulations, respectively. \textit{MC} is the unadjusted p-value from simple Monte Carlo resampling and $10^5$ iterations, \textit{Saddle} is the saddlepoint approximation described in Appendix \ref{B}, \textit{Alg 1} is our resampling algorithm with $B_{\text{pred}}=10^3$ iterations in each partition, \textit{Asym} is our asymptotic approximation, and \textit{t-test} shows the p-value from a two-sided t-test with unequal variance.

\begin{table}[htbp]
\centering
\caption{Type I error rates $\Pr(\text{p-value} \le \text{signif level} | H_0)$ for $T = |\bar{x} - \bar{y}|$ with gamma data and equal sample sizes $n=n_x = n_y$. \textit{MC} is the unadjusted p-value from simple Monte Carlo resampling and $10^5$ iterations, \textit{Saddle} is the saddlepoint approximation described in Appendix \ref{B}, \textit{Alg 1} is our resampling algorithm with $B_{\text{pred}}=10^3$ iterations in each partition, \textit{Asym} is our asymptotic approximation, and \textit{t-test} shows the p-value from a two-sided t-test with unequal variance.}
\begin{tabular}{cccccccc}
\hline \hline
$\alpha$ & signif level & $n_x$ & MC & Saddle & Alg 1 & Asym & t-test \\
\hline
0.5 & 0.01 & 20 & 0.0110 & 0.0100 & 0.0165 & 0.0060 & 0.0045 \\
0.5 & 0.01 & 40 & 0.0125 & 0.0110 & 0.0150 & 0.0090 & 0.0085 \\
0.5 & 0.01 & 60 & 0.0115 & 0.0085 & 0.0140 & 0.0105 & 0.0105 \\
\cline{2-8}
0.5 & 0.05 & 20 & 0.0495 & 0.0560 & 0.0665 & 0.0460 & 0.0410 \\
0.5 & 0.05 & 40 & 0.0515 & 0.0490 & 0.0660 & 0.0520 & 0.0485 \\
0.5 & 0.05 & 60 & 0.0455 & 0.0450 & 0.0595 & 0.0435 & 0.0425 \\
\cline{2-8}
0.5 & 0.10 & 20 & 0.1000 & 0.1020 & 0.1280 & 0.1020 & 0.0945 \\
0.5 & 0.10 & 40 & 0.0995 & 0.0950 & 0.1260 & 0.1020 & 0.0975 \\
0.5 & 0.10 & 60 & 0.0980 & 0.0950 & 0.1230 & 0.0990 & 0.0965 \\
\hline
3.0 & 0.01 & 20 & 0.0115 & 0.0070 & 0.0165 & 0.0095 & 0.0095 \\
3.0 & 0.01 & 40 & 0.0120 & 0.0115 & 0.0150 & 0.0120 & 0.0120 \\
3.0 & 0.01 & 60 & 0.0075 & 0.0075 & 0.0080 & 0.0070 & 0.0070 \\
\cline{2-8}
3.0 & 0.05 & 20 & 0.0510 & 0.0465 & 0.0715 & 0.0515 & 0.0495 \\
3.0 & 0.05 & 40 & 0.0545 & 0.0575 & 0.0680 & 0.0560 & 0.0525 \\
3.0 & 0.05 & 60 & 0.0470 & 0.0475 & 0.0665 & 0.0480 & 0.0475 \\
\cline{2-8}
3.0 & 0.10 & 20 & 0.0940 & 0.0990 & 0.1280 & 0.0980 & 0.0940 \\
3.0 & 0.10 & 40 & 0.0990 & 0.1000 & 0.1320 & 0.0990 & 0.0980 \\
3.0 & 0.10 & 60 & 0.0980 & 0.0985 & 0.1230 & 0.0980 & 0.0980 \\
\hline
5.0 & 0.01 & 20 & 0.0115 & 0.0095 & 0.0175 & 0.0115 & 0.0115 \\
5.0 & 0.01 & 40 & 0.0090 & 0.0065 & 0.0130 & 0.0080 & 0.0080 \\
5.0 & 0.01 & 60 & 0.0045 & 0.0055 & 0.0085 & 0.0040 & 0.0040 \\
\cline{2-8}
5.0 & 0.05 & 20 & 0.0525 & 0.0525 & 0.0675 & 0.0525 & 0.0505 \\
5.0 & 0.05 & 40 & 0.0525 & 0.0545 & 0.0715 & 0.0535 & 0.0520 \\
5.0 & 0.05 & 60 & 0.0460 & 0.0445 & 0.0580 & 0.0470 & 0.0470 \\
\cline{2-8}
5.0 & 0.10 & 20 & 0.0965 & 0.0960 & 0.1220 & 0.0980 & 0.0955 \\
5.0 & 0.10 & 40 & 0.1070 & 0.1060 & 0.1370 & 0.1080 & 0.1080 \\
5.0 & 0.10 & 60 & 0.0925 & 0.0905 & 0.1300 & 0.0940 & 0.0915
\end{tabular}
\label{type1ErrorGammaDiffSym}
\end{table}

\begin{table}[htbp]
\centering
\caption{Type I error rates $\Pr(\text{p-value} \le \text{signif level} | H_0)$ for $T = |\bar{x} - \bar{y}|$ with gamma data and unequal sample sizes $n_x \ne n_y$ ($n_x$ shown, and $n_y = 100$). $\alpha$ is the shape parameter in the gamma distribution, \textit{MC} is the unadjusted p-value from simple Monte Carlo resampling and $10^5$ iterations, \textit{Saddle} is the saddlepoint approximation described in Appendix \ref{B}, \textit{Alg 1} is our resampling algorithm with $B_{\text{pred}}=10^3$ iterations in each partition, \textit{Asym} is our asymptotic approximation, and \textit{t-test} shows the p-value from a two-sided t-test with unequal variance.}
\begin{tabular}{cccccccc}
\hline \hline
$\alpha$ & signif level & $n_x$ & MC & Saddle & Alg 1 & Asym & t-test \\
\hline
0.5 & 0.01 & 20 & 0.0095 & 0.0095 & 0.0105 & 0.0085 & 0.0245 \\
0.5 & 0.01 & 40 & 0.0090 & 0.0060 & 0.0105 & 0.0070 & 0.0140 \\
0.5 & 0.01 & 60 & 0.0130 & 0.0160 & 0.0170 & 0.0105 & 0.0135 \\
\cline{2-8}
0.5 & 0.05 & 20 & 0.0460 & 0.0465 & 0.0675 & 0.0440 & 0.0740 \\
0.5 & 0.05 & 40 & 0.0455 & 0.0470 & 0.0620 & 0.0445 & 0.0540 \\
0.5 & 0.05 & 60 & 0.0505 & 0.0500 & 0.0670 & 0.0495 & 0.0530 \\
\cline{2-8}
0.5 & 0.1 & 20 & 0.0915 & 0.0930 & 0.1260 & 0.0845 & 0.1220 \\
0.5 & 0.1 & 40 & 0.0980 & 0.0945 & 0.1280 & 0.0960 & 0.1040 \\
0.5 & 0.1 & 60 & 0.1100 & 0.1080 & 0.1410 & 0.1100 & 0.1080 \\
\hline
3.0 & 0.01 & 20 & 0.0085 & 0.0095 & 0.0155 & 0.0085 & 0.0135 \\
3.0 & 0.01 & 40 & 0.0135 & 0.0120 & 0.0185 & 0.0135 & 0.0140 \\
3.0 & 0.01 & 60 & 0.0070 & 0.0055 & 0.0090 & 0.0070 & 0.0070 \\
\cline{2-8}
3.0 & 0.05 & 20 & 0.0440 & 0.0440 & 0.0665 & 0.0435 & 0.0480 \\
3.0 & 0.05 & 40 & 0.0480 & 0.0555 & 0.0695 & 0.0485 & 0.0530 \\
3.0 & 0.05 & 60 & 0.0470 & 0.0495 & 0.0635 & 0.0485 & 0.0460 \\
\cline{2-8}
3.0 & 0.1 & 20 & 0.0875 & 0.0885 & 0.1260 & 0.0885 & 0.1000 \\
3.0 & 0.1 & 40 & 0.1050 & 0.1040 & 0.1350 & 0.1060 & 0.0975 \\
3.0 & 0.1 & 60 & 0.1040 & 0.1080 & 0.1370 & 0.1040 & 0.1040 \\
\hline
5.0 & 0.01 & 20 & 0.0140 & 0.0110 & 0.0200 & 0.0140 & 0.0145 \\
5.0 & 0.01 & 40 & 0.0090 & 0.0100 & 0.0155 & 0.0090 & 0.0100 \\
5.0 & 0.01 & 60 & 0.0105 & 0.0090 & 0.0120 & 0.0110 & 0.0075 \\
\cline{2-8}
5.0 & 0.05 & 20 & 0.0540 & 0.0535 & 0.0845 & 0.0540 & 0.0620 \\
5.0 & 0.05 & 40 & 0.0530 & 0.0525 & 0.0730 & 0.0525 & 0.0555 \\
5.0 & 0.05 & 60 & 0.0520 & 0.0510 & 0.0635 & 0.0520 & 0.0500 \\
\cline{2-8}
5.0 & 0.1 & 20 & 0.1140 & 0.1160 & 0.1520 & 0.1140 & 0.1130 \\
5.0 & 0.1 & 40 & 0.0995 & 0.1000 & 0.1300 & 0.0995 & 0.1040 \\
5.0 & 0.1 & 60 & 0.1040 & 0.0985 & 0.1320 & 0.1050 & 0.1060
\end{tabular}
\label{type1ErrorGammaDiffNonSym}
\end{table}

\subsection{Ratio of means with gamma data}

In this subsection, we use the statistic $T = \max(\bar{x}/ \bar{y}, \bar{y}/ \bar{x})$ with data generated as gamma random variables.

\subsubsection{Small sample sizes}

We generated data $x_{i}, i=1,\ldots,n_x$ and $y_j, j=1,\ldots,n_y$ as realizations of the respective random variables $X_i\overset{\text{iid}}{\sim} \text{Gamma}(\alpha, \lambda_x)$ and $Y_{j} \overset{\text{iid}}{\sim} \text{Gamma}(\alpha, \lambda_y)$, where $\lambda$ is the rate parameter, and $\alpha = 0.5, 3, 5$. For equal sample sizes, we set $n = n_x = n_y = 20, 40, 60$, and for unequal sample sizes, we set $n_x = 20, 40, 60$ and $n_y = 100$. For all simulations, we set $\lambda_x = 1$. For equal samples sizes, we set $\lambda_y = 7, 12.5$ for each $n$. For unequal sample sizes, we set $\lambda_y = 2.25, 2.75$ for all $n_x$ for $\alpha = 0.5$, $\lambda_y = 2, 2.5$ for all $n_x$ for $\alpha = 3$, and $\lambda_y = 1.75, 2.25$ for all $n_x$ for $\alpha = 5$. We simulated 100 datasets for each combination of parameters.

Results for equal and unequal sample size are shown in Figures \ref{simGamma_sym_smallN} and \ref{simGamma_nonSym_smallN}, respectively. \textit{Alg 1} is our resampling algorithm with $B_{\text{pred}}=10^3$ iterations in each partition, \textit{Asym} is our asymptotic approximation, \textit{Delta} is the delta method, \textit{SAMC} is the SAMC algorithm, and $p_{\beta}$ is the two-sided p-value from the beta prime distribution. Figures \ref{simGamma_sym_iter_smallN} and \ref{simGamma_nonSym_iter_smallN} show the number of iterations used by our resampling algorithm.

\begin{figure}[htbp]
\centering
  \begin{subfigure}{0.62\textwidth}
  \centering
  \includegraphics[width=1\linewidth]{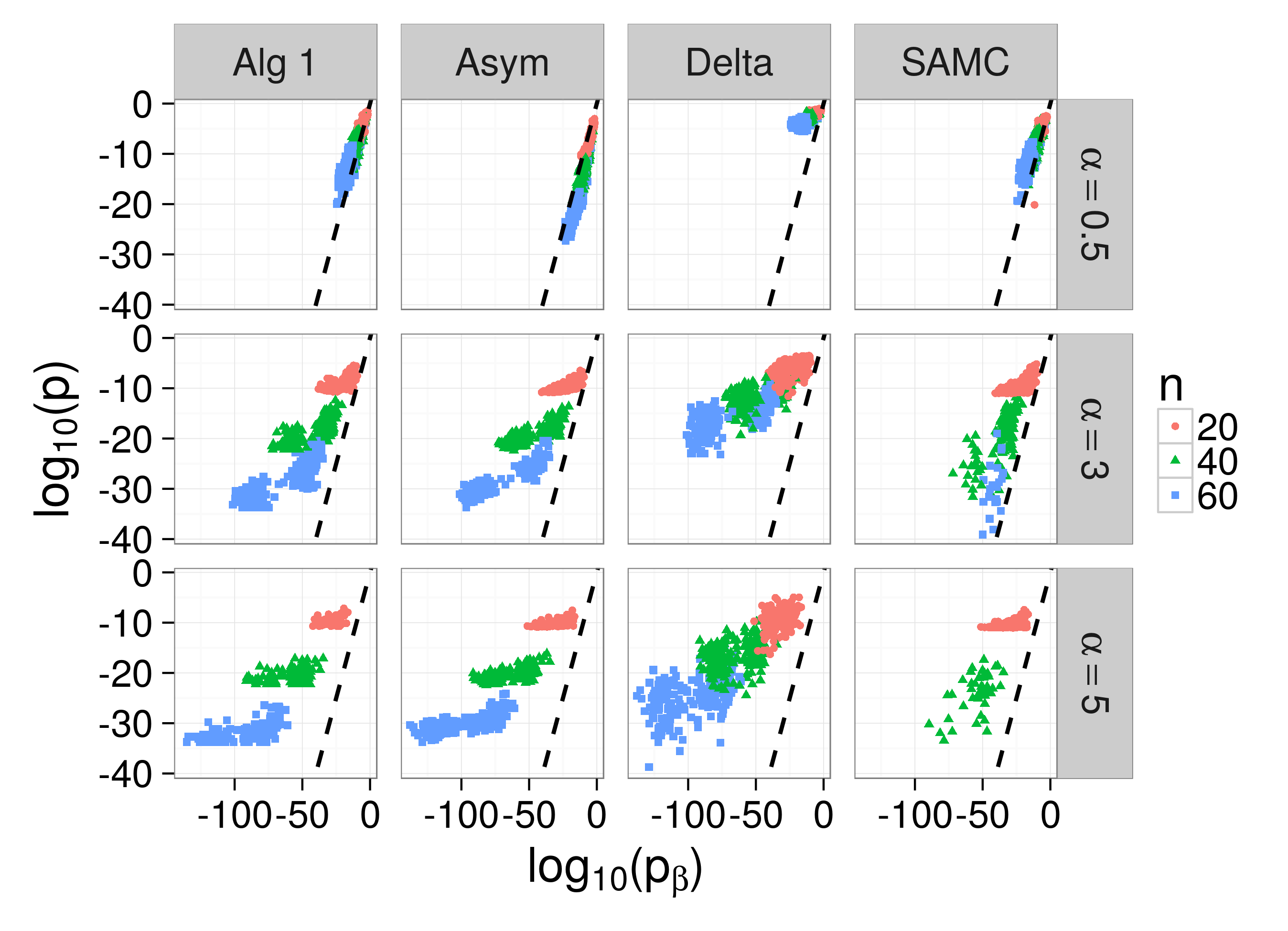}
  \caption{p-values}
  \end{subfigure}
  \begin{subfigure}{0.34\textwidth}
  \centering
  \includegraphics[width=1\linewidth]{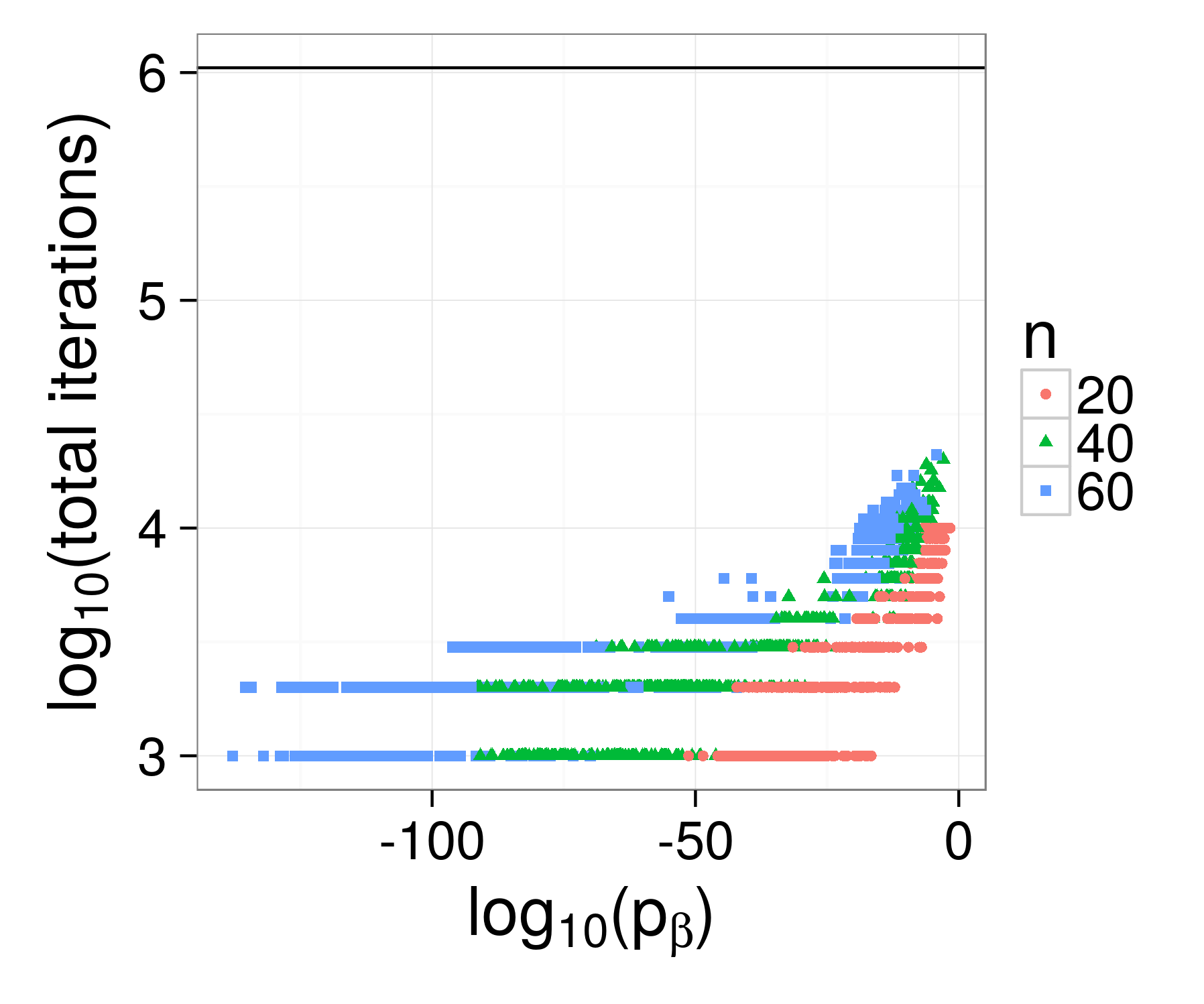}
  \caption{Iterations in resampling algorithm}
  \label{simGamma_sym_iter_smallN}
  \end{subfigure}
\caption{Simulation results using the statistic $T= \max(\bar{x} / \bar{y}, \bar{y} / \bar{x})$ with gamma data and equal sample sizes of $n=n_x=n_y=20$, 40, 60. \textit{Alg 1} is our resampling algorithm with $B_{\text{pred}}=10^3$ iterations in each partition, \textit{Asym} is our asymptotic approximation, \textit{Delta} is the delta method, \textit{SAMC} is the SAMC algorithm, and $p_{\beta}$ is the two-sided p-value from the beta prime distribution. The diagonal dashed line has slope of 1 and intercept of 0, and indicates agreement between methods. The horizontal line in \ref{simGamma_sym_iter_smallN} shows the number of permutations used in the SAMC algorithm (set in advance, and independent of p-value). The SAMC algorithm did not produce values for 652 tests (points missing).}
\label{simGamma_sym_smallN}
\end{figure}

\begin{figure}[htbp]
\centering
  \begin{subfigure}{0.62\textwidth}
  \centering
  \includegraphics[width=1\linewidth]{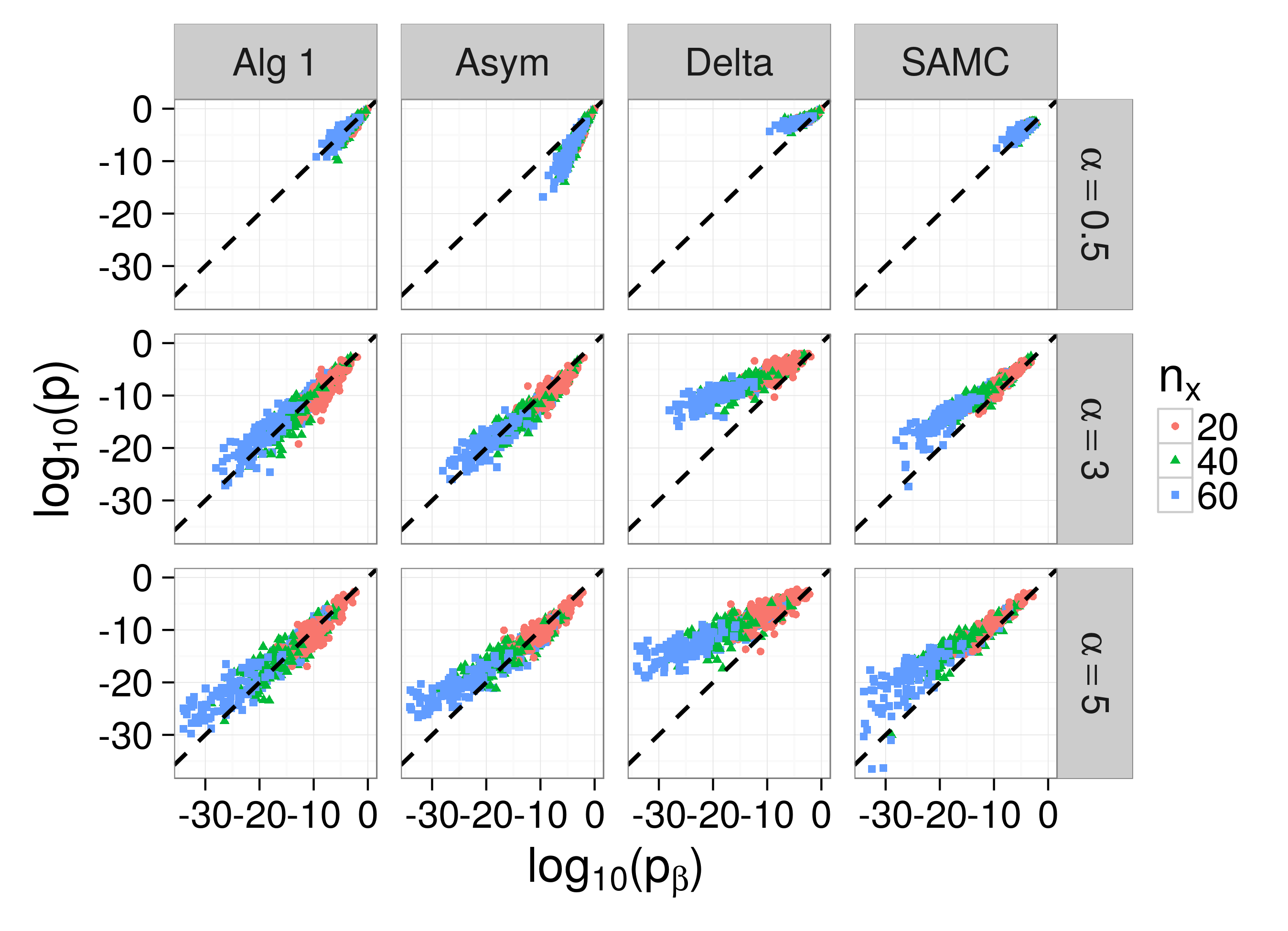}
  \caption{p-values}
  \end{subfigure}
  \begin{subfigure}{0.34\textwidth}
  \centering
  \includegraphics[width=1\linewidth]{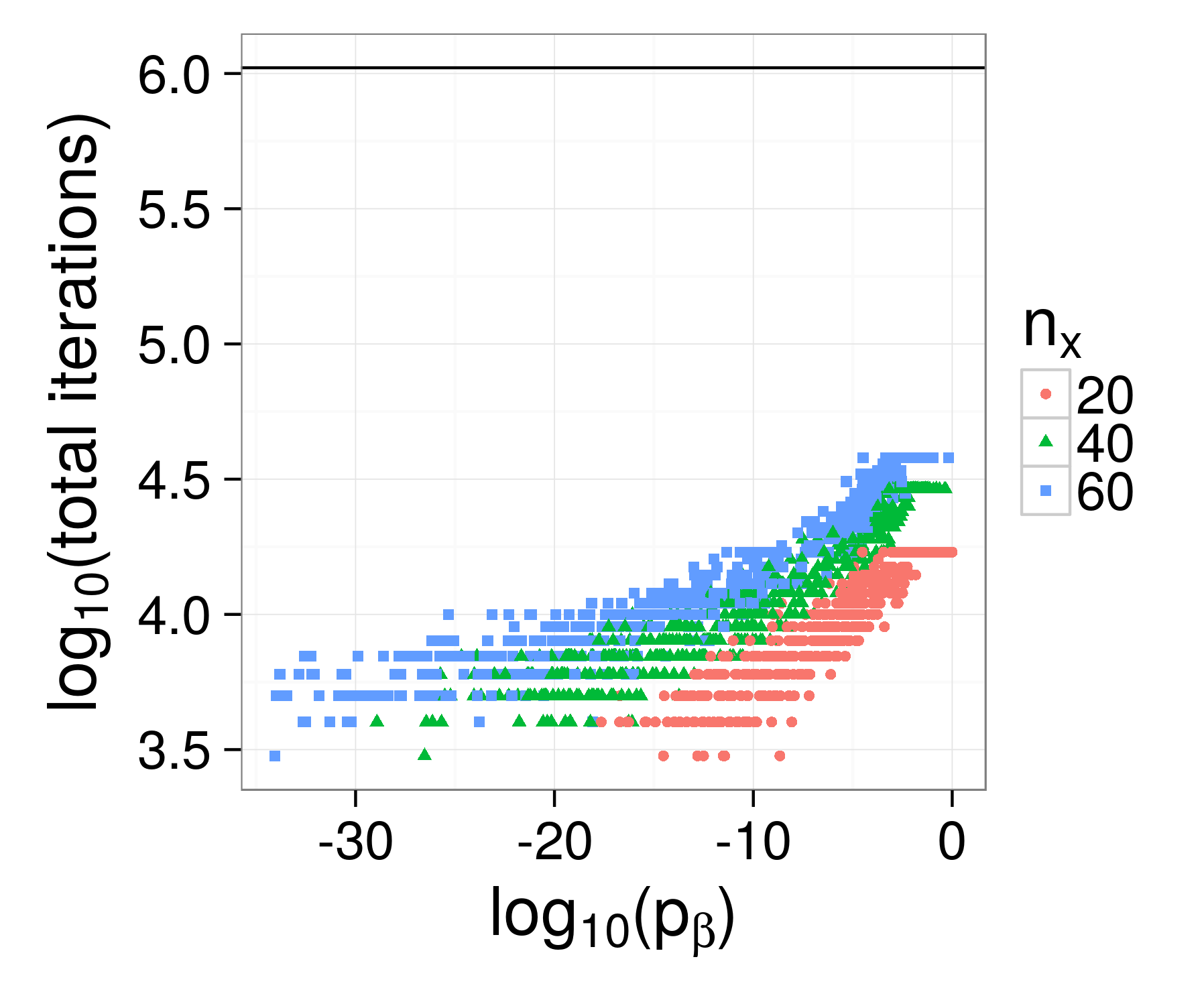}
  \caption{Iterations in resampling algorithm}
  \label{simGamma_nonSym_iter_smallN}
  \end{subfigure}
\caption{Simulation results using the statistic $T= \max(\bar{x} / \bar{y}, \bar{y} / \bar{x})$, with gamma data and unequal sample sizes of $n_x=20$, 40, 60, $n_y = 100$, and rates $\lambda_y = 5, 10$, and $\lambda_x = 1$. \textit{Alg 1} is our resampling algorithm with $B_{\text{pred}}=10^3$ iterations in each partition, \textit{Asym} is our asymptotic approximation, \textit{Delta} is the delta method, \textit{SAMC} is the SAMC algorithm, and $p_{\beta}$ is the two-sided p-value from the beta prime distribution. The diagonal dashed line has slope of 1 and intercept of 0, and indicates agreement between methods. The horizontal line in \ref{simGamma_nonSym_iter_smallN} shows the number of permutations used in the SAMC algorithm (set in advance, and independent of p-value). The SAMC algorithm did not produce values for 304 tests (points missing)}
\label{simGamma_nonSym_smallN}
\end{figure}

\subsubsection{Under the null $P_x = P_y$}

We generated data $x_{i}, i=1,\ldots,n_x$ and $y_j, j=1,\ldots,n_y$ as realizations of the respective random variables $X_i\overset{\text{iid}}{\sim} \text{Gamma}(\alpha, 1)$ and $Y_{j} \overset{\text{iid}}{\sim} \text{Gamma}(\alpha, 1)$ for $\alpha = 0.5, 3, 5$. For equal sample sizes, we set $n = n_x = n_y = 20, 40, 60$. For unequal sample sizes, we set $n_x = 20, 40, 60$ and $n_y = 100$. For both equal and unequal sample sizes, we simulated 1,000 datasets for each combination of parameters (we used 1,000 datasets, as opposed to 100, to better investigate the type I error rate). We used the p-value from simple Monte Carlo resampling with $10^5$ iterations, denoted as $\tilde{p}$, as an approximation for the true permutation p-value.

Results for equal and unequal sample size are shown in Figures \ref{simGamma_sym_null} and \ref{simGamma_nonSym_null}, respectively. \textit{Alg 1} is our resampling algorithm with $B_{\text{pred}}=10^3$ iterations in each partition, \textit{Asym} is our asymptotic approximation, \textit{Delta} is the delta method, \textit{Beta prime} gives the p-value from the beta prime distribution, and $\tilde{p}$ is from simple Monte Carlo resampling with $10^5$ iterations. Given the large p-values, using $10^5$ Monte Carlo resamples should be sufficient to obtain reliable estimates of the true permutation p-value. Therefore, this comparison demonstrates that the permutation p-value is not exactly the same as the p-value from the beta prime distribution. However, it appears reasonably close, and so we use it as an approximation to the truth in other simulations, in which the p-values are much smaller and simple Monte Carlo methods are not feasible.

We do not show results from the SAMC algorithm, because as noted above, the \verb|EXPERT| package \citep{yu2011} does not provide results for p-values $>10^{-3}$.

\begin{figure}[htbp]
\centering
\includegraphics[scale = 0.5]{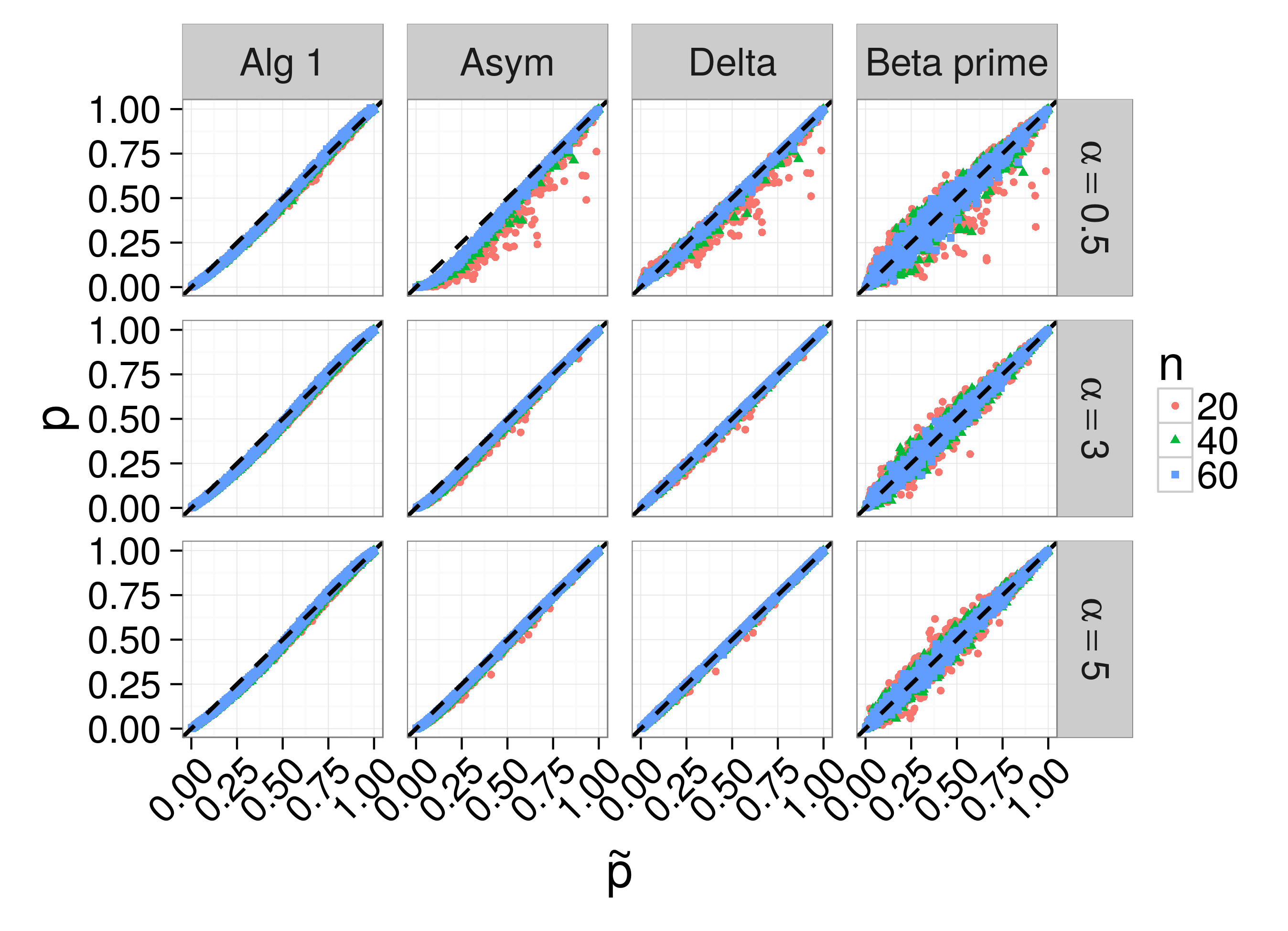}
\caption{Simulation results using the statistic $T= \max(\bar{x} / \bar{y}, \bar{y} / \bar{x})$ with gamma data under the null of $P_x = P_y$ (rates $\lambda_x = \lambda_y = 1$), with equal sample sizes of $n=n_x=n_y=20$, 40, 60. \textit{Alg 1} is our resampling algorithm with $B_{\text{pred}}=10^3$ iterations in each partition, \textit{Asym} is our asymptotic approximation, \textit{Delta} is the delta method, \textit{Beta prime} gives the p-value from the beta prime distribution, and $\tilde{p}$ is from simple Monte Carlo resampling with $10^5$ iterations. The diagonal dashed line has slope of 1 and intercept of 0, and indicates agreement between methods.}
\label{simGamma_sym_null}
\end{figure}

\begin{figure}[htbp]
\centering
\includegraphics[scale = 0.5]{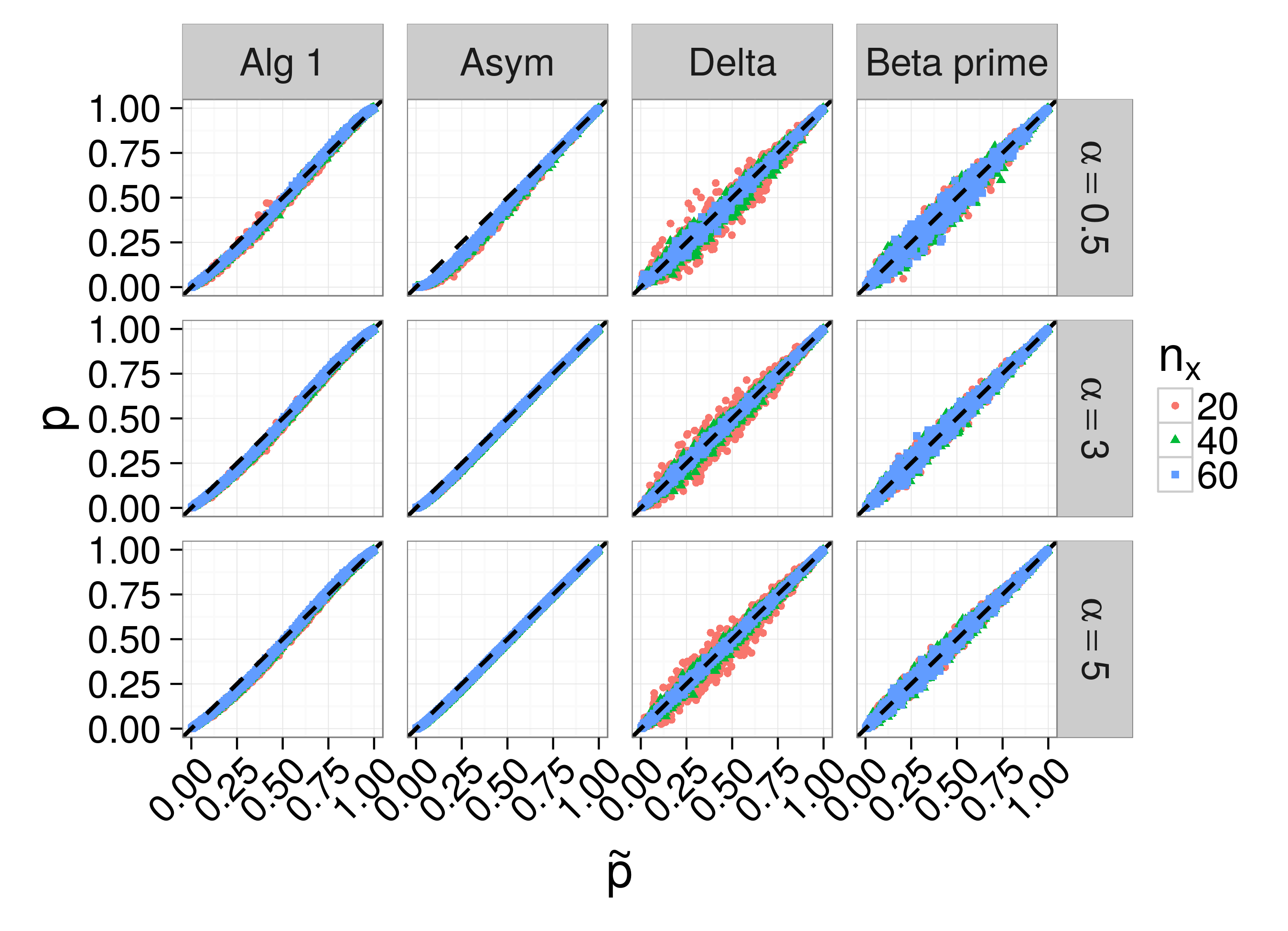}
\caption{Simulation results using the statistic $T= \max(\bar{x} / \bar{y}, \bar{y} / \bar{x})$ with gamma data under the null of $P_x = P_y$ (rates $\lambda_x = \lambda_y = 1$), with unequal sample sizes of $n_x = 20$, 40, 60 and $n_y = 100$. \textit{Alg 1} is our resampling algorithm with $B_{\text{pred}}=10^3$ iterations in each partition, \textit{Asym} is our asymptotic approximation, \textit{Delta} is the delta method, \textit{Beta prime} gives the p-value from the beta prime distribution, and $\tilde{p}$ is from simple Monte Carlo resampling with $10^5$ iterations. The diagonal dashed line has slope of 1 and intercept of 0, and indicates agreement between methods.}
\label{simGamma_nonSym_null}
\end{figure}

Tables \ref{type1ErrorGammaRatioSym} and \ref{type1ErrorGammaRatioNonSym} show the Type I error rates under the null $H_0: P_x = P_y$ for the equal and unequal sample size simulations, respectively. \textit{MC} is the unadjusted p-value from simple Monte Carlo resampling and $10^5$ iterations, \textit{Beta prime} is the p-value from the beta prime distribution, \textit{Alg 1} is our resampling algorithm, and \textit{Asymptotic} is our asymptotic approximation.

\begin{table}[htbp]
\centering
\caption{Type I error rates $\Pr(\text{p-value} \le \text{signif level} | H_0)$ for $T = \max(\bar{x} / \bar{y}, \bar{y} / \bar{x})$ with gamma data and equal sample sizes $n=n_x = n_y$. $\alpha$ is the shape parameter in the gamma distribution, \textit{MC} is simple Monte Carlo resampling with $10^5$ iterations, \textit{Alg 1} is our resampling algorithm, and \textit{Asymptotic} is our asymptotic approximation, \textit{Delta} is the delta method, and \textit{Beta prime} is the the beta prime distribution.}
\begin{tabular}{cccccccc}
\hline \hline
$\alpha$ & signif level & $n$ & MC & Alg 1 & Asym & Delta & Beta prime \\
\hline
0.5 & 0.01 & 20 & 0.013 & 0.018 & 0.093 & 0.002 & 0.015 \\
0.5 & 0.01 & 40 & 0.007 & 0.014 & 0.055 & 0.001 & 0.007 \\
0.5 & 0.01 & 60 & 0.007 & 0.010 & 0.047 & 0.002 & 0.011 \\
\cline{2-8}
0.5 & 0.05 & 20 & 0.050 & 0.076 & 0.182 & 0.026 & 0.053 \\
0.5 & 0.05 & 40 & 0.050 & 0.072 & 0.135 & 0.037 & 0.055 \\
0.5 & 0.05 & 60 & 0.048 & 0.068 & 0.114 & 0.043 & 0.050 \\
\cline{2-8}
0.5 & 0.10 & 20 & 0.110 & 0.136 & 0.243 & 0.106 & 0.108 \\
0.5 & 0.10 & 40 & 0.106 & 0.135 & 0.196 & 0.114 & 0.104 \\
0.5 & 0.10 & 60 & 0.096 & 0.127 & 0.178 & 0.101 & 0.097 \\
\hline
3.0 & 0.01 & 20 & 0.007 & 0.012 & 0.027 & 0.003 & 0.006 \\
3.0 & 0.01 & 40 & 0.012 & 0.016 & 0.025 & 0.010 & 0.010 \\
3.0 & 0.01 & 60 & 0.012 & 0.015 & 0.025 & 0.012 & 0.008 \\
\cline{2-8}
3.0 & 0.05 & 20 & 0.043 & 0.067 & 0.088 & 0.046 & 0.044 \\
3.0 & 0.05 & 40 & 0.053 & 0.062 & 0.073 & 0.052 & 0.051 \\
3.0 & 0.05 & 60 & 0.059 & 0.075 & 0.080 & 0.061 & 0.049 \\
\cline{2-8}
3.0 & 0.10 & 20 & 0.095 & 0.126 & 0.143 & 0.103 & 0.090 \\
3.0 & 0.10 & 40 & 0.098 & 0.133 & 0.147 & 0.104 & 0.103 \\
3.0 & 0.10 & 60 & 0.095 & 0.115 & 0.116 & 0.097 & 0.093 \\
\hline
5.0 & 0.01 & 20 & 0.009 & 0.015 & 0.023 & 0.009 & 0.009 \\
5.0 & 0.01 & 40 & 0.008 & 0.013 & 0.025 & 0.008 & 0.011 \\
5.0 & 0.01 & 60 & 0.012 & 0.012 & 0.019 & 0.012 & 0.013 \\
\cline{2-8}
5.0 & 0.05 & 20 & 0.046 & 0.063 & 0.082 & 0.054 & 0.052 \\
5.0 & 0.05 & 40 & 0.048 & 0.063 & 0.066 & 0.050 & 0.043 \\
5.0 & 0.05 & 60 & 0.055 & 0.078 & 0.079 & 0.057 & 0.057 \\
\cline{2-8}
5.0 & 0.10 & 20 & 0.093 & 0.130 & 0.139 & 0.106 & 0.099 \\
5.0 & 0.10 & 40 & 0.091 & 0.134 & 0.138 & 0.094 & 0.093 \\
5.0 & 0.10 & 60 & 0.115 & 0.138 & 0.136 & 0.116 & 0.112
\end{tabular}
\label{type1ErrorGammaRatioSym}
\end{table}

\begin{table}[htbp]
\centering
\caption{Type I error rates $\Pr(\text{p-value} \le \text{signif level} | H_0)$ for $T = \max(\bar{x} / \bar{y}, \bar{y} / \bar{x})$ with gamma data and unequal sample sizes $n_x \ne n_y$ ($n_x$ shown, and $n_y = 100$). $\alpha$ is the shape parameter in the gamma distribution, \textit{MC} is simple Monte Carlo resampling with $10^5$ iterations, \textit{Alg 1} is our resampling algorithm, and \textit{Asymptotic} is our asymptotic approximation, \textit{Delta} is the delta method with, and \textit{Beta prime} is the beta prime distribution.}
\begin{tabular}{cccccccc}
\hline \hline
$\alpha$ & signif level & $n_x$ & MC & Alg 1 & Asym & Delta & Beta prime \\
\hline
0.5 & 0.01 & 20 & 0.011 & 0.015 & 0.065 & 0.006 & 0.011 \\
0.5 & 0.01 & 40 & 0.015 & 0.018 & 0.053 & 0.003 & 0.013 \\
0.5 & 0.01 & 60 & 0.008 & 0.011 & 0.042 & 0.003 & 0.012 \\
\cline{2-8}
0.5 & 0.05 & 20 & 0.043 & 0.069 & 0.128 & 0.047 & 0.053 \\
0.5 & 0.05 & 40 & 0.057 & 0.072 & 0.133 & 0.048 & 0.056 \\
0.5 & 0.05 & 60 & 0.052 & 0.071 & 0.112 & 0.045 & 0.050 \\
\cline{2-8}
0.5 & 0.10 & 20 & 0.098 & 0.121 & 0.179 & 0.109 & 0.091 \\
0.5 & 0.10 & 40 & 0.113 & 0.141 & 0.195 & 0.119 & 0.108 \\
0.5 & 0.10 & 60 & 0.106 & 0.126 & 0.172 & 0.109 & 0.098 \\
\hline
3.0 & 0.01 & 20 & 0.011 & 0.016 & 0.023 & 0.012 & 0.011 \\
3.0 & 0.01 & 40 & 0.005 & 0.011 & 0.027 & 0.005 & 0.009 \\
3.0 & 0.01 & 60 & 0.011 & 0.013 & 0.017 & 0.011 & 0.011 \\
\cline{2-8}
3.0 & 0.05 & 20 & 0.047 & 0.070 & 0.073 & 0.059 & 0.039 \\
3.0 & 0.05 & 40 & 0.058 & 0.065 & 0.069 & 0.057 & 0.054 \\
3.0 & 0.05 & 60 & 0.053 & 0.066 & 0.070 & 0.050 & 0.052 \\
\cline{2-8}
3.0 & 0.10 & 20 & 0.088 & 0.128 & 0.135 & 0.104 & 0.087 \\
3.0 & 0.10 & 40 & 0.094 & 0.124 & 0.124 & 0.101 & 0.089 \\
3.0 & 0.10 & 60 & 0.094 & 0.119 & 0.117 & 0.097 & 0.097 \\
\hline
5.0 & 0.01 & 20 & 0.010 & 0.014 & 0.022 & 0.007 & 0.009 \\
5.0 & 0.01 & 40 & 0.011 & 0.011 & 0.017 & 0.011 & 0.009 \\
5.0 & 0.01 & 60 & 0.015 & 0.020 & 0.025 & 0.015 & 0.018 \\
\cline{2-8}
5.0 & 0.05 & 20 & 0.058 & 0.074 & 0.085 & 0.066 & 0.054 \\
5.0 & 0.05 & 40 & 0.046 & 0.057 & 0.059 & 0.048 & 0.052 \\
5.0 & 0.05 & 60 & 0.059 & 0.081 & 0.085 & 0.061 & 0.062 \\
\cline{2-8}
5.0 & 0.10 & 20 & 0.110 & 0.145 & 0.143 & 0.121 & 0.114 \\
5.0 & 0.10 & 40 & 0.081 & 0.114 & 0.108 & 0.085 & 0.088 \\
5.0 & 0.10 & 60 & 0.113 & 0.145 & 0.138 & 0.118 & 0.115
\end{tabular}
\label{type1ErrorGammaRatioNonSym}
\end{table}

\section{Comparison with additional methods \label{D}}

\subsection{Moment-corrected correlation}

Moment-corrected correlation (MCC) \citep{zhou2015hypothesis} is an analytical approximation to the permutation p-value, which is applicable in multiple testing situations in which the test statistic is permutationally equivalent to a single inner product. Where applicable, this approach is fast, as it does not involve resampling. However, if the test statistic of interest is not permutationally equivalent to an inner product, the MCC approach cannot be used.

The statistic $T = \bar{x} - \bar{y}$ fits into this setting, whereas, to the best of our knowledge, $T=\bar{x} / \bar{y}$ does not. To see this, let $\bm{z} = (\bx', \by')'$ and $\bm{w}= (\underset{n_x}{\underbrace{1/n_x, \ldots, 1/n_x}}, \underset{n_y}{\underbrace{-1/n_y, \ldots, -1/n_y}})'$. Then $\bar{x} - \bar{y} = \bm{z}'\bm{w}$. In contrast, $\bar{x} / \bar{y}$ cannot be written in this form, and we conjecture that it is not permutationally equivalent to any statistic that can be written in this form.

Figures \ref{mcc_largeN} through \ref{mcc_null} show simulation results for two-sided and doubled p-values, as described by \citet{zhou2015hypothesis}, using the \verb|mcc| package \citep{mcc} under the same normal data settings as in Section \ref{diffMean_normal}. While MCC is more reliable for large sample sizes (Figure \ref{mcc_largeN}), MCC appears to suffer from the same bias as our methods for small sample sizes (Figure \ref{mcc_smallN}). Furthermore, we do not think that MCC can be used to obtain p-values for the statistic $T=\max(\bar{x} / \bar{y}, \bar{y} / \bar{x})$.

\begin{figure}[htbp]
\centering
  \begin{subfigure}{0.48\textwidth}
  \centering
  \includegraphics[width=1\linewidth]{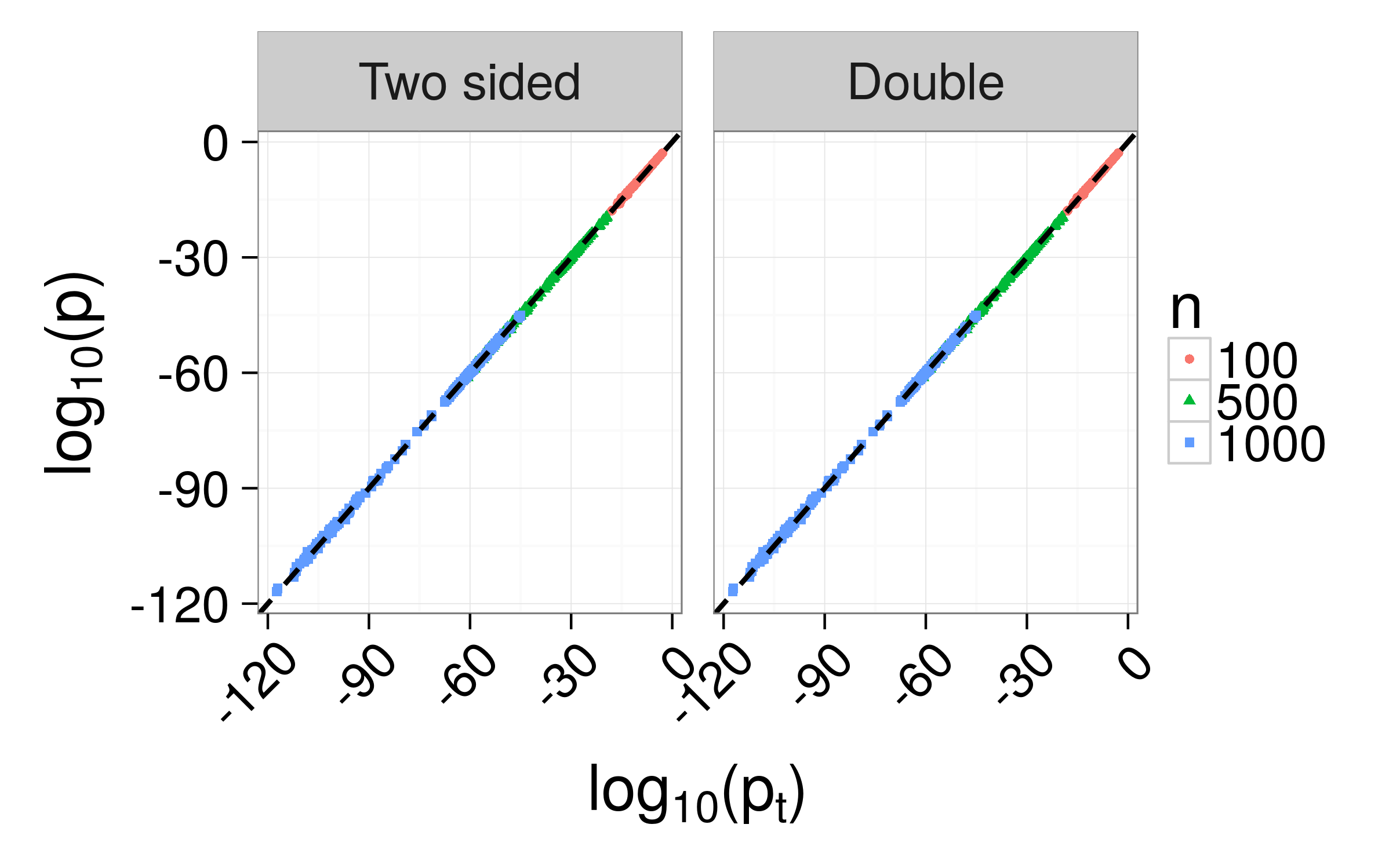}
  \caption{$n_x = n_y$}
  \end{subfigure}
  \begin{subfigure}{0.48\textwidth}
  \centering
  \includegraphics[width=1\linewidth]{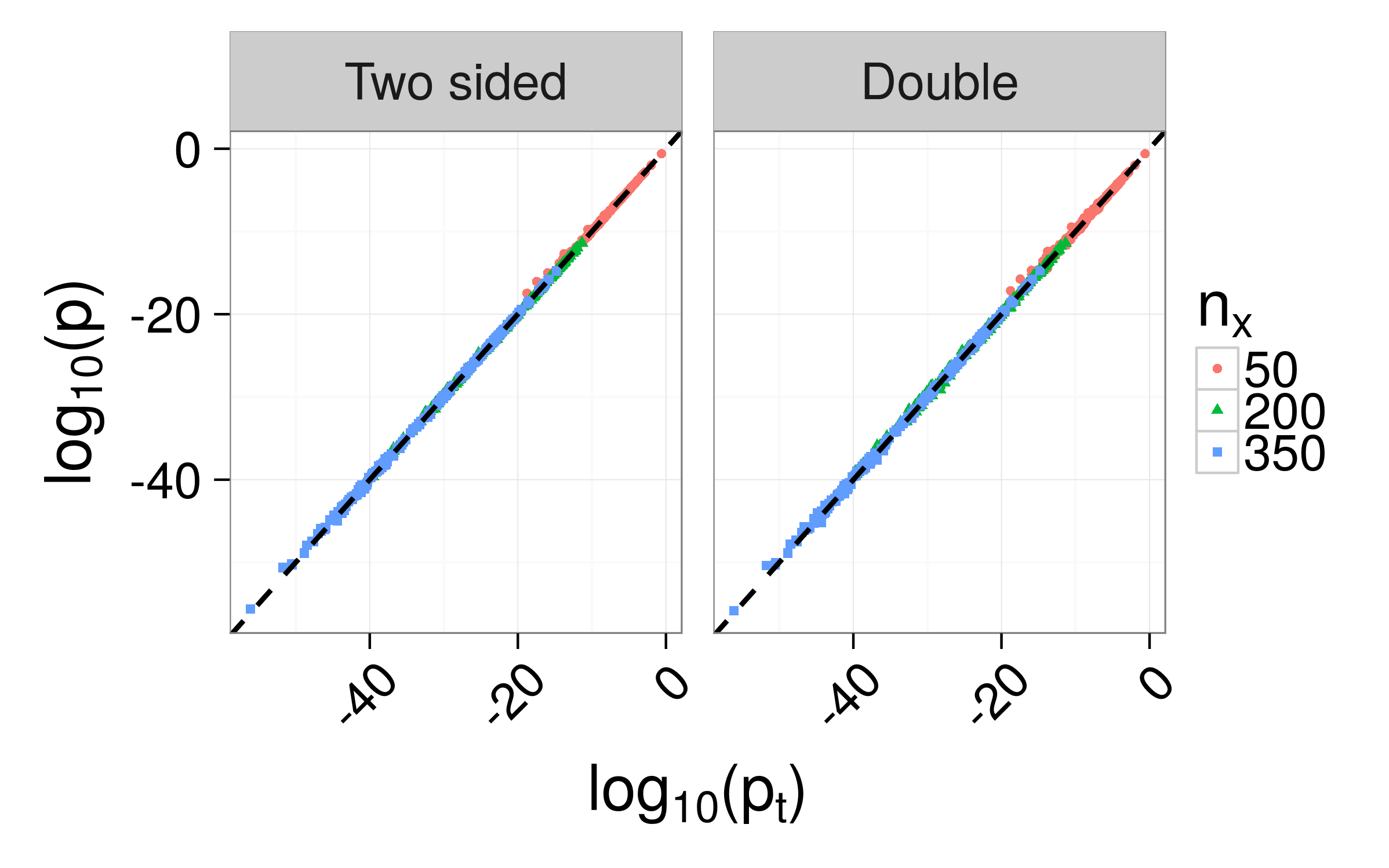}
  \caption{$n_x \ne n_y$}
  \end{subfigure}
\caption{MCC with large sample size for $T = |\bar{x} - \bar{y}|$ with normal data and equal sample sizes of $n = n_x = n_y = 100, 500, 1,000$, and unequal sample sizes of $n_x = 50, 200, 350$ with $n_y = 500$. In both cases, data were simulated as normal random variables with $\mu_y = 0$, $\mu_x = 0.75, 1$ and $\sigma_x^2 = \sigma_x^2 = 1$.}
\label{mcc_largeN}
\end{figure}

\begin{figure}[htbp]
\centering
  \begin{subfigure}{0.48\textwidth}
  \centering
  \includegraphics[width=1\linewidth]{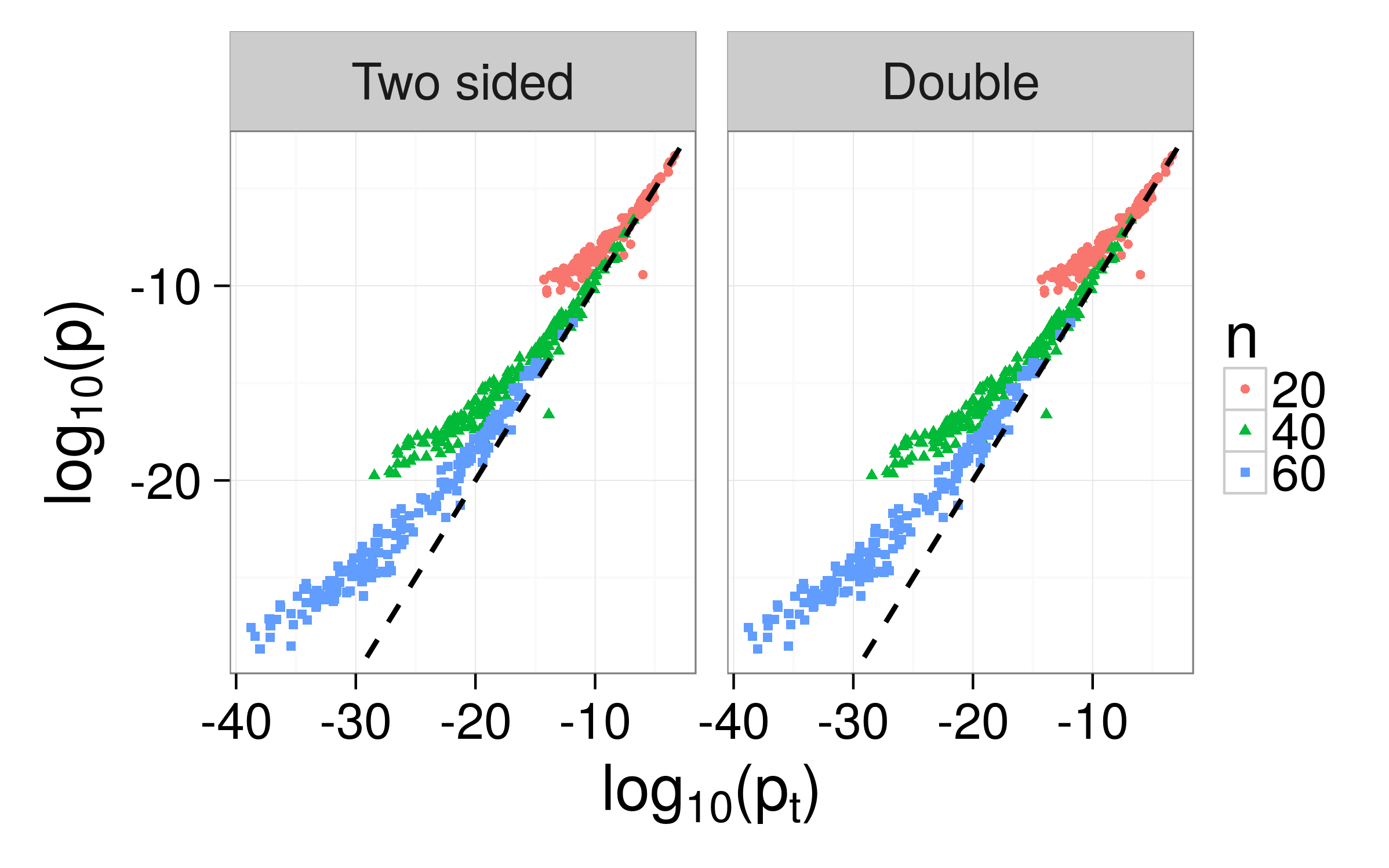}
  \caption{$n_x = n_y$}
  \end{subfigure}
  \begin{subfigure}{0.48\textwidth}
  \centering
  \includegraphics[width=1\linewidth]{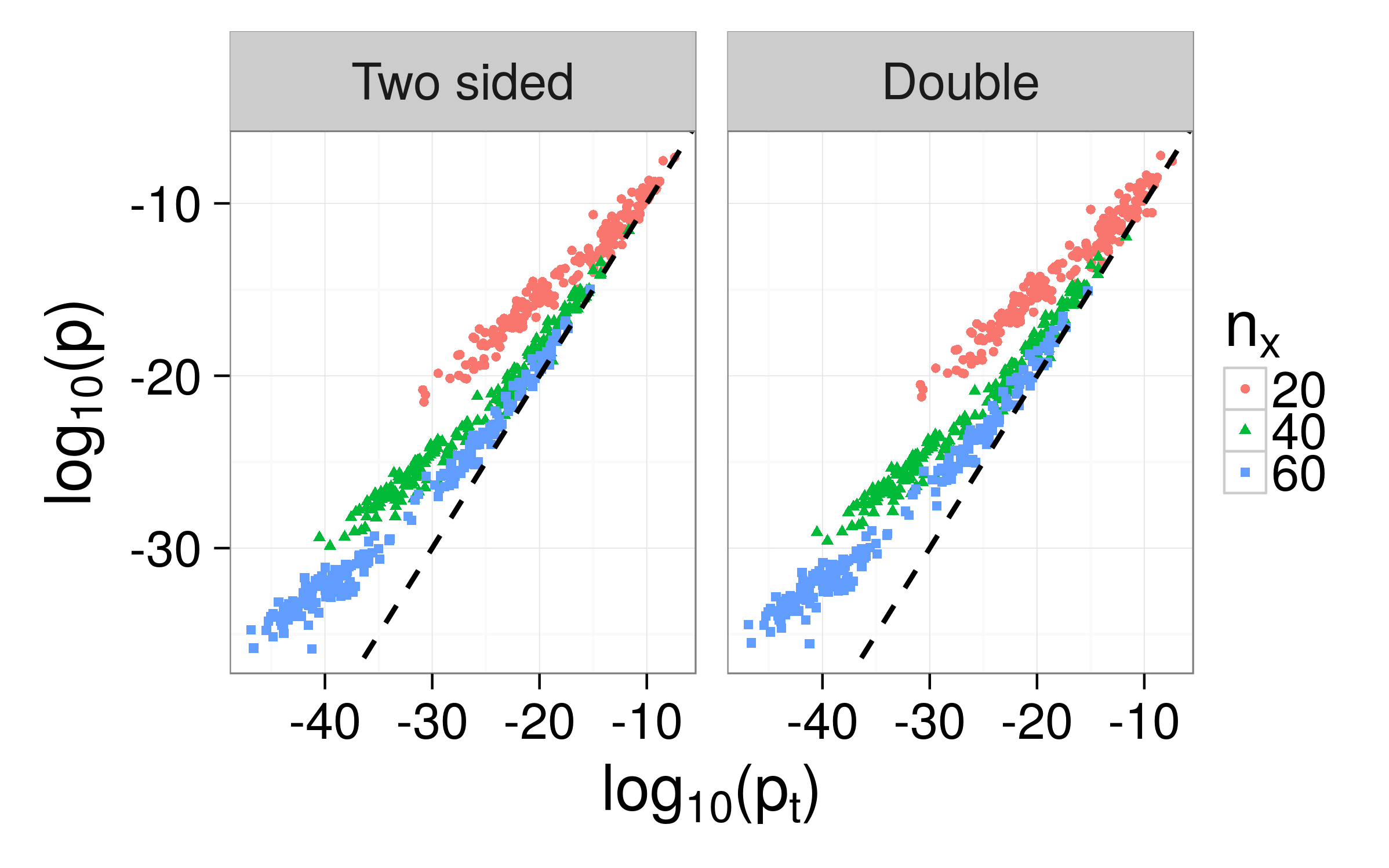}
  \caption{$n_x \ne n_y$}
  \end{subfigure}
\caption{MCC with small sample size for $T = |\bar{x} - \bar{y}|$ with normal data and equal sample sizes of $n = n_x = n_y = 20, 40, 60$, and unequal sample sizes of $n_x = 20, 40, 60$ with $n_y = 100$. In both cases, data were simulated as normal random variables with $\mu_y = 0$, $\mu_x = 2, 3$ and $\sigma_x^2 = \sigma_x^2 = 1$}
\label{mcc_smallN}
\end{figure}

\begin{figure}[htbp]
\centering
  \begin{subfigure}{0.48\textwidth}
  \centering
  \includegraphics[width=1\linewidth]{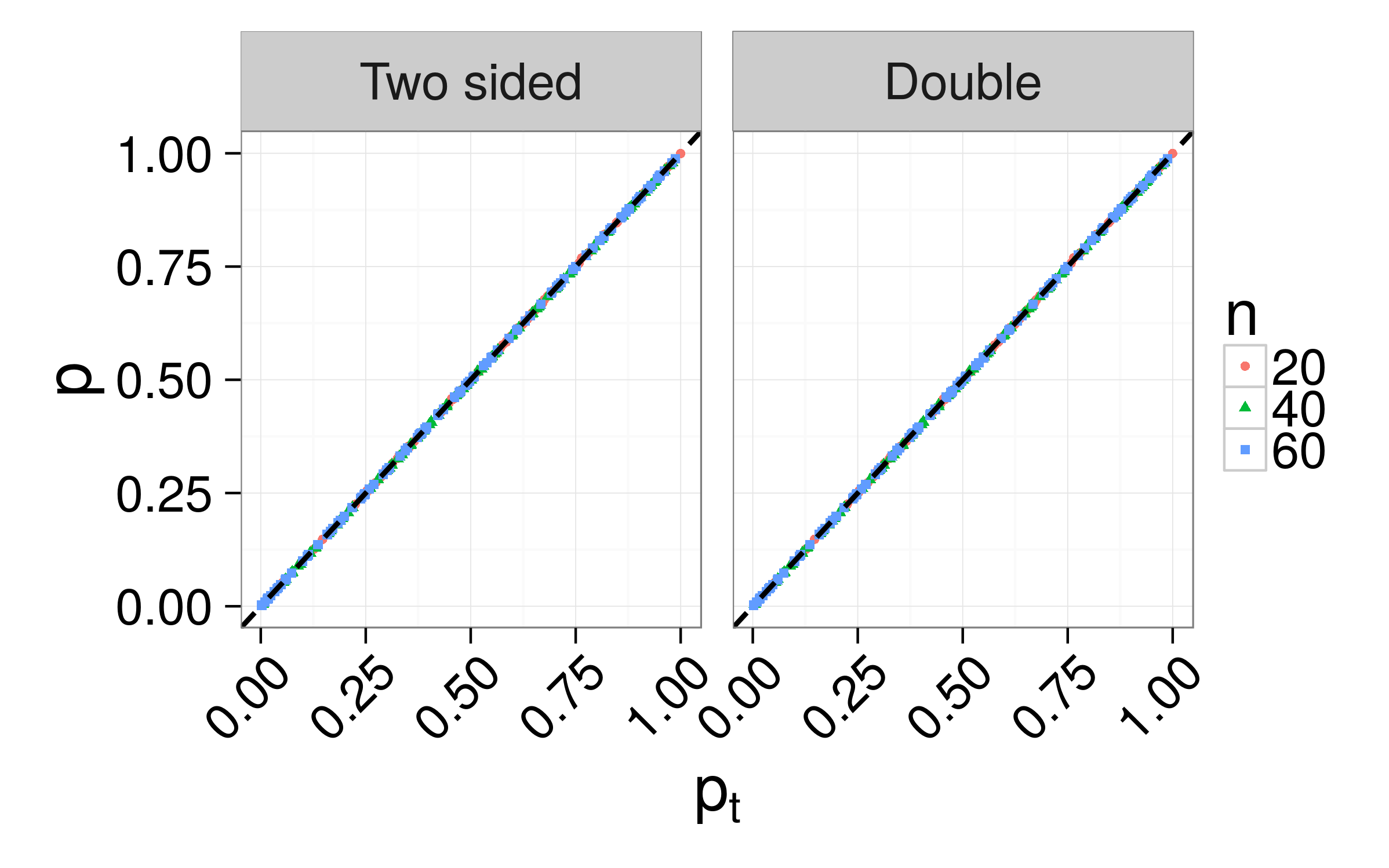}
  \caption{$n_x = n_y$}
  \end{subfigure}
  \begin{subfigure}{0.48\textwidth}
  \centering
  \includegraphics[width=1\linewidth]{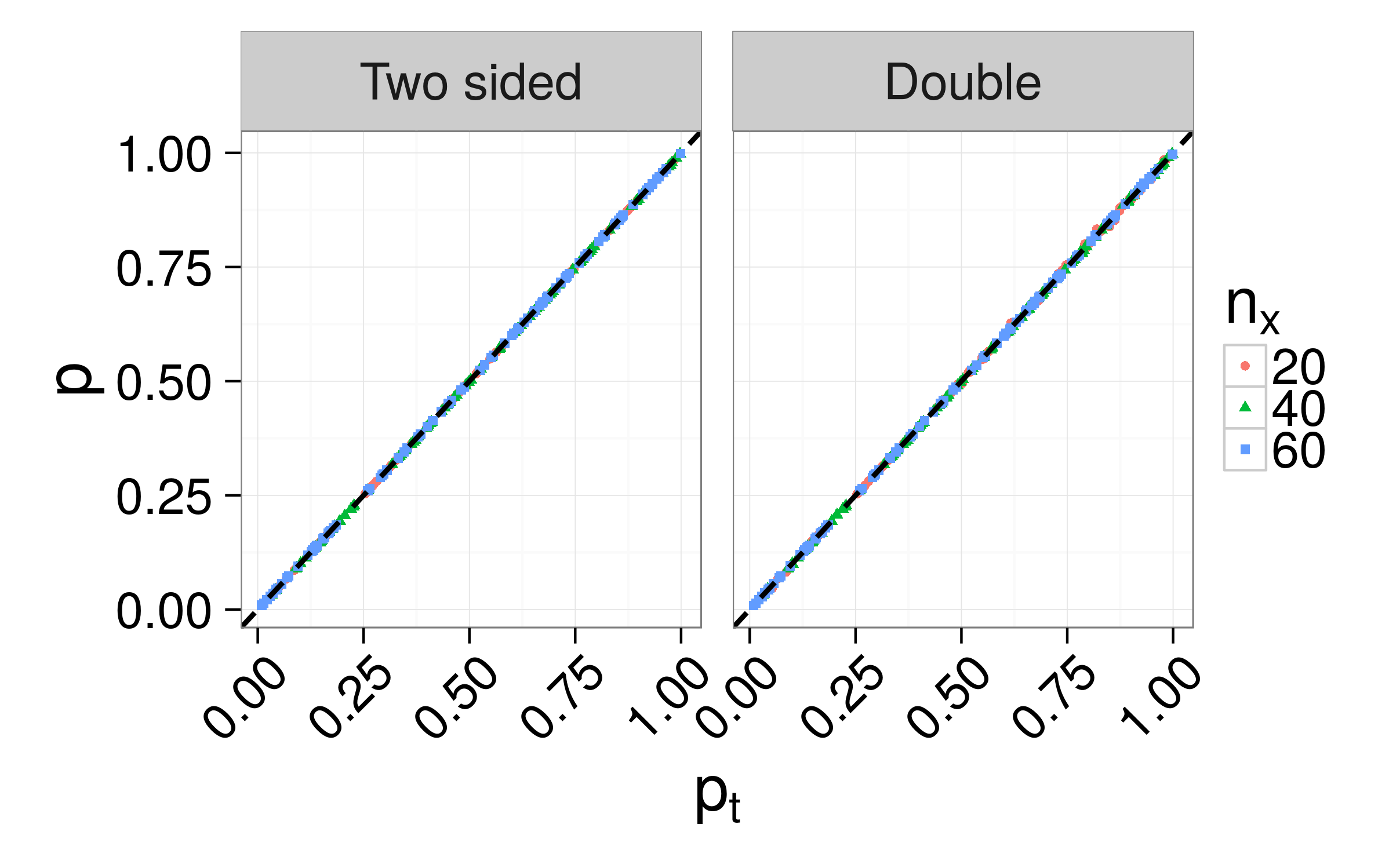}
  \caption{$n_x \ne n_y$}
  \end{subfigure}
\caption{MCC under the null hypothesis for $T = |\bar{x} - \bar{y}|$ with normal data for equal sample sizes of $n = n_x = n_y = 20, 40, 60$, and unequal sample sizes of $n_x = 20, 40, 60$ with $n_y = 100$. In both cases, data were simulated as normal random variables with $\mu_y = \mu_x = 0$ and $\sigma_x^2 = \sigma_x^2 = 1$}
\label{mcc_null}
\end{figure}

Figures \ref{mcc_gamma_smallN} and \ref{mcc_gamma_null} show simulation results for two-sided and doubled p-values for small sample sizes, and under the null, respectively, using the \verb|mcc| package \citep{mcc} under the same gamma data settings as in Section \ref{diffMean_gamma}. In Figure \ref{mcc_gamma_smallN}, we used $B=10^5$ iterations to obtain the Monte Carlo estimate $\tilde{p}$ of the true permutation p-value, and only show results for $\tilde{p} > 10^{-3}$ to ensure reliable estimates (1,019 values shown in Figure \ref{mcc_symGammaDiff_smallN}, and 705 values shown in Figure \ref{mcc_nonSymGammaDiff_smallN}).

\begin{figure}[htbp]
\centering
  \begin{subfigure}{0.48\textwidth}
  \centering
  \includegraphics[width=1\linewidth]{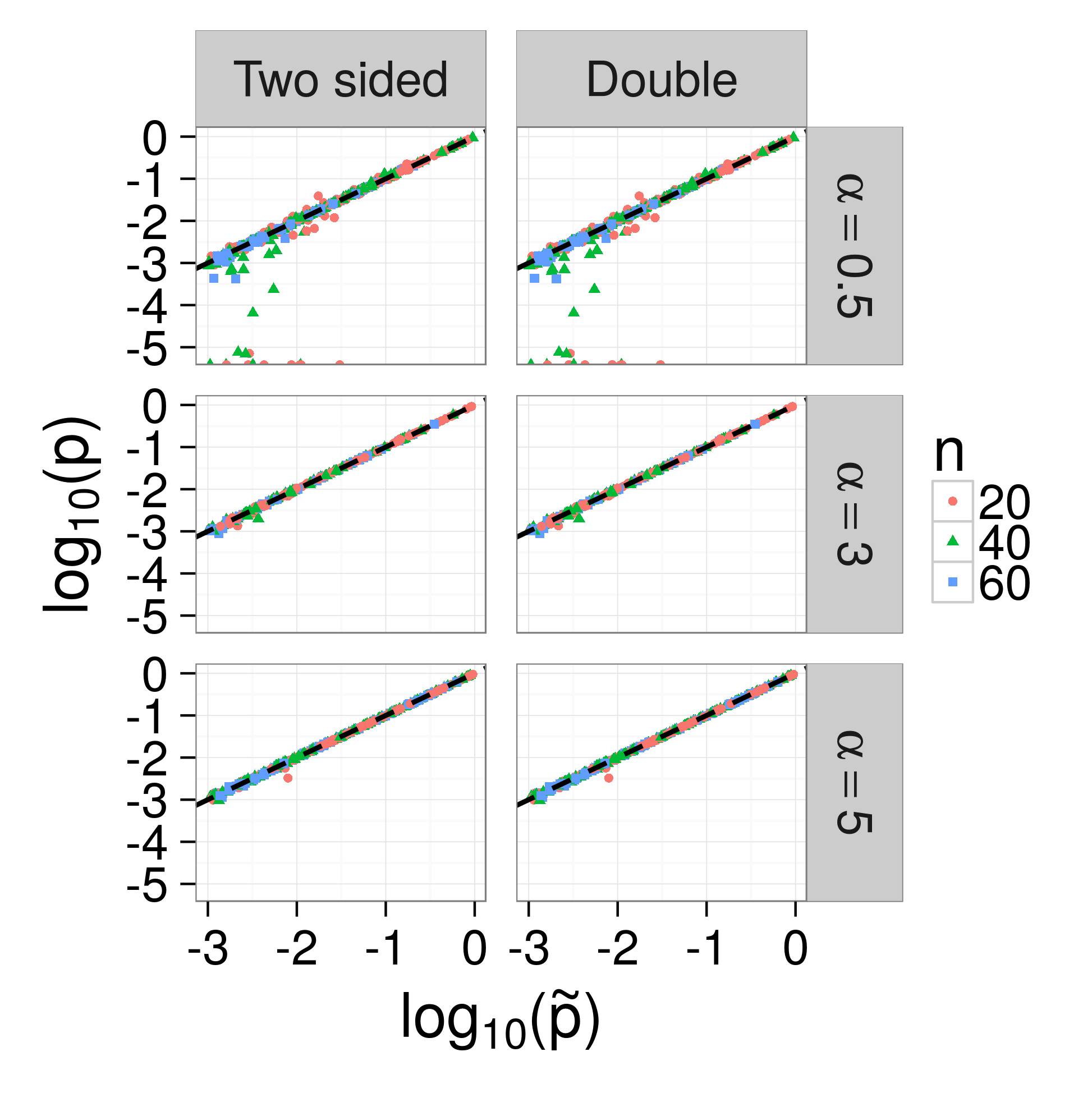}
  \caption{$n_x = n_y$}
  \label{mcc_symGammaDiff_smallN}
  \end{subfigure}
  \begin{subfigure}{0.48\textwidth}
  \centering
  \includegraphics[width=1\linewidth]{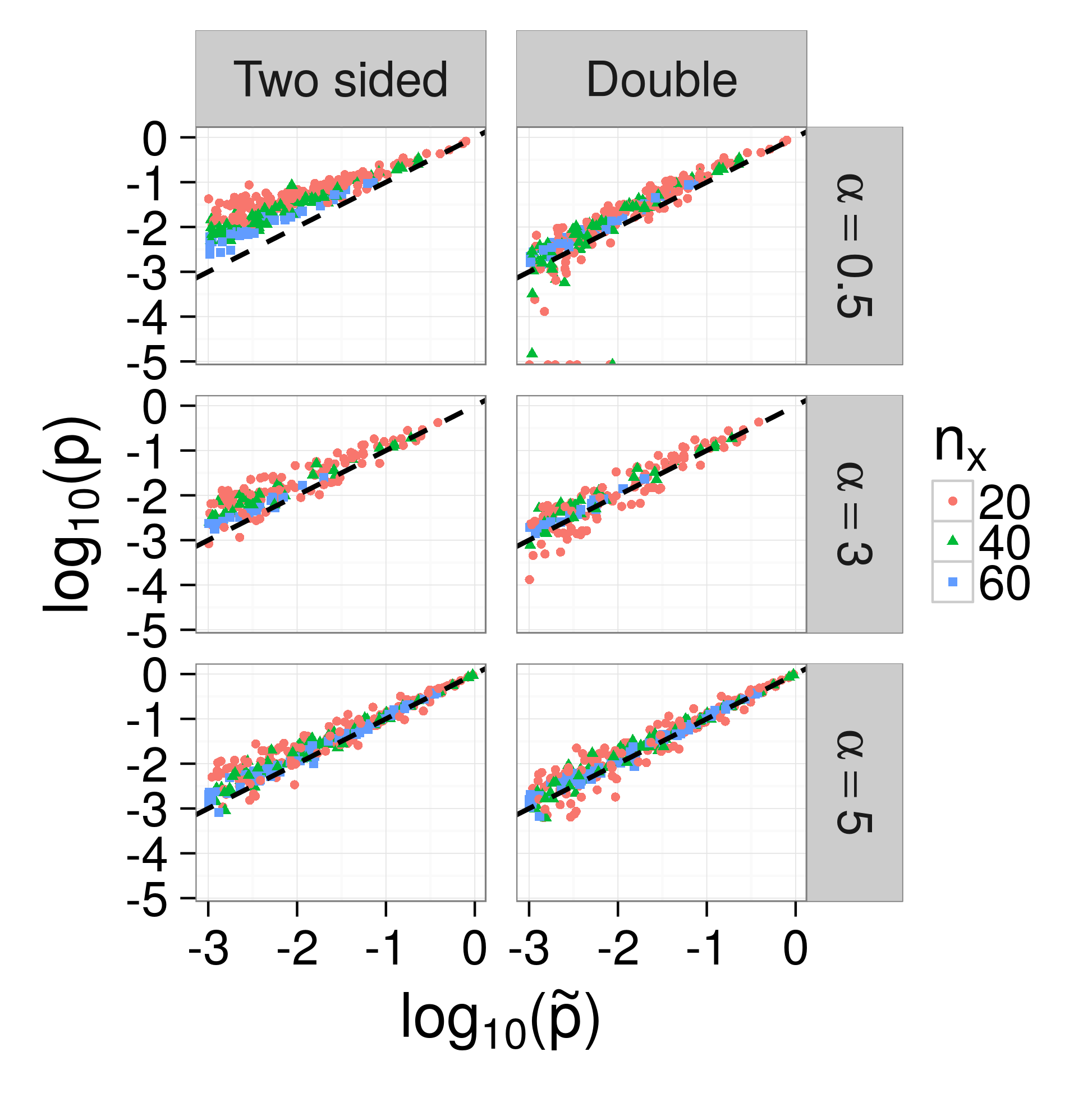}
  \caption{$n_x \ne n_y$}
  \label{mcc_nonSymGammaDiff_smallN}
  \end{subfigure}
\caption{MCC with small sample size for $T = |\bar{x} - \bar{y}|$ with gamma data and equal sample size $n = n_x = n_y = 20, 40, 60$, and unequal sample sizes of $n_x = 20, 40, 60$ with $n_y = 100$. In both cases, data were simulated as gamma random variables, as described in Section \ref{diffMean_gamma}}
\label{mcc_gamma_smallN}
\end{figure}

\begin{figure}[htbp]
\centering
  \begin{subfigure}{0.48\textwidth}
  \centering
  \includegraphics[width=1\linewidth]{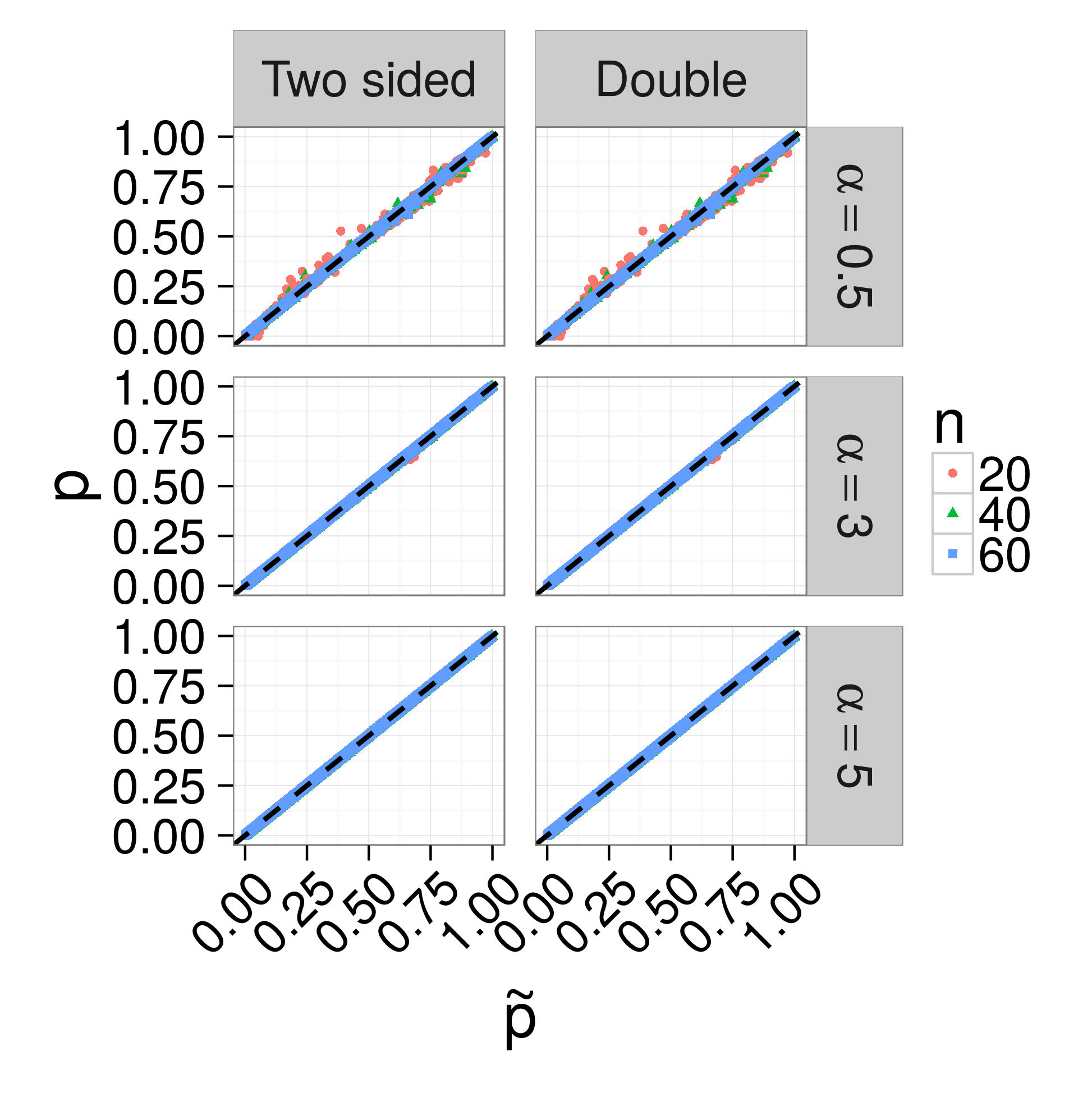}
  \caption{$n_x = n_y$}
  \end{subfigure}
  \begin{subfigure}{0.48\textwidth}
  \centering
  \includegraphics[width=1\linewidth]{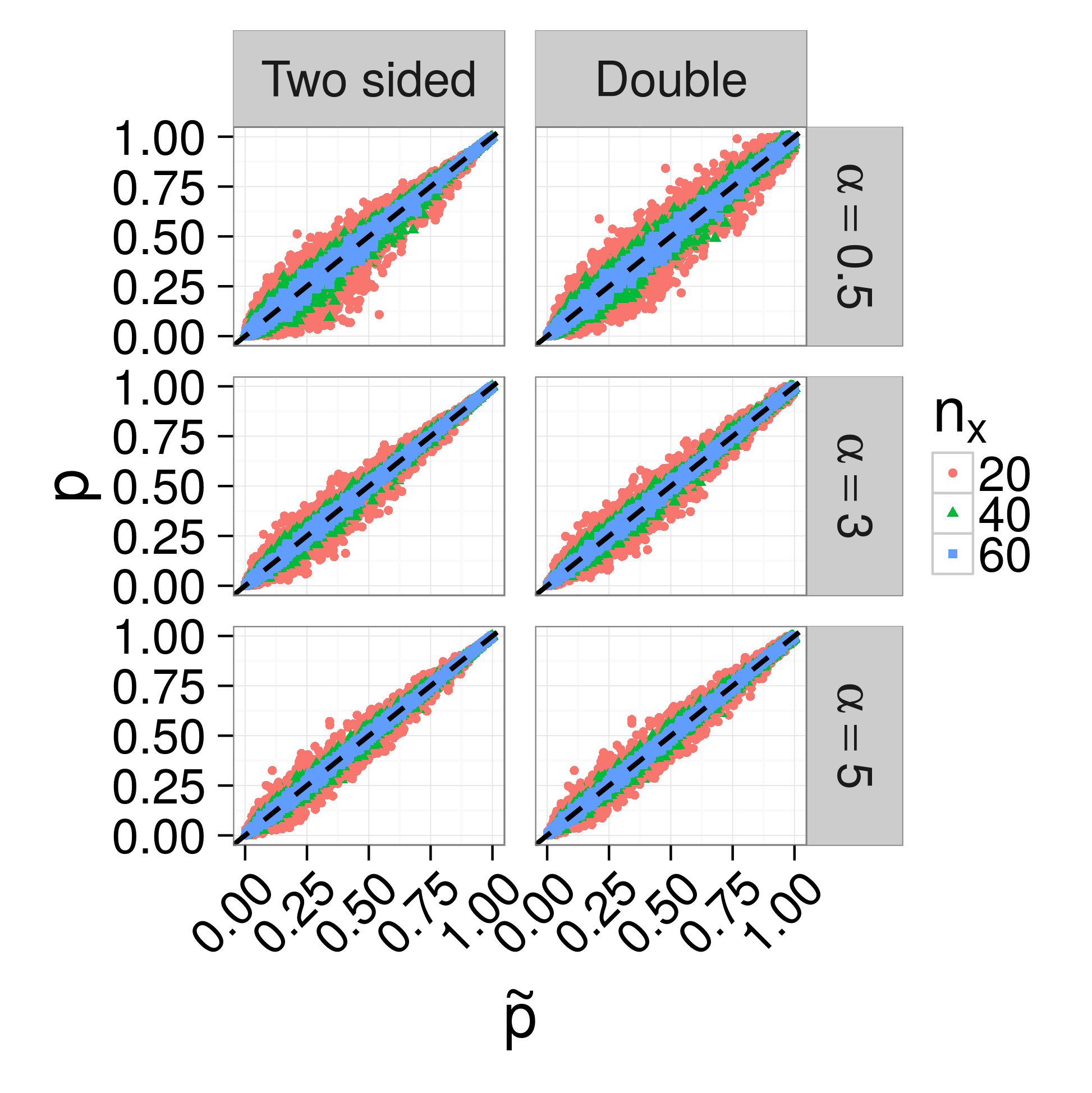}
  \caption{$n_x \ne n_y$}
  \end{subfigure}
\caption{MCC under the null hypothesis for $T = |\bar{x} - \bar{y}|$ with gamma data for equal sample sizes of $n = n_x = n_y = 20, 40, 60$, and unequal sample sizes of $n_x = 20, 40, 60$ with $n_y = 100$. In both cases, data were simulated as gamma random variables, as described in Section \ref{diffMean_gamma}}
\label{mcc_gamma_null}
\end{figure}

As seen in Figure \ref{mcc_gamma_smallN}, in many cases the MCC method substantially underestimated the permutation p-value for equal sample sizes $n_x = n_y$ and $\alpha = 0.5$. We did not observe this tendency with our resampling algorithm (see Figures \ref{simGammaDiff_sym_smallN} and \ref{simGammaDiff_nonSym_smallN}).

\subsection{Saddlepoint approximations}

Saddlepoint approximations can be used to estimate permutation p-values \citep{robinson1982}. As shown in Table \ref{saddlepoint}, estimates from our methods are comparable to those from saddlepoint approximations when using the statistic $T = |\bar{x} - \bar{y}|$. However, unlike saddlepoint approximations, our resampling algorithm requires no derivations.

\begin{table}[htbp]
\centering
\caption{Comparison with Saddlepoint approximations for $T = |\bar{x} - \bar{y}|$. Datasets are from \citet[][Table 2]{robinson1982}, who obtained them from \citet{lehman1975Nonparametric}. Dataset 1 pertains to hours of pain relief due to two different drugs ($n_x = n_y = 8$), and Dataset 2 pertains to the effect of an analgesia for two classes ($n_x = 7, n_y = 10$). The exact and saddlepoint p-values are from \citet{robinson1982}. The the p-value from our resampling algorithm ($\tilde{p}_{\text{pred}}$) is the mean from 100 runs; the first and third quantiles were (0.080, 0.088) for dataset 1, and (0.011, 0.012) for dataset 2.}
\begin{tabular}{ccc}
Method & Dataset 1 & Dataset 2 \\
\hline \hline 
Exact & 0.102 & 0.012 \\
First saddlepoint & 0.089 & 0.010 \\
Second saddlepoint & 0.101 & 0.011 \\
$\tilde{p}_{\text{pred}}$ & 0.083 & 0.012 \\
$\hat{p}_{\text{asym}}$ & 0.092 & 0.013
\end{tabular}
\label{saddlepoint}
\end{table}

\section{Simulations under null hypotheses for single parameters \label{E}}

Recent work, such as that by \citet{chung2013exact}, have extended permutation tests to be valid not only under the null $P_x= P_y$, but also under the more general null that $\theta(P_x) = \theta(P_y)$, where $\theta(P)$ is a single parameter. For example, for $X \sim N(\mu_x, \sigma_x^2), Y \sim N(\mu_y, \sigma_y^2)$, we might be interested in the alternative $H_1: \mu_x \ne \mu_y$, even if $\sigma_x^2 \ne \sigma_y^2$.

As described by \citet{chung2013exact}, in order to obtain a test procedure that is asymptotically valid in the above setting where $\sigma_x^2 \ne \sigma_y^2$, we need to replace $T = |\bar{x} - \bar{y}|$ with the studentized statistic
\begin{equation}
T = \frac{|\bar{x} - \bar{y}|}{\sqrt{s_x^2 / n_x + s_y^2 / n_y}}
\label{tstudent}
\end{equation}
where $s_x^2 = (n_x - 1)^{-1} \sum_i (x_i - \bar{x})^2$ and $s_y^2 = (n_y - 1)^{-1} \sum_j (y_j - \bar{y})^2$ are the sample variances. For each permutation, we compute the quantities $\bar{x}^*, \bar{y}^*, {s_x^*}^2$, and ${s_y^*}^2$ with the permuted datasets. In this section, we conduct simulations using (\ref{tstudent}) when $P_x \ne P_y$ under the null $H_0: \mu_x = \mu_y$ and alternative $H_1: \mu_x \ne \mu_y$.

We generated data $x_{i}, i=1,\ldots,n_x$ and $y_j, j=1,\ldots,n_y$ as realizations of the respective random variables $X_i\overset{\text{iid}}{\sim} N(0, \sigma_x^2)$ and $Y_{j} \overset{\text{iid}}{\sim} N(0, \sigma_y^2)$, where $\sigma_x^2 = 9$ and $\sigma_y^2 = 1$. For equal sample sizes, we set $n=n_x = n_y = 20, 40, 60$, and for unequal sample sizes we set $n_x = 20, 40, 60$ and $n_y = 100$. For both equal and unequal sample sizes, we simulated 1,000 datasets for each combination of parameters. Figures \ref{simDiff_sym_uneqVar_null} and \ref{simDiff_nonSym_uneqVar_null} show the results with equal and unequal sample sizes, respectively.

As seen in Figures \ref{simDiff_sym_uneqVar_null} and \ref{simDiff_nonSym_uneqVar_null}, the permutation test with the unstudentized statistic is relatively unaffected in our simulation under equal sample sizes, but is inaccurate for unequal sample sizes. By using a studentized statistic, our method is accurate even for unequal sample sizes. For comparison, Figures \ref{simDiff_sym_uneqVar_null} and \ref{simDiff_nonSym_uneqVar_null} also show the p-value from a t-test with unequal variance, as well as a Monte Carlo estimate using the unstudentized statistic $T = |\bar{x} - \bar{y}|$.

\begin{figure}[htbp]
\centering
\centering
\includegraphics[scale = 0.5]{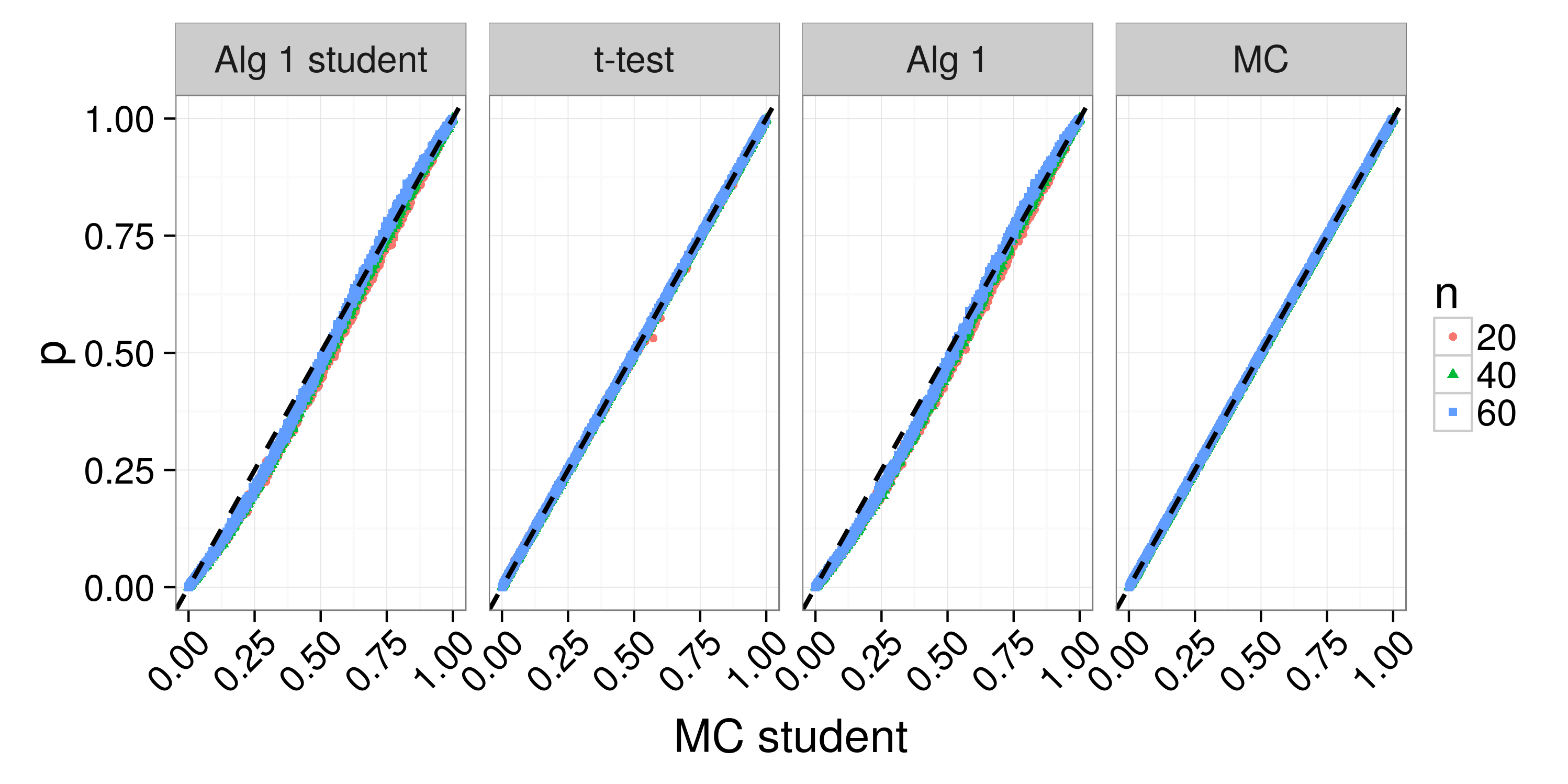}
\caption{Simulation results under the null $\mu_x = \mu_y$ (means $\mu_x = \mu_y = 0$) with normal data and unequal sample sizes of $n=n_x=n_y = 20, 40, 60$. \textit{Alg 1 Student} and \textit{Alg 1} are our resampling algorithm with the studentized (\ref{tstudent}) and unstudentized statistics, and with $B_{\text{pred}}=10^3$ iterations in each partition. \textit{t-test} is the p-value from a two-sided t-test with unequal variance. \textit{MC student} and \textit{MC} are Monte Carlo estimates with the studentized (\ref{tstudent}) and unstudentized statistics, and with $10^5$ iterations. The diagonal dashed line has slope of 1 and intercept of 0, and indicates agreement between methods.}
\label{simDiff_sym_uneqVar_null}
\end{figure}

\begin{figure}[htbp]
\centering
\centering
\includegraphics[scale = 0.5]{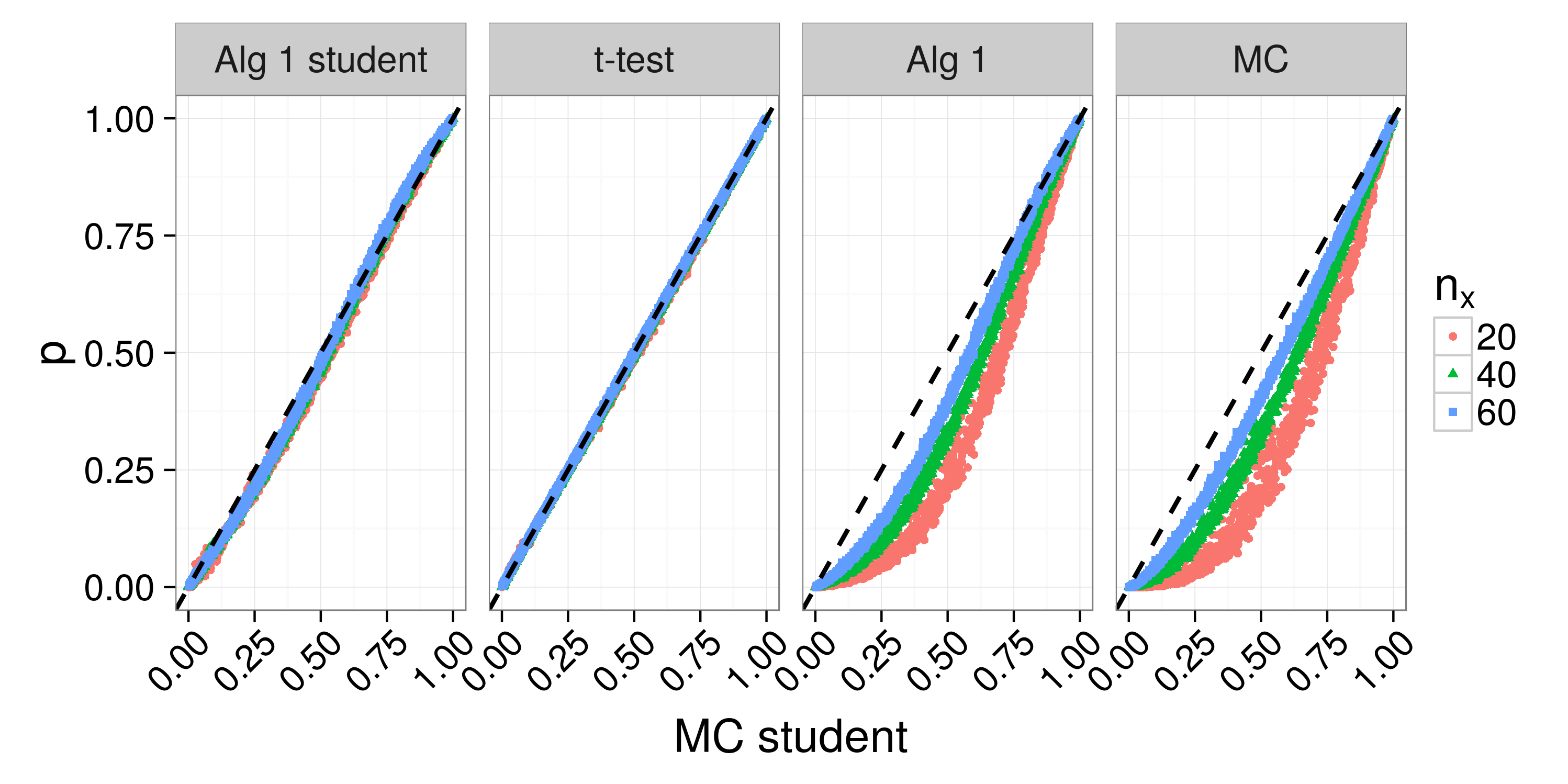}
\caption{Simulation results under the null $\mu_x = \mu_y$ (means $\mu_x = \mu_y = 0$) with normal data with unequal sample sizes of $n_x=20, 40, 60$ and $n_y = 100$. \textit{Alg 1 Student} and \textit{Alg 1} are our resampling algorithm with the studentized (\ref{tstudent}) and unstudentized statistics, and with $B_{\text{pred}}=10^3$ iterations in each partition. \textit{t-test} is the p-value from a two-sided t-test with unequal variance. \textit{MC student} and \textit{MC} are Monte Carlo estimates with the studentized (\ref{tstudent}) and unstudentized statistics, and with $10^5$ iterations. The diagonal dashed line has slope of 1 and intercept of 0, and indicates agreement between methods.}
\label{simDiff_nonSym_uneqVar_null}
\end{figure}

\section{Sufficient sample size \label{F}}

In this section, we provide guidance regarding the sample sizes necessary for our test to be reliable. Recall that $\msa \equiv \min_m \left\{ m \in \{1,\ldots, m_{\max} \} : \Phi^{-1}(1 - 1/B_\text{pred}) < \xi(m) \right\}$. $\msa$ is the expected number of data points available to the Poisson regression in our resampling algorithm for estimating the overall p-value. Large values of $\msa$ imply more reliable but slower estimates, and smaller values of $\msa$ imply less reliable but faster estimates. To ensure that the results of the sampling algorithm are reliable, we recommend that $\msa \ge c$ for some constant $c$. For example, we use $c=4$. Then for equal sample sizes $n = n_x = n_y$, we set
$$
\hat{n} = \min_n \{n \in \mathbb{N} : \msa \ge c \}.
$$

While not explicit in the above notation, we note that $\msa$, and thus $\hat{n}$, is a function of $\sigma_x^2, \sigma_y^2, \mu_x, \mu_y$, and $B_{\text{pred}}$. In Tables \ref{nHatRat} and \ref{nHatDiff}, we set $B_{\text{pred}} = 1,000$, and we also show $\hat{p}_{\text{asym}} = \hat{p}_{\text{asym}}(\hat{n}, \sigma_x^2, \sigma_y^2, \mu_x, \mu_y)$, the the p-value from our asymptotic approximation for the given set of parameter values and sample sizes. As in Figures 1 in Section 3 and Figure S1 in Appendix \ref{proofs}, to obtain $\hat{p}_{\text{asym}}$, we substituted parameter values for sample quantities, e.g. $\mu_x$ for $\bar{x}$ and $\sigma^2_x$ for $(n_x - 1)^{-1} \sum_{i=1}^{n_x} (x_i - \bar{x})^2$. As can be seen in Tables \ref{nHatRat} and \ref{nHatDiff}, $\hat{n}$ and $\hat{p}_{\text{asym}}$ have an inverse relationship.

In general, we recommend that researchers check the output from \verb|fastPerm| to ensure that $m_{\text{stop}} \ge 4$, and we note that the sample sizes required to achieve $m_{\text{stop}} \ge 4$ increase as the p-value decreases. Based on Tables \ref{nHatRat} and \ref{nHatDiff}, at least 15-20 observations in each group appears sufficient for p-values near $1 \times 10^{-6}$, and at least 70-90 observations in each group appears sufficient for p-values near $1 \times 10^{-30}$.

\begin{table}[htbp]
\centering
\caption{$\hat{n}$ for $T = \max(\bar{x} / \bar{y}, \bar{y} / \bar{x})$, equal samples sizes $n_x = n_y = \hat{n}$, $B_{\text{pred}} = 1,000$, and $c = 4$.}
\begin{tabular}{cccc}
\hline \hline
$\mu_y = \sigma^2_y$ & $\mu_x = \sigma^2_x$ & $\hat{n}$ & $\hat{p}_{\text{asym}}$ \\
\hline
\multirow{14}{*}{2} & 3 &   5 & $2.4 \times 10^{-1}$ \\
& 4 &   6 & $2.4 \times 10^{-2}$ \\
& 5 &  13 & $2.4 \times 10^{-5}$ \\
& 5.25 &  16 & $1.3 \times 10^{-6}$ \\
& 5.5 &  19 & $6.0 \times 10^{-8}$ \\
& 5.75 &  24 & $4.2 \times 10^{-10}$ \\
& 6 &  31 & $4.1 \times 10^{-13}$ \\
& 6.25 &  40 & $4.3 \times 10^{-17}$ \\
& 6.5 &  55 & $1.1 \times 10^{-23}$ \\
& 6.6 &  63 & $3.3 \times 10^{-27}$ \\
& 6.7 &  74 & $4.5 \times 10^{-32}$ \\
& 6.8 &  87 & $7.7 \times 10^{-38}$ \\
& 6.9 & 105 & $7.8 \times 10^{-46}$ \\
& 7 & 130 & $6.0 \times 10^{-57}$
\end{tabular}
\label{nHatRat}
\end{table}

\begin{table}[htbp]
\centering
\caption{$\hat{n}$ for $T = |\bar{x} - \bar{y}|$, $\sigma^2_x = \sigma^2_y = 1$, equal samples sizes $n_x = n_y = \hat{n}$, $B_{\text{pred}} = 1,000$, and $c = 4$.}
\begin{tabular}{cccc}
\hline \hline
$\mu_y$ & $\mu_x$ & $\hat{n}$ & $\hat{p}_{\text{asym}}$ \\
\hline
\multirow{11}{*}{0} & 1.5 &   5 & $5.4 \times 10^{-2}$ \\
 & 2 &   9 & $7.7 \times 10^{-4}$ \\
 & 2.2 &  13 & $2.1 \times 10^{-5}$ \\
 & 2.25 &  15 & $3.7 \times 10^{-6}$ \\
 & 2.3 &  18 & $3.1 \times 10^{-7}$ \\
 & 2.4 &  32 & $4.0 \times 10^{-12}$ \\
 & 2.45 &  53 & $2.3 \times 10^{-19}$ \\
 & 2.475 &  80 & $1.3 \times 10^{-28}$ \\
 & 2.48 &  89 & $1.1 \times 10^{-31}$ \\
 & 2.49 & 115 & $1.5 \times 10^{-40}$ \\
 & 2.5 & 165 & $1.4 \times 10^{-57}$
\end{tabular}
\label{nHatDiff}
\end{table}

\section{p-value for ratio of means via the delta method, and application to cancer genomic data \label{G}}

Let $\bar{x}$ and $\bar{y}$ be the sample means, and $s_x^2 = (n_x - 1)^{-1} \sum_i (x_i - \bar{x})^2$ and $s_y^2 = (n_y - 1)^{-1} \sum_i (y_i - \bar{y})^2$ be the sample estimates of variance. By the central limit theorem, for $n_x, n_y$ sufficiently large, and assuming independence between samples,
$$
\begin{pmatrix}
\bar{x} \\
\bar{y}
\end{pmatrix} \sim
N\left(
\begin{bmatrix}
\mu_x \\ 
\mu_y
\end{bmatrix},
\begin{bmatrix}
\sigma_x^2 / n_x & 0 \\
0 & \sigma_y^2 / n_y
\end{bmatrix}
\right).
$$
Let $g(\bar{x}, \bar{y}) = (\bar{x} / \bar{y})$. Then $\nabla g = (1 / \bar{y}, -\bar{x} / \bar{y}^2)'$, and by the delta method $\bar{x} / \bar{y} \rightarrow N(\theta, \tau_1^2)$, where $\theta = g(\mu_x, \mu_y) = \mu_x / \mu_y$ and 
$$
\tau_1^2 = \nabla g^T(\mu_x, \mu_y) \begin{bmatrix}
\sigma_x^2 / n_x & 0 \\
0 & \sigma_y^2 / n_y
\end{bmatrix} \nabla g(\mu_x, \mu_y) = \frac{\sigma_x^2}{n_x} \frac{1}{\mu_y^2} + \frac{\sigma_y^2}{n_y} \frac{\mu_x^2}{\mu_y^4}.
$$
Using unbiased estimates for the variance, we get
$$
\hat{\tau_1}^2 = \frac{s_x^2}{n_x} \frac{1}{\bar{y}^2} + \frac{s_y^2}{n_y} \frac{\bar{x}^2}{\bar{y}^4}.
$$
Similarly, we estimate the variance of $\bar{y} / \bar{x}$ as
$$
\hat{\tau_2}^2 = \frac{s_y^2}{n_y} \frac{1}{\bar{x}^2} + \frac{s_x^2}{n_x} \frac{\bar{y}^2}{\bar{x}^4}.
$$
Therefore, to test the null $H_0: \theta = 1$ versus the alternative $H_1: \theta \ne 1$, the two-sided p-value using the delta method and unbiased estimates of variance is
\begin{align*}
p_\Delta = 
\begin{cases}
\Pr(Z > \bar{x} / \bar{y}) + \Pr(U \le \bar{y}  / \bar{x}), \quad \bar{x} / \bar{y} \ge 1 \\
\Pr(U > \bar{y} / \bar{x}) + \Pr(Z \le \bar{x}  / \bar{y}), \quad \bar{x} / \bar{y} < 1
\end{cases},
\end{align*}
where $Z \sim N(1, \hat{\tau_1}^2)$ and $U \sim N(1, \hat{\tau_2}^2)$. We use the $\Delta$ subscript in $p_\Delta$ to emphasize that the p-value is from the delta method. We note that $p_\Delta$ is potentially problematic, particularly if $\hat{\tau_1}^2$ or $\hat{\tau_2}^2$ are large, because the ratio is bounded below by zero, but the normal distribution is not.

Figure \ref{pDelta} compares estimates of the permutation p-values from our resampling algorithm ($\tilde{p}_{\text{pred}}$) to $p_\Delta$ for the cancer genomic data in Section 6. The dashed lines have an intercept of zero and slope of one, and indicate agreement. As seen in Figure \ref{pDelta}, $p_\Delta$ tends to be an overestimate for small p-values, which is the same trend observed in the simulations. None of the 15 genes with the smallest $p_\Delta$ were identified by \citet{zhan2015} as strongly distinguishing between LUAD and LUSC. Out of the 100 genes with the smallest $p_\Delta$, three were identified by \citet{zhan2015} as strongly distinguishing between LUAD and LUSC (\textit{ATP1B3}, \textit{PVRL1}, and \textit{PERP}).
\begin{figure}[htbp]
\centering
\begin{subfigure}{0.48\textwidth}
  \includegraphics[width = \linewidth]{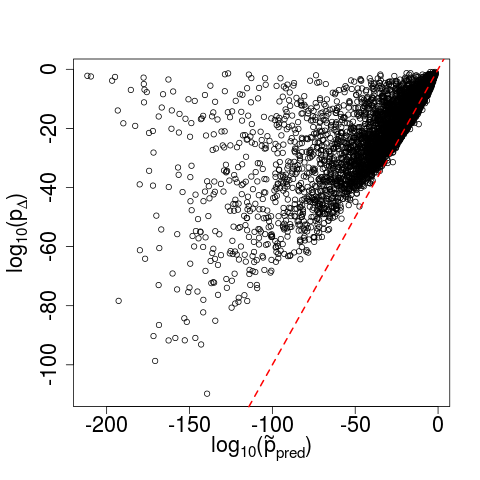}
  \caption{Genes with $\tilde{p} \le 1 \times 10^{-3}$ ($10,302$ genes)}
  \label{pDelta_fp}
\end{subfigure}
\begin{subfigure}{0.48\textwidth}
  \includegraphics[width = \linewidth]{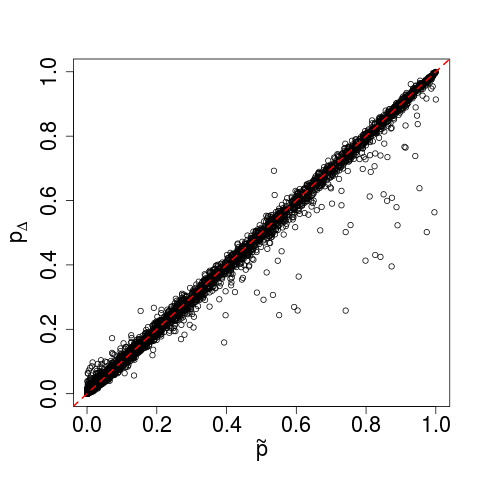}
  \caption{Genes with $\tilde{p} > 1 \times 10^{-3}$ ($5,084$ genes)}
  \label{pDelta_MC}
\end{subfigure}
  \caption{p-values for cancer genomic data: Comparison of results with the delta method ($p_\Delta$) and our resampling algorithm ($\tilde{p}_\text{pred}$) with $B_{\text{pred}} = 1,000$ iterations within each partition, or with simple Monte Carlo ($\tilde{p}$) with a total of $B=1,000$ iterations (see Section 6). the dashed lines have intercept of zero and slope of one, and indicate agreement between the methods.}
\label{pDelta}
\end{figure}

\end{appendices}
\end{document}